\newtheorem{theorem}{Theorem}
\newtheorem{corollary}{Corollary}
\newtheorem{definition}{Definition}
\newtheorem{remark}{Remark}
\newcommand{\eg}{\textit{e.g.}}
\newcommand{\ie}{\textit{i.e.}}
\newcommand*\chancery{\fontfamily{pzc}\selectfont}
\newcommand{\gv}[1]{\textcolor{magenta}{#1}}
\newcommand{\sg}[1]{\textcolor{orange}{#1}}
\begin{document}

%%
%% The "title" command has an optional parameter,
%% allowing the author to define a "short title" to be used in page headers.
%\title{Performance of on-demand scheduling for a quantum network hub}
\title{An on-demand resource allocation algorithm for a quantum network hub and its performance analysis}
\renewcommand{\shorttitle}{An on-demand resource allocation algorithm for a quantum network hub and its performance analysis}

%%
%% The "author" command and its associated commands are used to define
%% the authors and their affiliations.
%% Of note is the shared affiliation of the first two authors, and the
%% "authornote" and "authornotemark" commands
%% used to denote shared contribution to the research.

\author{Scarlett Gauthier}
% \authornote{Both authors contributed equally to this research.}
% \if{false}
% \authornote{Both authors contributed equally to this research.}
\email{s.s.gauthier@tudelft.nl}
% \orcid{1234-5678-9012}
% \authornotemark[1]
% \email{webmaster@marysville-ohio.com}
\affiliation{%
  \institution{EEMCS and QuTech, Delft University of Technology}
  % \institution{Quantum Computer Science, Electrical Engineering, Mathematics and Computer Science, Delft University of Technology}
  % \streetaddress{Mekelweg 5}
  \city{Delft}
  % \state{Ohio}
  \country{The Netherlands}
  % \postcode{2628 CD}
}

\author{Thirupathaiah Vasantam}
% \authornotemark[2]
\email{thirupathaiah.vasantam@durham.ac.uk}
\affiliation{%
  \institution{Department of Computer Science, Durham University}
  % \streetaddress{The Palatine Centre, Stockton Rd, Durham, United Kingdom}
  \city{Durham}
  \country{United Kingdom}}
  % \postcode{DH1 3LE}

\author{Gayane Vardoyan}
% \authornotemark[1]
% \authornotemark[3]
\email{g.s.vardoyan@tudelft.nl}
\affiliation{%
    \institution{EEMCS and QuTech, Delft University of Technology}
    \city{Delft}
    \country{The Netherlands}
}
\additionalaffiliation{%
  \institution{University of Massachusetts, Amherst}
  \city{Amherst}
  \country{United States}
}
% \fi

%%
%% By default, the full list of authors will be used in the page
%% headers. Often, this list is too long, and will overlap
%% other information printed in the page headers. This command allows
%% the author to define a more concise list
%% of authors' names for this purpose.
\renewcommand{\shortauthors}{Scarlett Gauthier, Thirupathaiah Vasantam, Gayane Vardoyan}

%%
%% The abstract is a short summary of the work to be presented in the
%% article.
\begin{abstract}
 %A quantum network must allocate shared resources effectively to support many sets of user-controlled quantum nodes executing quantum network applications.    
 To effectively support the execution of quantum network applications for multiple sets of user-controlled quantum nodes, a quantum network must efficiently allocate shared resources.
 We study traffic models for a type of quantum network hub called an Entanglement Generation Switch (EGS), a device that allocates resources to enable entanglement generation between nodes in response to user-generated demand. We propose an on-demand resource allocation algorithm, where a demand is either blocked if no resources are available or else results in immediate resource allocation. We model the EGS as an Erlang loss system, with demands corresponding to sessions whose arrival is modelled as a Poisson process. To reflect the operation of a practical quantum switch, our model captures scenarios where a resource is allocated for batches of entanglement generation attempts, possibly interleaved with calibration periods for the quantum network nodes. 
 Calibration periods are necessary to correct against drifts or jumps in the physical parameters of a quantum node that occur on a timescale that is long compared to the duration of an attempt. We then derive a formula for the demand blocking probability under three different traffic scenarios using analytical methods from applied probability and queueing theory.  We prove an insensitivity theorem which guarantees that the probability a demand is blocked only depends upon the mean duration of each entanglement generation attempt and calibration period, and is not sensitive to the underlying distributions of attempt and calibration period duration. We provide numerical results to support our analysis. 
 Our numerical results suggest that in homogeneous traffic scenarios, enhancing the communication multiplexing ability of a limited quantum node by increasing the number of communication qubits from one to two has a large impact on the blocking probability. However, further increases in the number of communication qubits have a limited effect. Our work is the first analysis of traffic characteristics at an EGS system and provides a valuable analytic tool for devising performance driven resource allocation algorithms.
  
\end{abstract}

%%
%% The code below is generated by the tool at http://dl.acm.org/ccs.cfm.
%% Please copy and paste the code instead of the example below.
%%
% \begin{CCSXML}
% <ccs2012>
%    <concept>
%        <concept_id>10003033.10003079.10011672</concept_id>
%        <concept_desc>Networks~Network performance analysis</concept_desc>
%        <concept_significance>500</concept_significance>
%        </concept>
%    <concept>
%        <concept_id>10010147.10010341</concept_id>
%        <concept_desc>Computing methodologies~Modeling and simulation</concept_desc>
%        <concept_significance>500</concept_significance>
%        </concept>
%    <concept>
%        <concept_id>10010583.10010786.10010813</concept_id>
%        <concept_desc>Hardware~Quantum technologies</concept_desc>
%        <concept_significance>500</concept_significance>
%        </concept>
%  </ccs2012>
% \end{CCSXML}

% \ccsdesc[500]{Networks~Network performance analysis}
% \ccsdesc[500]{Computing methodologies~Modeling and simulation}
% \ccsdesc[500]{Hardware~Quantum technologies}
%%
%% Keywords. The author(s) should pick words that accurately describe
%% the work being presented. Separate the keywords with commas.
% \keywords{quantum networking, entanglement, queueing theory}

% \received{20 February 2007}
% \received[revised]{12 March 2009}
% \received[accepted]{5 June 2009}

%%
%% This command processes the author and affiliation and title
%% information and builds the first part of the formatted document.
\maketitle

\section{Introduction}
\begin{figure}[t]
    \centering
    \begin{minipage}[c]{0.44\textwidth}
    \includegraphics[width=\textwidth]
    {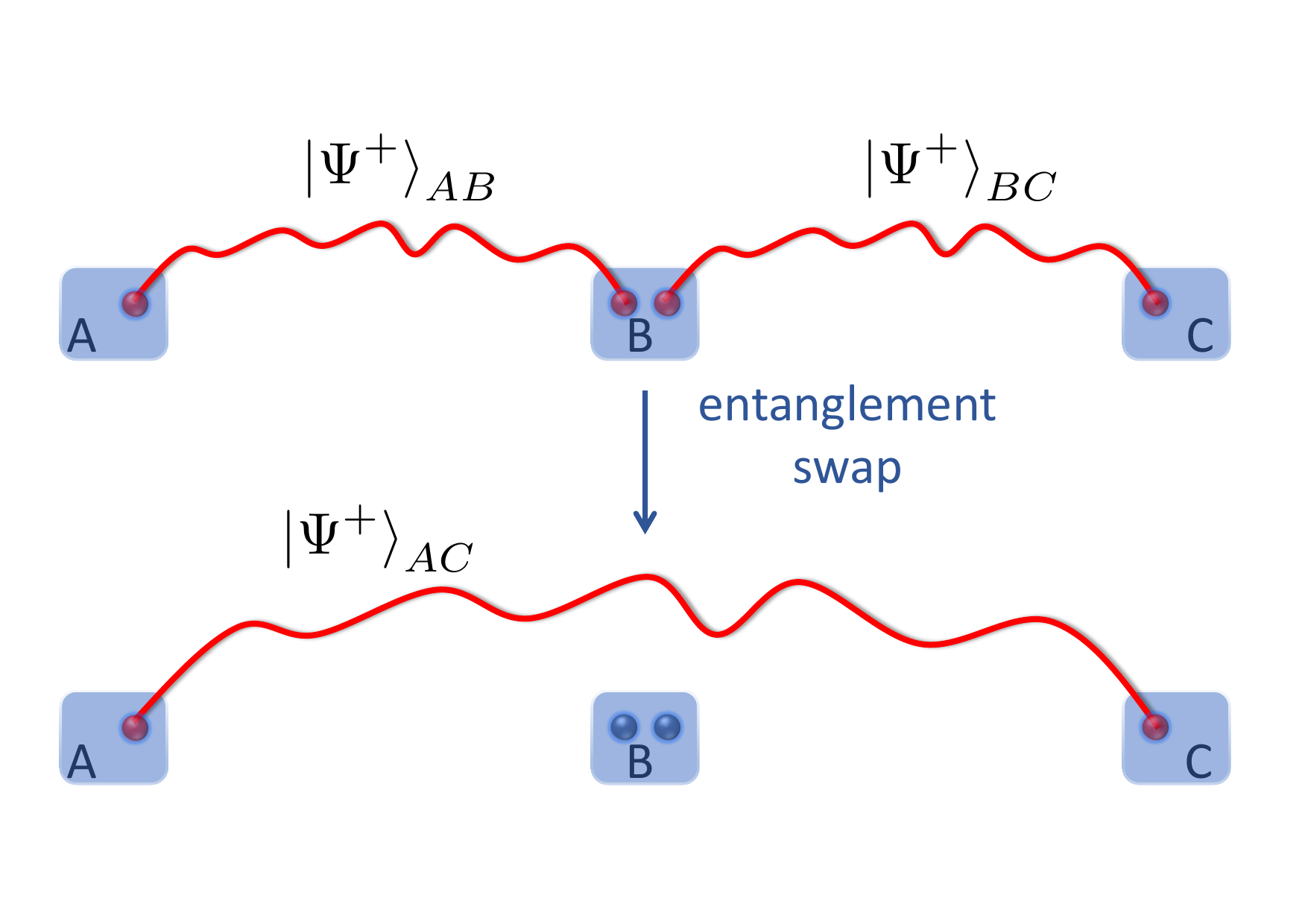}
    \end{minipage}
    \begin{minipage}[c]{0.55\textwidth}
    \caption{A simple quantum network with end nodes $A$ and $C$ wishing to share entanglement, and an intermediate node $B$ assisting them with the task. Initially, two entangled links -- $\ket{\Psi^+}_{AB}$ between $A$ and $B$ and $\ket{\Psi^+}_{BC}$ between $B$ and $C$ -- are established. $B$ then performs a swapping operation to directly entangle $A$ and $C$'s qubits. Depending on the distance between $A$ and $C$, direct generation of entanglement (without an intermediate node), may not be feasible.}
    \label{fig:swap}
    \end{minipage}
    \vspace{-10mm}
\end{figure}
%\gv{
%\begin{itemize}
    %\item General discussion of quantum networking -- this is mostly a classical audience. 
    %\item Introduce the EGS switch at a high level.
    %\item Why study the EGS switch?
    %\item What makes this switch study different from others? Focus on the physical relevance of our model, both in the hardware design we study, as well as the theoretical session-based analysis.
    %\item Quantum switching background and previous analytical studies -- after some thought, I feel it would be better to put this in the background section before or after the Erlang discussion
    %\item Bulletpoint contributions, etc.
    %\item Be sure to specify what makes our work different: first work looking at specific traffic patterns and service modes for an all-optical switch. Our model can also be applied to a variety of hardware settings (with/without memories at nodes, etc.) -- useful for NISQ.
%\end{itemize}}
Quantum networks enable a variety of distributed applications that are not realizable via classical means alone. Among these are quantum key distribution (QKD) \cite{bb84, e91}, blind quantum computation (BQC) \cite{bqc1, bqc2, vbqc}, and several entanglement-based quantum sensing techniques \cite{qVLBI, clockSynchPaper,giovannetti2001quantum,guo2020distributed}. A quantum network consists of end nodes equipped with quantum hardware, as well as intermediate nodes -- quantum repeaters or switches -- whose main function is to enable the end nodes to carry out quantum communication tasks. In first- and second-generation quantum networks \cite{munro2015inside}, these intermediate nodes use methods such as entanglement swapping as shown in Figure~\ref{fig:swap} to provide entanglement as a resource to user-run applications -- either to be consumed directly, or used in teleportation-based transport of quantum information \cite{bennett1993teleporting,bouwmeester1997experimental}. When multiple applications have simultaneous demand for shared and limited resources (e.g., quantum memories, links, measurement modules), contention may arise, and the network must enact a resource allocation scheme. 

\begin{figure}[t]
    \centering
     \begin{minipage}[c]{0.44\textwidth}
    \includegraphics[width=\textwidth]{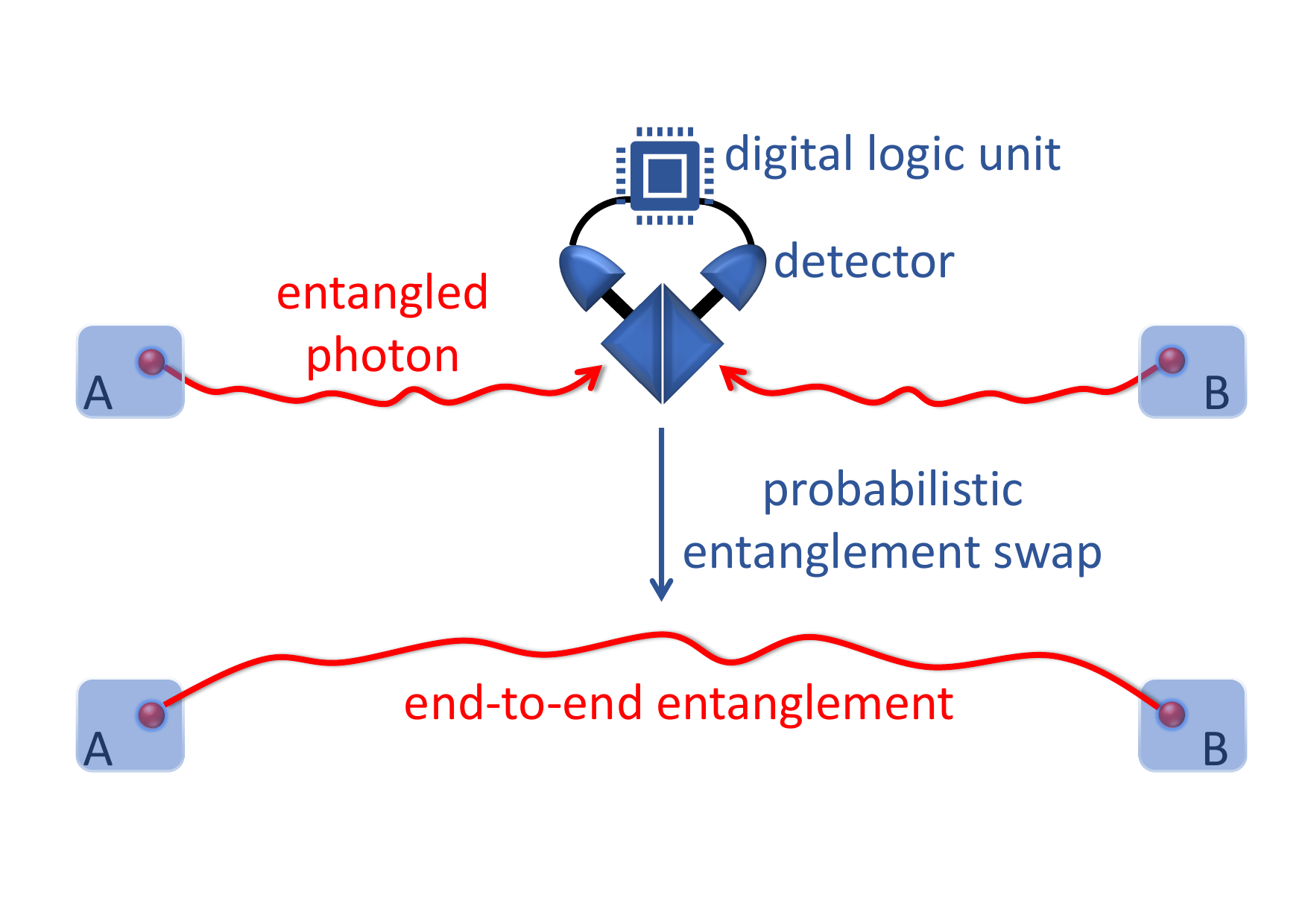}
    \end{minipage}
    \begin{minipage}[c]{0.55\textwidth}
    \caption{Entanglement generation with a Bell state analyzer. Entangled photons travel towards the BSA, which carries out a probabilistic entanglement swap. At the BSA, photons pass through a beam splitter whose output ports are connected to a pair of detectors. The digital logic unit reads out measurement results and determines if a swap was successful. Results are classically communicated to nodes $A$ and $B$, whose qubits become entangled upon a successful event.}
    \label{fig:optical_swap_BSA}
    \end{minipage}
    \vspace{-8mm}
\end{figure}
In this work, we study a type of quantum network hub previously referred to as an Entanglement Generation Switch (EGS) \cite{RCP_EGS}. An EGS is a type of quantum switch with control over a pool of resources which, when allocated to a set of nodes, enable a task such as entanglement generation. Unlike its memory-equipped counterpart (sometimes referred to as an entanglement distribution switch, or EDS), the EGS is relatively easy to fabricate since it has no memories: it possesses only a number of resources such as Bell state analyzers (BSAs) \cite{braunstein1995measurement,michler1996interferometric,walther2005experimental}, which serve as a means of performing probabilistic optical entanglement swapping on incoming photons (each entangled with a qubit at an end node), and upon a successful swap generating end-to-end entanglement. This method of generating entanglement is illustrated in Figure~\ref{fig:optical_swap_BSA}, and the process is explained in more detail in Section~\ref{sec:system_descr}. In principle, an EGS can serve any number of nodes with a single shared BSA, but more BSAs can ameliorate contention for this resource; see Figure~\ref{fig:star_topology} for an example of an EGS with three BSAs to service four nodes. Some EDS proposals on the other hand additionally place BSAs in the middle of each physical link that connects the device to other nodes in the network.
%(\textcolor{red}{I think we are talking about EDS not EGS here. This whole paragraph is not clear enough somehow as EDS and EGS statements are mixed. We need to first add a sentence that one BSA is needed to create switch-node entanglements across each link.' Explain why one extra BSA is placed at the centre in figure 2 eventhough there are four users. Also elaborate more on functionality/use of BSAs.}) 
While these BSAs assist with entanglement generation at the link level, the resulting architecture is fairly demanding in the number and type of hardware components: $K$ links translate into $K$ dedicated BSAs and at least an equal number of quantum memories at the EDS. In contrast, the EGS architecture places all BSAs at the central hub along with a switching fabric for reconfiguration so as to serve any set of end nodes, resource limitations permitting. 

\begin{figure}[t!]
    \centering
    \begin{minipage}{0.48\textwidth}
    \centering
    \includegraphics[width=\textwidth,trim={0 0 3cm 0},clip]{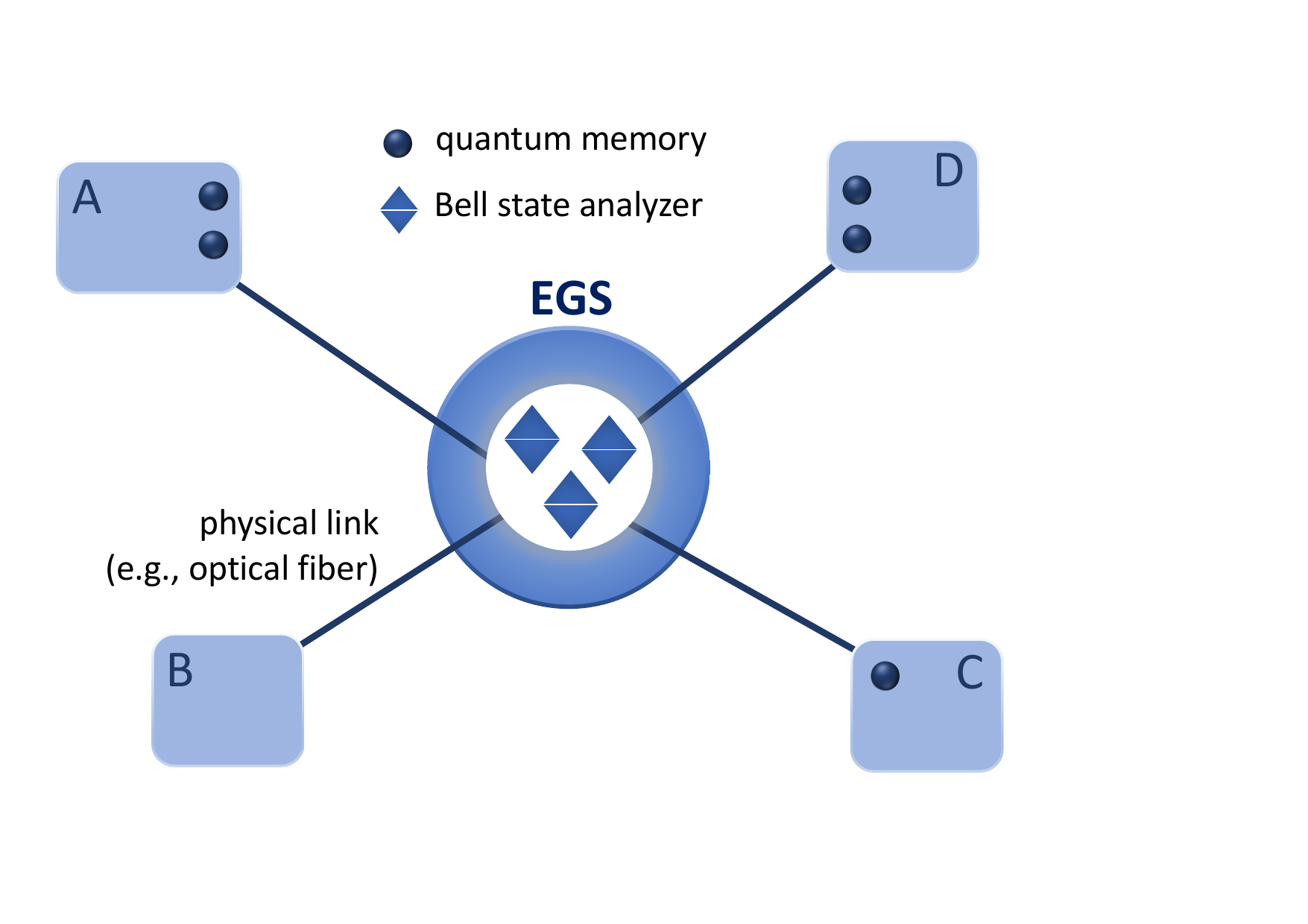}
    \vspace{-5mm}
    \caption{End nodes being serviced by an EGS with a pool of three BSAs.}
    \label{fig:star_topology}
    \end{minipage}\hfill
    \begin{minipage}{0.49\textwidth}
    \centering
    \includegraphics[width=\textwidth]{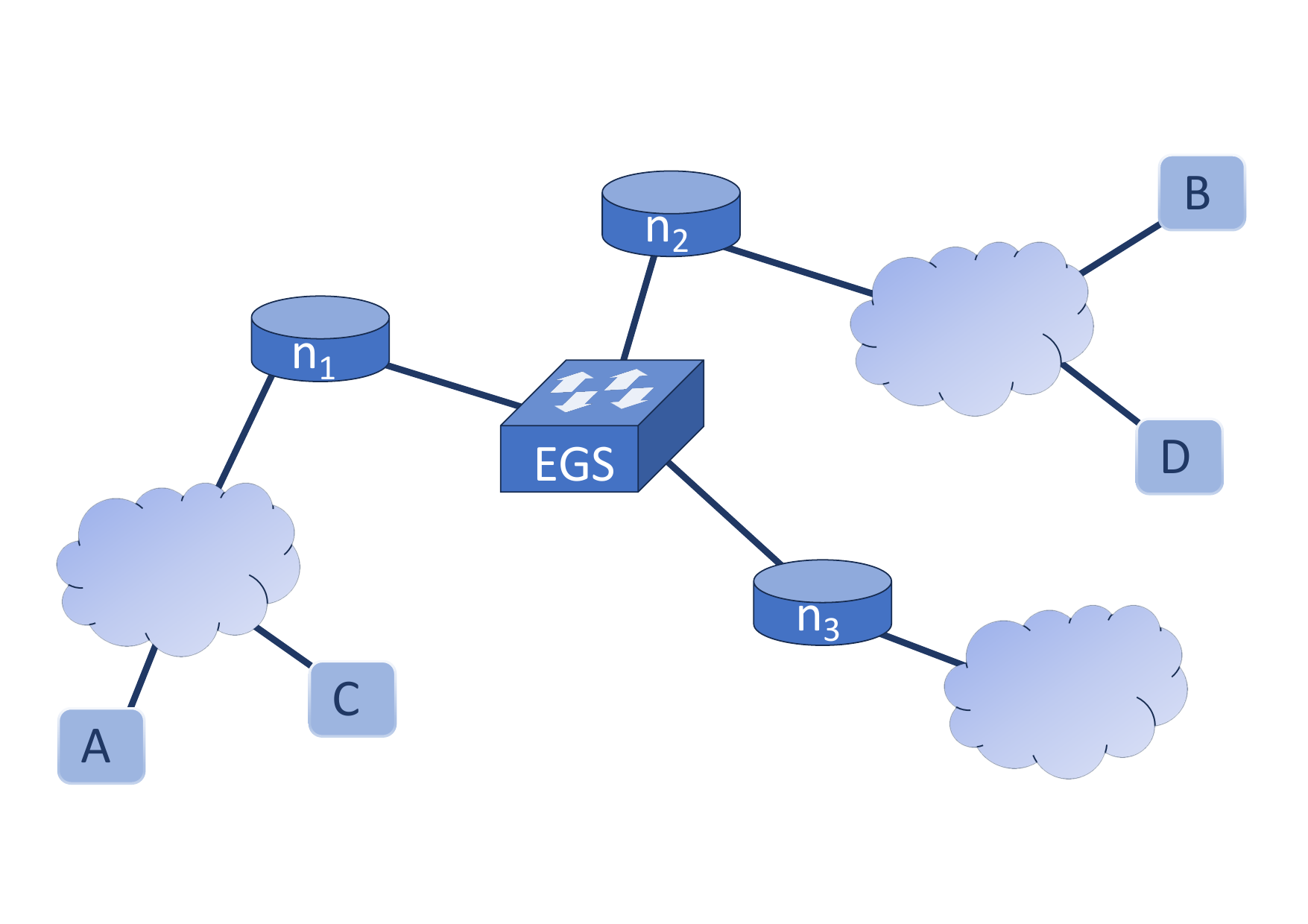}
    \vspace{-5mm}
    \caption{Intermediate quantum network nodes leveraging the EGS to interconnect their respective quantum local area networks.}
    \label{fig:larger_net}
    \end{minipage}
    \vspace{-5mm}
\end{figure}
With these properties, the EGS is poised to be an excellent candidate for a scalable and straightforwardly implementable quantum network component, especially in the Noisy Intermediate Scale Quantum (NISQ) era \cite{NISQ}. While the EGS can be used to directly connect end nodes, as shown in Figure~\ref{fig:star_topology}, it can also provide entanglement to other intermediate quantum network nodes, e.g., quantum repeaters/switches equipped with quantum memories of sufficiently long coherence time, each servicing a quantum local area network, as shown in Figure~\ref{fig:larger_net}. The versatility of the EGS warrants investigation into its practical operation within a quantum network; we provide a detailed explanation of this in Section~\ref{sec:system_descr}. We then model the hub as an Erlang loss system, with the EGS acting as a server and the nodes attached to it generating entanglement requests. We assume that these entanglement requests arrive as sessions according to a Poisson process, where each session consists of multiple entanglement generation attempts.
We then analyze three operational modes of the device, while heeding physical capabilities and limitations of both the EGS and the nodes connected to it. Namely, inspired by realistic expectations of hardware characteristics in the near term, we equip the system with two important and pragmatic features: $(i)$ batching of entanglement generation attempts due to generally high rates of failure; and $(ii)$ provisioning for calibration periods necessitated by the quantum communication qubits of nodes that are served by the EGS. In the first operational mode scenario we model strict resource reservation, where an accepted demand is scheduled for several batches of entanglement generation interleaved with calibration periods and does not release the resource until all attempts are complete or entanglement generation succeeds, whatever happens first. The second scenario we study is a variation of the first in that a successful attempt does not trigger release of the EGS resource; all attempts are carried out, thus possibly resulting in multiple successfully-established entangled states. In the final scenario, nodes relinquish resources during calibration periods, and must re-obtain them afterwards -- we opt for jump-over blocking to describe the retrial behavior of this system. In all scenarios, an incoming request is blocked (dropped) if upon arrival it sees no free resource at the switch. While studying the EGS in the context of these different operation modes, we make the following contributions:
\begin{itemize}[noitemsep, topsep=0pt]
    \item[--] We provide a comprehensive description of EGS operational details, with system specifications rooted in practical considerations of the underlying physical architecture;
    \item[--] We model the EGS as an Erlang loss system with sessions of demands arriving as a Poisson process, which we subsequently analyze to obtain $(i)$ the stationary distribution of the number of active requests being served at the switch; $(ii)$ request blocking probabilities; and $(iii)$ an insensitivity result that highlights the broad applicability of our model to practical systems;
    \item[--] We develop an extensive simulation framework capable of enacting sequences of events that model the operation of a real EGS -- one that operates in discrete time -- in a variety of configurations. Simulation code will be made available to enable future studies of the EGS.
    \if{false} %commented out by Gayane since a lot of the content below is not contributions
    We observe agreement between simulated blocking probabilities and numerical evaluation of our analysis and we study the effect of constraints such as the number of EGS resources and the number of communication qubits available to nodes. We have also observed that numerical results that we obtain for a discrete-time switch model matches closely with that of our continuous-time switch model with demand sessions arriving as a Poisson process. In practice, a quantum switch might operate in a discrete-time setting in some situations. \textcolor{blue}{thiru: I have added a sentence for comparing results of discrete and continuous time systems. Please modify it if needed}.
    \fi
\end{itemize}
%We observe agreement between simulated blocking probabilities and numerical evaluation of our analysis, and we study the effect of constraints such as the number of EGS resources and communication qubits available to nodes. 
The numerical results obtained from our continuous-time switch model closely match the more realistic discrete-time simulation of the switch.
The physical relevance of our model, both in the hardware design we consider, as well as system control protocols we propose, set this work apart from much of the previous literature, wherein hardware limitations are frequently understated. The wide scope of our framework moreover enables one to model arbitrary traffic patterns and a wide variety of hardware settings, including ones where nodes have multiple communication qubits.

The rest of this manuscript is organized as follows: in Section~\ref{sec:background}, we provide the relevant queueing-theoretic and quantum switching background. In Section~\ref{sec:system_descr}, we outline the system description, including physically-motivated operation settings. In Section~\ref{sec:model_assumptions}, we introduce the model of the EGS and state our assumptions. Section~\ref{sec:analysis} presents the analysis, while Section~\ref{sec:NumEval} provides a numerical evaluation of the system. We make concluding remarks in Section~\ref{sec:conclusion}.

\section{Background}\label{sec:background}
% \gv{Subsections for quantum and classical. For classical, introduce Erlang model and Bonald's work. Define insensitivity, give formula, refer to books.}
% \gv{What is entanglement, how it can be used, different types of applications, and which ones we focus on here (e.g., prepare-and-measure?). Quantum switches are a core component -- really worth studying thoroughly and understanding performance within different operational paradigms. Also the simplest case of a local area network.}

An EGS serves a role analogous to a central hub in a system of telephone lines, managing the connection of pairs of quantum nodes to resources, much like a telephone exchange directs and facilitates communication between sets of callers. Traditionally, telephone systems are studied using the Erlang loss model, wherein calls arrive according to a Poisson process to a server with a total of $C$ telephone lines.
%In this model, a server manages $C$ telephone lines while calls arrive according to a Poisson process to use one of the available telephone lines. 
An incoming call will be blocked if all $C$ lines are occupied upon arrival. The blocking probability of a call is computed using the well-known Erlang formula \cite{erlang1917solution}. It has been shown that this model exhibits insensitivity to the type of service time distribution of calls, as the blocking formula depends only on the \textit{average} service time of calls 
%but not on the type of service time distributions 
\cite{sevast1957ergodic}. This result follows from the argument that the underlying Markov process describing the system is a partially reversible one, which is a necessary and sufficient condition to have insensitivity \cite{Bonald_insensitivity}. 
A common method to prove insensitivity is to first study the given queueing model assuming service time distributions are Coxian (these are dense in the class of distributions with nonnegative support \cite{sarfozo_book}), and then show that the stationary distribution or blocking formula is the same as in the case of an exponential service time distribution with the same mean. Then, by using the continuity of queueing models to service time distributions \cite{Whit}, the insensitivity property also holds for generally distributed service times, which can be approximated by Coxian distributions to an arbitrary degree of accuracy.
The insensitivity property is a useful tool to dimension a practical system with a general service time distribution by studying the same system with the simpler case of an exponential service time distribution with the same mean.

In \cite{bonald2006erlang}, Bonald studied the scenario where requests are generated as sessions that arrive according to a Poisson process, with each session containing several calls. It was shown that even in this case, the Erlang model is insensitive to service time distributions. In our model, calls also arrive as sessions, with each session consisting of several attempts for entanglement generation to describe practical quantum systems where entanglement requests arrive in batches from an application. We also assume sessions arrive according to a Poisson process, which will be a valid assumption when a large number of users or applications trigger entanglement requests. The analysis of this paper is based on the analysis of \cite{bonald2006erlang}, albeit our aim here is to analyze a quantum system which has distinctive features when compared to classical systems. Our analysis is significantly different from  \cite{bonald2006erlang} due to the presence of new parameters and characteristics of quantum systems, and subsequently analytical expressions for blocking probabilities for our model are different from those presented in \cite{bonald2006erlang}. We show that a quantum switch that can be modelled as an Erlang model also exhibits insensitivity to service time distributions under certain necessary conditions. 

The EGS architecture was initially introduced in \cite{RCP_EGS}, where the authors highlighted the scalable properties of this type of hub. The authors then proposed and studied a Rate Control Protocol whose aim is to modulate user demand rates to the switch based on the EGS's capacity to serve users, as well as on overall traffic trends. The focus of this work is mainly on fair resource allocation, achieved through a network utility maximization-based \cite{kelly1997charging} framework. In contrast, the protocols proposed in our work use request blocking instead of rate control as means of resource management. Furthermore, our work aims to accurately represent the EGS in a discrete setting, with concrete descriptions of request structure and procedures for request handling.

Memory-equipped quantum switches (EDSs) have been extensively investigated from queueing-theoretic and request scheduling perspectives, see e.g., \cite{vardoyan2021stochastic,vardoyan2020exact,panigrahy2023capacity,Wenhan}. 
In \cite{vardoyan2020exact} the authors modeled an idealized EDS as a discrete-time Markov chain and computed its capacity under a simple swapping protocol. While this study shares similarities with ours in terms of the focus on discrete-time system evolution, important differences exist.
As briefly outlined in the previous section, EDS-based models often assume the presence of quantum memories both at the end nodes and at the switch  (\cite{vardoyan2020exact}, for instance assumes infinitely-sized buffers). Each node may even have dedicated infrastructure connecting it to the hub, including a pre-allocated quantum switch buffer and midpoint stations located at each link to herald entanglement. This enables switch-node entanglement to be attempted independently by each link, and successfully generated states to be stored until a swapping opportunity arises.  
With such an abundance of resources, contention is minimal, and resource allocation is not a primary focus. In another line of work on EDS-based models, contention among entanglement requests of different flows for using switch-node entanglement was studied in \cite{thiru_switch,Wenhan}. In \cite{thiru_switch}, a throughput optimal scheduling policy was designed for a quantum switch model when switch-node entanglement has a short lifetime of one time slot. They proved that the queues of entanglement requests are stable under their max-weight scheduling policy for all feasible entanglement request rates. A similar model with the scenario where switch-node entanglement has infinite lifetime was studied in \cite{Wenhan}.

In contrast to the EDS models, the EGS lacks memories, necessitating resource solicitation by nodes, followed by entanglement generation attempts executed in a synchronized manner to ensure nearly simultaneous photon arrival at the hub.
Furthermore, our EGS protocols involve batched attempts interleaved with periods of EGS inactivity, effectively constituting extended "sessions" of engagement with EGS resources. The system studied in our work thus exhibits both architectural and algorithmic differences to the EDS, requiring novel and tailored analytical methodologies.

\section{System Description}\label{sec:system_descr}

An EGS consists of three main components:
%\begin{itemize}
%    \item[-] 
(1)
    A pool of allocatable \textit{resources} such as BSAs;
%    \item[-] 
(2)
    A \textit{switch} capable of allocating any resource to any pair of nodes connected to the EGS;
%    \item[-] 
(3)
    and a \textit{processor} capable of scheduling the allocation of resources to pairs of nodes, controlling the operation of the switch, and sending and receiving classical messages.
%\end{itemize}
An EGS with control of a pool of three BSA-type resources is illustrated in Figure \ref{fig:star_topology}. Nodes are connected to an EGS by physical links, such as optical fiber connection. To gain access to an EGS resource, pairs of nodes send a message to the EGS called a \textit{request}.

\begin{definition}{Request.} Nodes $n_i$ and $n_j$ who wish to communicate issue a \emph{request} to the EGS.
A request is a demand for the generation of one or more EPR pairs between $n_i$ and $n_j$.
The exact composition of a request depends both on the nodes' physical capabilities as well as on application-focused goals.
\end{definition}

%A distinct pair of nodes $(n_i, n_j)$ may be referred to as a \textit{flow}, $f_{i, j} = (n_i, n_j)$. For every flow one node is assumed to be 

For every distinct pair of nodes $(n_i, n_j)$, one node is designated the \textit{initiator} and the other the \textit{secondary node}. Requests are communicated to the EGS by the initiator node.
A node consists of the following set of components:
%\begin{itemize}
    %\item[-] 
    $(i)$
    one or more \textit{communication qubits}, each capable of preparing a quantum state and emitting one or more photons\footnote{In some simple realizations of a node, the communication qubits may be replaced by all photonic state preparation devices. See \cite{MDIQKDDeployed} for an example.};
    %\item[-] 
    $(ii)$
    devices needed to manipulate the state(s) of the communication qubit(s) -- examples include lasers, waveform generators and microwave sources;
    %\item[-] 
    $(iii)$ devices needed to measure a communication qubit;
    %\item[-] 
    $(iv)$
    possibly one or more quantum memories to which the quantum state of a communication qubit may be swapped, capable of storing the state for a finite period of time;
    %\item[-] 
    $(v)$
    and a classical processor capable of controlling the states prepared in communication qubits, triggering swaps to memory, triggering measurement of a communication qubit, and sending and receiving classical messages. 
%\end{itemize}

%Figure \ref{fig:star_topology} illustrates an EGS connected to four limited quantum nodes in a star topology. 
In Figure \ref{fig:star_topology}, the nodes connected to the EGS are assumed to be limited in that they either have few or no quantum memories, and may have a restricted number of communication qubits.
% For example, only one of the quantum memories in each node may be able to function as a communication qubit.
Figure~\ref{fig:larger_net} illustrates a scenario where the EGS may be connected to more powerful nodes, potentially with access to many communication qubits and quantum memories per node. As in Figure \ref{fig:larger_net}, nodes can function as relays or intermediates interfacing with several quantum local area networks.

\subsection{Physical Operation Settings}
Bipartite Heralded Entanglement (HE) generation and generation of Correlated Information (CI) are two ways in which a pair of nodes can interact via the EGS. In Appendix \ref{appendix:OpAppendix}, we describe a protocol for the latter, while here we
 describe in detail a protocol for bipartite HE generation \cite{theoryHeraldCCGZ, theoryHeraldDLCZ} that is based on a single-click scheme \cite{SingleClickDiamond}.
 This method of producing entanglement has been successfully demonstrated in several experimental platforms, including Color Centers \cite{DoubleClickDiamond, SingleClickDiamond}, Ion Traps \cite{HerEntTrappedIons1, HerEntTrappedIons2}, Atomic Ensembles \cite{HerEntOriginalAE, HerEntSecondAE} and Neutral Atoms \cite{NeutralAtomsHeralded}. 
 %We describe a single-click scheme for the HE generation protocol.
 Applications of HE generation include BQC, teleportation and clock-synchronization \cite{clockSynchPaper}, and an application of CI generation is Measurement Device Independent QKD (MDI-QKD) \cite{originalMDIQKD1, originalMDIQKD2}. Each of these tasks can be enabled by an EGS where the shareable resource is a BSA. The EGS can be equipped with multiple BSAs (Figure \ref{fig:star_topology}), or more generally with a different type of resource, in order to support other interaction protocols between quantum network nodes. Each of the protocols we describe is compatible with a BSA that consists of a 50/50 beam splitter with two input channels; each of the two output ports of the 50/50 beam splitter is connected to a photon detector; the outputs of the photon detectors are connected to a digital logic unit such as a Field Programmable Gate Array (FPGA) which processes the measurement outcomes and can communicate a success/failure flag back to the nodes of a flow. Such a BSA is depicted in Figure~\ref{fig:optical_swap_BSA}. In Table~\ref{tab:sampleParams} (Section~\ref{sec:NumEval}) we provide an inventory of physical and protocol parameters motivated by implementations of single click bipartite HE generation based on experimental demonstrations with an Nitrogen Vacancy (NV) center in diamond \cite{SingleClickDiamond, ThreeNodeQN}.

\subsubsection{Heralded entanglement generation}
The goal of node pair $(n_i, n_j)$ running the bipartite HE generation protocol is to entangle a communication qubit of node $n_i$ with a communication qubit of node $n_j$. The term "bipartite" thus refers to the property of the protocol that the resulting entanglement involves two qubits. We will sometimes refer to such states as Einstein–Podolsky–Rosen (EPR) pairs or Bell states. The protocol is called \textit{heralded} because for every attempt to generate entanglement that the nodes make, a success or failure flag is generated and converted into a message that is sent to the nodes, thereby indicating if the attempt succeeded or failed. Triggering a subsequent attempt after a success would destroy the entanglement that was created. To prevent wasting entanglement, the protocol includes a wait time for the heralding flag to arrive before triggering subsequent attempts. In certain EGS operation modes an attempt may not be triggered if a success flag is received (see Section~\ref{subsec:single_ent_gen}). In a setting where node pairs use the entanglement generated between them to perform some application, this protocol is beneficial because it allows the nodes to condition commencement of the application on the successful generation of entanglement. 

At a high level, a single-click heralded entanglement generation protocol consists of four stages. First, each node performs a sequence of calibration operations and prepares a communication qubit in a known state. Second, each node locally triggers the generation of entanglement between the state of their communication qubit and the presence/absence of a travelling photon. Third, the presence/absence encoded photons are sent to a BSA, at which a Bell-State Measurement (BSM) (entanglement swap) is attempted between the encoded photons. Fourth, if the BSM succeeds the communication qubits of the two quantum processing nodes will have become entangled and a success flag is sent to the nodes. If the measurement is not successful a failure flag is communicated to the nodes. The second, third and fourth stages occurring sequentially constitute a single HE generation attempt. For the example physical platform of the NV center in diamond, the calibration operations correspond to a Charge and Resonance (CR) check \cite{ThreeNodeQN}.

Attempts may be repeated in batches which must be occasionally interleaved with repetition of the first step -- calibration of the communication qubit. The main limitation on the batched attempt repetition rate is the Round Trip Time (RTT) of communication associated with the third and fourth stages of an attempt. The need to wait for arrival of the heralding flag especially limits the rate. 

A communication sequence between the node pair $(n_i, n_j)$, which requests and is allocated use of an EGS resource to perform HE generation, is included in Appendix \ref{appendix:OpAppendix} in Figure \ref{fig:HeraldedCommSeq}.

\subsubsection{Probability of success, $p_{\text{gen}}$}
\label{subsubsec:pgen}
We model experimental implementations of HE generation where the state of the communication qubit is reset (stage one) at the start of each attempt and every attempt in a batch corresponds to an identical experimental sequence. Furthermore we assume that the characteristics of devices used in triggering entanglement generation attempts, such as laser pump power and frequency, remain constant. Therefore, the probability of entanglement generation may only change over attempts if there are physical parameters that drift or jump over a batch of attempts. For any system where attempts have a fixed mean duration that is short in comparison to the parameter drift/jump timescales, such effects may be accounted for by assuming that the probability of successful entanglement generation is a function of the $j$th attempt in a batch of attempts, $p_{\text{gen}}(j)$. 

For an implementation in the NV colour center in diamond, one may assume that the outcomes of sequential attempts in a batch are identically and independently distributed (IID), with a fixed probability of success $p_{\text{gen}}$ \cite{SingleClickDiamond}.
This assumption is valid as long as calibration periods are performed frequently enough between batches of attempts to prevent slow effects -- such as the spectral diffusion which affects solid state quantum emitters -- from corrupting the state of the communication qubit. 
The assumption that the outcomes of sequential attempts are IID with a fixed probability of success $p_{\text{gen}}$ also applies to other experimental platforms, such as Trapped Ions \cite{HerEntTrappedIons1, HerEntTrappedIons2}, where the mean attempt duration is significantly shorter than sources of parameter drift. The assumption that $p_{\text{gen}}$ is constant and is independent of attempt duration distributions but depends only on the mean attempt duration is the necessary condition that we use in proving the insensitivity result discussed in the introduction section (this is Theorem~\ref{thm:insensitivity} in Section~\ref{sec:analysis}).
\subsubsection{Single vs multiple entanglement generation}
\label{subsec:single_ent_gen}
For a limited quantum node, such as a node with one communication qubit and possibly a memory, it may be most practical to engage in single entanglement generation. That is, if an attempt to generate entanglement succeeds, no further attempts in a batch will be executed. Physically, successful entanglement generation renders the communication qubit of the device unavailable for further attempts until that entanglement can be used or transferred to memory. Transfers to memory are not instantaneous and have a finite time cost, thus communication qubits can not be freed instantly even in a system with memory. Moreover, if a communication qubit is coupled to a memory, attempts to generate entanglement while a state is stored in memory may damage the stored state due to induced decoherence \cite{ThreeNodeQN}. This effect results from a persistent non-zero coupling to the memory. Single entanglement generation may be the preferential operation mode of limited quantum devices to account for these effects \cite{ThreeNodeQN}. In contrast, a quantum node with multiple communication qubits may leverage them to generate multiple entangled states, possibly by multiplexing photon emission from the node.

\section{Model and Assumptions}
\label{sec:model_assumptions}
\if{false}
\gv{
\begin{itemize}
    \item  Subsection: introduce setup: EGS connected to users (or nodes), star topology (more abstractly than above)
    \item Assumptions: on links, users, memory capabilities at end nodes, etc. (again, in abstract terms, but relating to discussion from previous section)
    \item Subsection: discuss different modes of operation (e.g., single vs multiple successes, retrial behavior or lack thereof, relinquishing resources during idle periods or not, etc.). Throughout, relate to real applications as much as possible.
\end{itemize}
}
\fi
In this section, we lay the groundwork for the analysis of a star-topology system with the EGS at its center, as shown in Figure~\ref{fig:star_topology}. To this end, we present a number of abstractly-defined terms which bridge the gap between the physical and queueing-theoretic models of the EGS. We then discuss the various EGS operation modes, and for each introduce the corresponding session structure. Modeling assumptions and useful notation are introduced throughout. We begin by defining a number of terms that are relevant to all EGS operation modes.

\begin{definition}{Service model.}
    The \emph{service model} describes how the EGS handles requests, including:
    %\begin{itemize}
        %\item[-] 
        $(1)$
        \emph{resource reservation}, specifying the amount and duration of resource allocation to a pair of communicating nodes;
        %\item[-] 
        $(2)$
        \emph{retrial behavior}, specifying actions taken upon blocked service events; and
        %\item[-] 
        $(3)$
        \emph{termination behavior}, specifying events that trigger the EGS to end service to a pair of nodes. 
    %\end{itemize}  
\end{definition}
Service models, described briefly in the introduction, will be explained in more detail later in this section; we first provide a description of the underlying components of service.

\begin{definition}{Call.}
A \emph{call} is the basic service component for two nodes communicating via the EGS. A call involves the active use of EGS components for a period of time, such as the utilization of a resource to attempt entanglement generation between two nodes.
\end{definition}
In the context of entanglement generation, we use the terms ``call'' and ``attempt'' interchangeably.
For the purpose of this work, we establish two additional service component types which are not calls in that EGS resources are not in active use for their duration. For the first of these, we follow the convention set by \cite{bonald2006erlang} to define \emph{idle periods}.

\begin{definition}{Idle period.}
%An \emph{idle period} between nodes $n_i$ and $n_j$ requires neither EGS, nor node hardware resources. 
When nodes $n_i$ and $n_j$ enter an \emph{idle period},
%Moreover, when an idle period begins, the nodes 
they
relinquish all EGS resources, so that a subsequent call would require a new service reservation.
\end{definition}
In this work, we will assume that a node's communication qubit is unavailable during an idle period, so that it cannot be used to initiate a new service request for the node until the current one completes service. We leave relaxations of this assumption for future work.

In our EGS model, there is also a possibility that of two communicating nodes $n_i$ and $n_j$, one or more necessitates a calibration period after a number of active (call) periods. The duration of such calibration periods typically has a finite mean, albeit it can be randomly distributed. Depending on the service model, the nodes may decide to not relinquish resources they have already reserved at the EGS.
\begin{definition}{Calibration period.}
    A \emph{calibration period} between nodes $n_i$ and $n_j$ requires node hardware resources from one or both nodes, but no active utilization of EGS resources. Depending on the service model, however, the nodes may continue to hold onto EGS resources for the duration of a calibration period, precluding other nodes from accessing them.
\end{definition}
We assume in this work that a calibration period engages all qubits of the corresponding  communication request, as opposed to, e.g., all qubits of a node. While the latter scenario may also be of interest, it poses a challenge for analysis since service requests can no longer be treated independently from each other.
We also remark that if the nodes relinquish EGS resources at the beginning of a calibration period, then from the perspective of the EGS (and from a modeling perspective) the period is an idle one. We distinguish between calibration and idle period types because calibration periods are always physically motivated: they necessarily engage quantum hardware at nodes.
%, which may impose restrictions on the number of new service requests originating from the nodes. 
In contrast, idle periods need not stem from quantum hardware restrictions at nodes (even if we assume that communication qubits are unavailable for new service request creation during idle periods). Examples of these are an entanglement generation attempt followed by a \emph{classical} processing period at the nodes, or a link-layer protocol that imposes a back-off timer between successive entanglement generation attempts.

A request issued from nodes $n_i$, $n_j$ to the EGS, if accepted, triggers the creation of a \emph{session}.
\begin{definition}{Session.}\label{def:session} A \emph{session} between nodes $n_i$ and $n_j$, denoted by the tuple $(n_i,n_j,t)$ is a sequence of calls generated by the two nodes, intended to be serviced by the EGS. 
This sequence may be interleaved with idle and/or calibration periods. Tuple element $t$ specifies the session type.
%, whereas element $s$ denotes the number of active sessions of type $t$ between $n_i$ and $n_j$.
\end{definition}
By convention, sessions begin and end with calls, and not idle or calibration periods.
The definition above makes a reference to a \emph{session type}: as in \cite{bonald2006erlang}, we permit the existence of differently-structured sessions within one system. Physically, these may correspond to different applications, entanglement distribution algorithms, or even application \emph{instantiations}. An example of the latter is a pair of nodes $(n_1,n_2)$ that assist in carrying out two QKD instantiations across the network shown in Figure \ref{fig:larger_net}: one for the user pair $(A,B)$ for a requested key size $M_1$, and the other for the user pair $(C,D)$ for a requested key size $M_2> M_1$. The session type corresponding to $C$ and $D$'s request would have more calls (i.e., entanglement generation attempts), and possibly more calibration or idle periods, depending on physical system restrictions or the algorithmic design of the EGS control protocol, respectively. Another example is the use of session types to accommodate application-dependent fidelity requirements: an application with a high minimum fidelity threshold can for instance request a session with more attempts in the hopes of producing enough states from which to distill higher-fidelity entanglement \cite{bennett1996purification,deutsch1996quantum}.
We use the shorthand notation $f^{t}_{i,j}$
 to identify a session of type $t$ between nodes $n_i$ and $n_j$. The set of all possible flows is denoted as 
 \begin{align}
\mathcal{F}\equiv \{f^t_{i,j}:i,j\in\{1,\dots,K\},t\in\{1,\dots,T\}\},  
 \end{align}
 where $K$ is the number of nodes connected to the EGS and $T$ is the total number of possible session types. The cardinality of $\mathcal{F}$ is given by $F=\binom{K}{2}\times T$ since each node-pair could run any of the $T$ session types.\footnote{It is possible that certain node-pairs do not have the physical capability to carry out sessions of a given type $t$. The state space can be easily amended to reflect such restrictions; we make the assumption that all node-pairs are capable of all session types merely to simplify notation.} 
 We note that sessions need not be unique: $n_i$ and $n_j$ can have multiple concurrent sessions of type $t$, as long as resources (communication qubits and EGS resources) are available. Similar to the work of Bonald in \cite{bonald2006erlang}, we assume that sessions are independent, their arrivals are Poisson, and that session components (calls, idle and calibration periods) are decomposed into a finite and random number of exponential phases. 
 %We also adopt the convention of \cite{bonald2006erlang} that all sessions begin and end with calls (and not idle or calibration periods).

We note that Def. \ref{def:session} is not all-encompassing. As an example, a more general definition would allow a session to use multiple communication qubits from both nodes, enabling them to achieve a higher application rate (e.g., QKD secret key rate). Introducing this generalization, however, prompts several questions on the service model of the EGS. Consider a session that uses $c>1$ communication qubits from participating nodes $n_i$ and $n_j$. In one possible service model, the EGS accepts the session only when it can allocate exactly $c$ resources -- one for each $n_i/n_j$ qubit pair -- at the time of the request. In another service model, multiplexing involves all $c$ pairs of communication qubits to share access to a single EGS resource via carefully timed quantum signals and classical bookkeeping messages. Since analysis depends on the choice of service model, as well as on physical system operation details, we leave such generalization for a future study and focus here on the case where each active session -- one that has been accepted for service by the EGS -- uses a single qubit from each participating node. 

Throughout this work, we often use the term ``session" and the  (queueing theory inspired) term ``flow" interchangeably. Here, we also point out the difference between a request and a session/flow. As discussed in the previous section, a request is a demand from a pair of nodes $n_i$, $n_j$, for a number of attempts to be carried out using a resource of the EGS. Meanwhile, a session \emph{models} a request as a sequence of calls which arrive in batches and which may be interspersed with idle and calibration periods. Higher layer algorithms or applications determine the session type, while the service model specifies EGS and node behaviors for a session in progress.
\if{false}
\begin{table}[]
    \centering
    \begin{tabular}{c|l}
    Variable Name & Definition\\
    \hline
        $K$ & number of nodes connected to EGS \\
        $c_k$ & number of communication qubits at node $n_k$\\
        $C$ & number of resources at EGS\\
        $T$ & number of possible session types\\
        $F$ & number of possible flows, $|\mathcal{F}|$
    \end{tabular}
    \caption{Variable definitions used in the description of the EGS setup.}
    \label{tab:varible_defs}
\end{table}
\fi
 We study three distinct service models of the EGS:
 \begin{itemize}
     \item[-] \textit{Single EPR Pair Generation with Strict Resource Reservation}: a session consists of entanglement generation attempts interleaved with calibration periods.
     Once a session is admitted at the EGS, attempts are carried out until one is successful, or until the last attempt is complete. Both events result in session termination. A flow holds onto its EGS resource during calibration periods, even though it is not actively utilized. 
     \item[-] \textit{Multiple EPR Pair Generation with Strict Resource Reservation}: all properties of the previous service model apply, with the exception that a session terminates only when all attempts are carried out. A session can thus produce multiple EPR pairs.
     \item[-] \textit{Multiple EPR Pair Generation with Resource Relinquishment}: a session consists of entanglement generation attempts, or more generally ``active'' periods which engage an EGS resource, interleaved with idle periods at the beginning of which the EGS resource is given up, and at the end of which the flow attempts to re-obtain a resource. Failure to obtain a resource (either at the beginning of a session or after an idle period) triggers a jump-over retrial: the session either transitions to the next idle period, or if one does not exist, terminates. As in the previous service model, successful entanglement generation by itself does not cause session termination.
 \end{itemize}
% \gv{
% Discussion: say we have QKD in mind primarily, esp. for all-photonic system...may make it easier for readers to appreciate the practical applicability of model.} \sg{I've actually had some thoughts on this: 1. I think this model may actually by nice as a lightweight (in terms of implementation cost and calculation of the schedule) scheduling model for real quantum networks. In a near term quantum network this model would be advantageous because the implementation cost of such a scheduler is low. In a scaled network, this scheduler is advantageous because the cost of calculating the schedule is independent of the number of flows. This scheduler results in an analytic QoS marker (the blocking probability), which is exceptional in the domain of quantum network scheduling and may be very useful for obtaining a way to practically delivery requested rates of entanglement packets to flows, I plan to make this point in the numerical analysis as well. Also, because we obtain analytic results for this method, it can be used as a baseline which any other method of scheduling can be compared to in terms of real service metrics derivable from that which we have shown -- the blocking probability. 2. here we have focused a lot on considerations that are especially relevant to Heralded Entanglement and our initial results are in this direction so it might be nice to not primarily link to QKD but rather extend the discussion with QKD and the CI model. }

Irrespective of the service model, the state space of the system can be represented using a vector
\begin{align}
    \vb{x} = [\vb{x}^{f_1},\dots,\vb{x}^{f_F}],
    \label{eq:cox_state}
\end{align}
where $F$ is the number of possible flows, and each $\vb{x}^f$ is a vector describing the number of jobs in the queues corresponding to flow $f\in\mathcal{F}$. Let us now examine the structure of each vector $\vb{x}^f$; modeling each component of a flow using a Coxian distribution (see discussion in Section~\ref{sec:background} for a justification), we define the following variables:
\begin{itemize}
    \item[-] $A^f_{i,j}/C^f_{i,j}/I^f_{i,j}$: $i$th phase of attempt/calibration/idle period $j$, for flow $f$;
    \item[-] $N_A^f/N_C^f/N_I^f$: number of phases per attempt/calibration/idle period, of flow $f$;
    \item[-] $M_A^f/M_C^f/M_I^f$: total number of attempts/calibration periods/idle periods, for flow $f$.
\end{itemize}
We further define the following useful variables and notation: 
\begin{itemize}
    \item[-] $L^f\equiv N_A^f\times M_A^f+N_C^f\times M_C^f+N_I^f\times M_I^f$ as the total number of phases in session type $f$;
    \item[-] $L\equiv 
 \sum\limits_{f\in\mathcal{F}}L^f$ -- the dimension of vector $\vb{x}$;
 \item[-] $x^{f,A/C/I}_{i,j}$ is the notation used when referring to the number of jobs currently in the $i$th phase of the $j$th attempt/calibration/idle period of a session belonging to flow $f$;
 \item[-] $\vb{e}^{f,A/C/I}_{i,j}$ are vectors of dimension $L$, with all entries zero except the one corresponding to the $i$th phase of attempt/calibration/idle period $j$ of flow $f$, which is one;
 \item[-] $\vb{e}^f_i$, $i\in\{1,\dots,L\}$, is a vector of dimension $L$ with all entries zero except the one corresponding to the $i$th component of $\vb{x}^f$, which is equal to one;
 \item[-]
 $x_i$, $i\in\{1,\dots,L\}$, is used when referring to the $i$th element of the vector $\vb{x}$ when there is no need to identify a flow or period within it;
 \item[-]
 $x^f_i$, $i\in\{1,\dots,L^f\}$, is used when referring to the $i$th element of $\vb{x}^f$ when there is no need to identify a specific period or phase within a flow $f$.
 \item[-] For a given state $\vb{x}$, the number of active sessions of type $f$ is denoted by
 \begin{align}
     q^f(\vb{x}^f) \equiv \sum\limits_{i=1}^{L^f}x_i^f.
     \label{eq:qfs}
 \end{align}
% Note that for a flow $f$ with member nodes $n_i$ and $n_j$, $q^f(\vb{x})\leq c^f_{\min}$, where $c^f_{\min}=\min(c_i,c_j)$, since the most resource-constrained node determines the maximum number of contemporaneous sessions.
\end{itemize}

Finally, we assume that quantum nodes have a limited number of communication qubits -- $c_k$ for node $n_k$, and that 
%For a given flow $f=(n_i,n_j)$, define $c^f_{\min}=\min(c_i,c_j)$.
the EGS has a total of $C$ resources. %Table~\ref{tab:varible_defs} summarizes the variables which are used in the analysis.

\section{Analysis}\label{sec:analysis}
In this section, we analyze the three EGS service models: single and multiple EPR pair generation with strict resource reservation, and multiple EPR pair generation with resource relinquishment during idle periods. For each scenario, we derive the stationary distribution of observing the system in a given state. The main quantity of interest is the probability that an arriving request is blocked -- an event that occurs when said request sees all $C$ EGS resources engaged. In strict resource reservation mode, blocking can only occur at the beginning of a session. On the other hand, if sessions contain idle periods as in the resource relinquishing mode, then blocking may also occur throughout the session, after departures from idle periods. We show that in all cases, the blocking probability depends only on the traffic intensities of the flows -- i.e., the insensitivity property.
%\gv{\begin{itemize}
%    \item Analysis of aforementioned modes of operation. Keep as general as possible.
%    \item Insensitivity, blocking probability, expected service time (at call level, not at session level), etc.
%\end{itemize}}

\subsection{Single EPR Pair Generation with Strict Resource Reservation Service Model}
\label{sec:single_epr_strict}
Recall that in the strict resource reservation service model, sessions, once admitted, are processed in their entirety, terminating prematurely only if an EPR pair attempt is successful. This means that $(i)$ by definition, there is no notion of an idle period in this service model, and $(ii)$ a session that is being actively serviced by the EGS does not relinquish its BSA even during  a calibration period. Figure~\ref{fig:strict_res_periods} provides the general form of a session within this service model. In Figure~\ref{fig:strict_res_phases} the attempt and calibration periods are depicted in their decomposed form; \eg, exponential phases $A_{1,1},\dots,A_{N_A,1}$ comprise the Cox-distributed period {\chancery A}$_1$.

%It finally remains to restrict the state space of the Markov chain. 
Given these specifications, we can define the state space of the associated continuous-time Markov chain. Namely, the EGS must heed the resource (BSA) capacity, and each network node must heed its own communication qubit limit.
The admissible state space is thus given by
\begin{align}
    \mathcal{S} = \left\{\vb{x}\in \mathbb{N}^L: \sum\limits_{f\in\mathcal{F}}q^f(\vb{x}^f) \leq C,\quad \sum\limits_{f\in\mathcal{F}:n_k\in f}q^f(\vb{x}^f)\leq c_k,\forall k\in \{1,\dots,K\}\right\},
    \label{eq:restr_state_space}
\end{align}
where we introduce the notation $n_k\in f$ to mean that node $n_k$ partakes in flow $f$. The set $\mathcal{S}$ is coordinate convex, i.e., if $\vb{x}\in\mathcal{S}$, then  $\vb{y}\in\mathcal{S}$ for all $\vb{y}$ such that $\vb{0}\leq \vb{y}\leq \vb{x}$ component-wise.

\begin{figure}
    \centering
    \includegraphics[width=\textwidth]{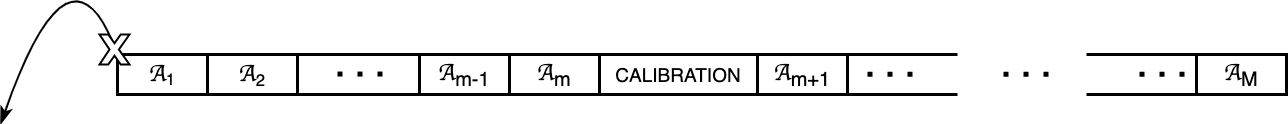}
    \caption{Strict resource reservation service model. A session consists of multiple EPR pair generation attempts, denoted by {\chancery A}$_i$, $i=1,\dots,M$. A calibration period is carried out after every $m$ attempts. In the ``multiple EPR pair generation'' variant of this service model, an admitted session does not relinquish resources for its entire duration, while in the ``single EPR pair generation'' variant, the session ends after a successful attempt.}
    \vspace{-5mm}
    \label{fig:strict_res_periods}
\end{figure}
\begin{figure}
    \centering
    \begin{minipage}[c]{0.55\textwidth}
    \includegraphics[width=\textwidth,trim={1cm 4cm 1.2cm 3cm},clip]{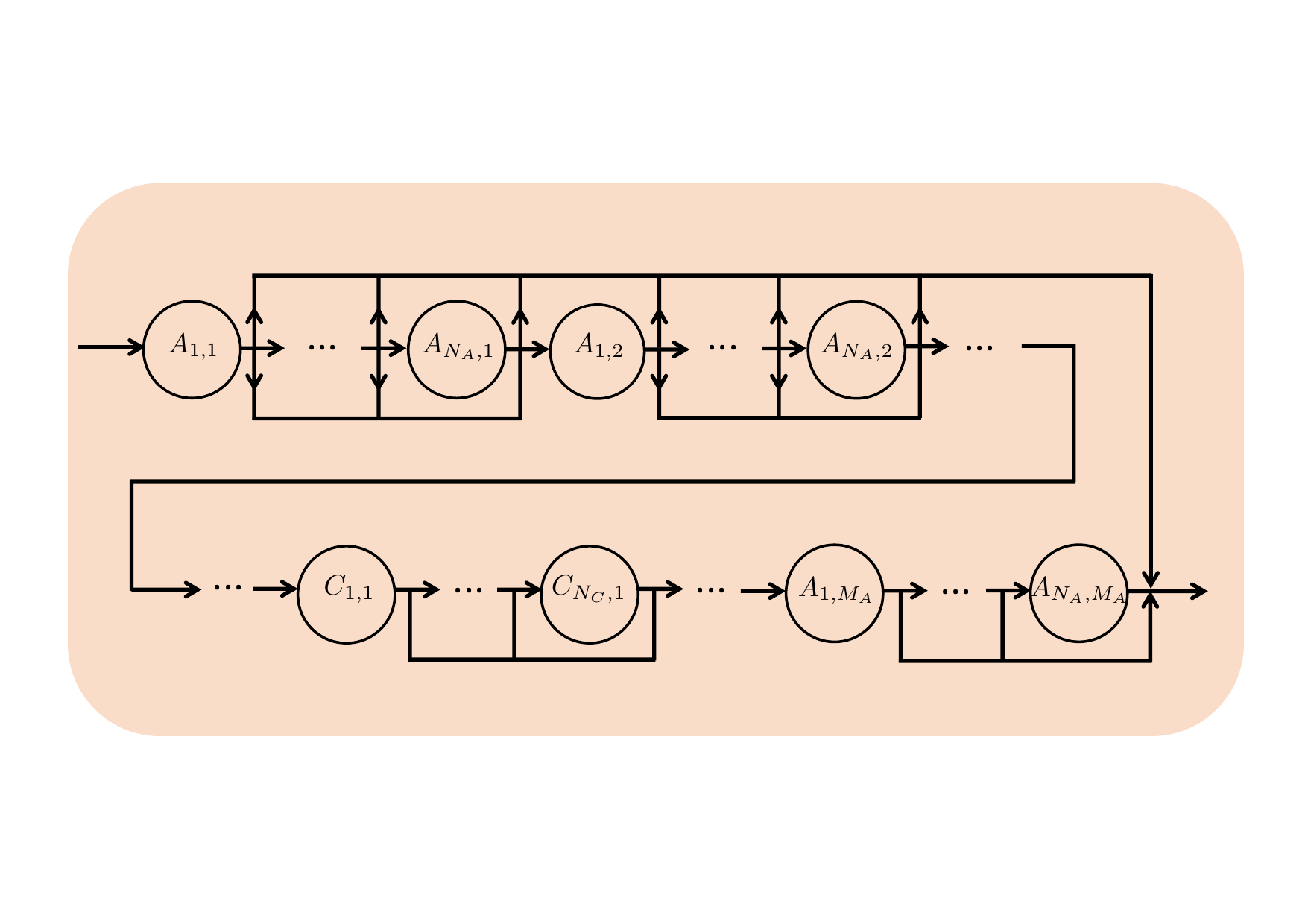}
    \end{minipage}\quad
    \begin{minipage}[c]{0.4\textwidth}
    \caption{A session from the single EPR pair generation strict resource reservation service model, shown at the level of periods in Figure~\ref{fig:strict_res_periods}, decomposed into exponentially-distributed phases so as to result in Coxian-distributed attempt and calibration periods.}
    \label{fig:strict_res_phases}
    \end{minipage}
\end{figure}

For the queueing network in Figure~\ref{fig:strict_res_phases}, under the strict resource reservation model we specify the following properties:
\begin{itemize}
    \item[-] All external arrival rates (\ie, those originating from outside of the network) $\nu_i^f(\vb{x})$ are zero, except for those of queues $A_{1,1}^f$, $f\in\mathcal{F}$. We denote these rates with $\nu_1^f(\vb{x})$, $f\in\mathcal{F}$, $\vb{x}\in\mathcal{S}$, so that the transition from $\vb{x}$ to $\vb{x}+\vb{e}^{f,A}_{1,1}$ occurs with rate 
    \begin{align}
        \nu_1^f(\vb{x}) = \begin{cases}
            \nu_1^f, &\text{ if } \vb{x}+\vb{e}^{f,A}_{1,1} \in\mathcal{S},\\
            0, & \text{ else.}
        \end{cases}
    \end{align}
    \item[-] Transition $\vb{x}\to\vb{x}-\vb{e}^f_i+\vb{e}^f_{i+1}$ occurs with probability $p^f_{i,i+1}$, for $1\leq i<L^f$.
    \item[-] A special case of the above is that $p^f_{i,i+1}=1$, $\forall f\in\mathcal{F}$, if $x^f_i$ corresponds to the last phase of a calibration period.
    \item[-] Transition 
    $\vb{x}\to\vb{x}-\vb{e}^f_i+\vb{e}^f_{j}$
    occurs with probability $p^f_{i,j}$ if $j$ is such that $x^f_j$ corresponds to the initial phase of the call/calibration period that follows the call/calibration period corresponding to $x^f_i$.
    \item[-] Transition $\vb{x}\to\vb{x}-\vb{e}^f_i$ occurs with probability $p^f_i$ if $i$ is such that $x^f_i$ corresponds to an attempt phase. This transition represents the event that entanglement generation succeeds after the $i$th phase of flow $f$ -- in this service model, this causes the session to end. We note that leaving the session from a calibration phase is not possible.
\end{itemize}
\if{false}
Let us define the following two sets:
\begin{itemize}
\item[-] $\mathcal{C}^f$ is the set that contains all call (attempt) phases of flow $f$, i.e., $\mathcal{C}^f=\{A^f_{i,j}, i\in \{1,\dots,N_A^f\}, j\in\{1,\dots,M^f_A\}\}$;
    \item[-] $\mathcal{I}_{\mathcal{C}^f}$ is the set that contains indices corresponding to call phases of flow $f$, i.e., $\mathcal{I}_{\mathcal{C}^f} = \{i: x^f_i \text{ corresponds to a call phase of } f\}$.
    \end{itemize}
    \fi

    Finally, we define $\mu^f_{i}$ as flow $f$'s job processing rate at queue $i$ of flow $f$, and $\lambda^f_i(\vb{x})$ as the total arrival rate into queue $i$ of flow $f$ while in state $\vb{x}$. When $i=1$, i.e., it corresponds to the first phase of a session, the total arrival rate is $\lambda^f_1(\vb{x}) = \nu_1^f(\vb{x})$. For all other queues, the arrival rate is given by
    \begin{align}
    \lambda_i^f(\vb{x}) = 
    %\nu^f_1(\vb{x})\tilde{p}^f_2\dots \tilde{p}^f_i \equiv 
    \begin{cases}
    \nu^f_1\tilde{p}^f_i \equiv
        \lambda_i^f, \text{ if } \vb{x}+\vb{e}_i^{f}\in\mathcal{S},\\
        0, \text{ else.}
    \end{cases}
    \label{eq:lambda_i_cox}
    \end{align}
    Above, $\tilde{p}^f_i$, $2\leq i\leq L^f$, denotes the probability of reaching the $i$th phase starting from the first phase of a session belonging to flow $f$. Appendix~\ref{app:prob_reaching_queue} provides a derivation of these probabilities.

%\subsubsection{Stationary Distribution and Blocking Probability}
We next begin the analysis of the above-described system, beginning with the derivation of the stationary distribution.
\begin{theorem}The stationary distribution $\pi(\vb{x})$ of the system with single EPR pair generation while in strict resource reservation mode is given by
    \begin{align}
        \pi(\vb{x}) = \left\{\left(\sum\limits_{\vb{y}\in\mathcal{S}}\prod\limits_{f\in\mathcal{F}}\prod\limits_{i=1}^{L^f}\frac{(\rho^f_i)^{y^f_i}}{y^f_i!}\right)^{-1}\right\}\prod\limits_{f\in\mathcal{F}}\prod\limits_{i=1}^{L^f}\frac{(\rho^f_i)^{x^f_i}}{x^f_i!},
        \label{eq:cox_stationary_full}
    \end{align}
    where $\rho_i^f$ is the traffic intensity of the $i$th queue of a session corresponding to flow $f$.
\end{theorem}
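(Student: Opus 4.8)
The plan is to recognize the queueing network of Figure~\ref{fig:strict_res_phases} as an open network of infinite-server ($\cdot/M/\infty$) queues with feed-forward Markovian routing among the exponential phases, truncated to the coordinate-convex admissible set $\mathcal{S}$, and then to establish~\eqref{eq:cox_stationary_full} by verifying the partial balance (station balance) equations of the truncated chain, in the spirit of Bonald~\cite{bonald2006erlang}. Two structural facts drive this. First, under strict resource reservation an admitted session holds its own dedicated BSA for its entire lifetime, so the exponential phase clocks of distinct active sessions run independently; hence the total service-completion rate out of phase queue $i$ of flow $f$ in state $\vb{x}$ is the infinite-server rate $\mu^f_i x^f_i$. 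Second, of all the transitions listed above only the external arrivals $\vb{x}\to\vb{x}+\vb{e}^{f,A}_{1,1}$ can leave $\mathcal{S}$: a Coxian phase advance and a jump to the next period both leave every $q^{f'}(\vb{x}^{f'})$ unchanged, while the success-departure from an attempt phase decreases $q^f(\vb{x}^f)$ by one, so none of these violates the $C$- or $c_k$-constraints. Thus the truncated chain is obtained from the open network simply by zeroing the arrival rate of flow $f$ whenever $\vb{x}+\vb{e}^{f,A}_{1,1}\notin\mathcal{S}$.

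First I would determine the internal arrival rates. Reading off from the transition rules the feed-forward sub-stochastic phase-routing matrix $P^f$ of flow $f$ -- with entries for the Coxian advance, for completion of an attempt/calibration period (moving to the first phase of the next period), and for session termination upon a successful attempt -- the traffic equations $\lambda^f_i=\nu^f_1\,\indic[i=1]+\sum_j\lambda^f_j P^f_{j,i}$ are triangular because the routing graph is a DAG, hence have the unique solution $\lambda^f_i=\nu^f_1\,\tilde p^f_i$, where $\tilde p^f_i$ is the probability of ever reaching phase $i$ from the first phase; the recursion characterizing $\tilde p^f_i$ is the one derived in Appendix~\ref{app:prob_reaching_queue}. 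Consequently the traffic intensity of phase queue $i$ of flow $f$ is $\rho^f_i=\lambda^f_i/\mu^f_i=\nu^f_1\tilde p^f_i/\mu^f_i$, the quantity in~\eqref{eq:cox_stationary_full}. I would also record the flow-conservation identity $\sum_i\lambda^f_i p^f_{i,0}=\nu^f_1$, where $p^f_{i,0}$ is the probability a session leaves the network upon completing phase $i$: it says throughput out of flow $f$ equals throughput in, and it holds because every session has finitely many attempt and calibration periods and therefore terminates almost surely.

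Next I would check that the normalized product measure in~\eqref{eq:cox_stationary_full} satisfies, for every flow $f$ and phase $i$, the partial balance equation equating the rate of leaving $\vb{x}$ through a service completion at queue $(f,i)$ with the rate of entering $\vb{x}$ through an arrival to queue $(f,i)$: $\pi(\vb{x})\mu^f_i x^f_i$ equals $\pi(\vb{x}-\vb{e}^f_1)\nu^f_1$ when $i$ is the first phase, and $\sum_j\pi(\vb{x}-\vb{e}^f_i+\vb{e}^f_j)\mu^f_j(x^f_j+1)P^f_{j,i}$ otherwise. Substituting the product form and using $\pi(\vb{x}-\vb{e}^f_i)=\pi(\vb{x})\,x^f_i/\rho^f_i$ together with $\rho^f_j\mu^f_j=\lambda^f_j$, each such equation collapses to the corresponding traffic equation and hence holds. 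Truncation does not spoil this: by coordinate-convexity of $\mathcal{S}$, whenever $\vb{x}\in\mathcal{S}$ every state appearing in these equations (obtained from $\vb{x}$ by removing one job, or by moving one job within a flow) is again in $\mathcal{S}$, and the only missing transitions -- arrivals into $\vb{x}$ from a source outside $\mathcal{S}$ -- are impossible because such a source would have strictly more jobs of some flow than $\vb{x}$. Summing the partial balance equations over all $(f,i)$ then leaves exactly the discrepancy $\pi(\vb{x})\sum_{f:\,\vb{x}+\vb{e}^{f,A}_{1,1}\in\mathcal{S}}\nu^f_1=\sum_{f,i:\,\vb{x}+\vb{e}^f_i\in\mathcal{S}}\pi(\vb{x}+\vb{e}^f_i)\mu^f_i(x^f_i+1)p^f_{i,0}$; since adding a job to any phase of $f$ changes the occupancy counts identically, the event $\vb{x}+\vb{e}^f_i\in\mathcal{S}$ depends only on $f$, so the right-hand side factors as $\pi(\vb{x})\sum_{f:\,\vb{x}+\vb{e}^{f,A}_{1,1}\in\mathcal{S}}\sum_i\lambda^f_i p^f_{i,0}$, which equals the left-hand side by the conservation identity. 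Hence global balance holds on $\mathcal{S}$; as $\mathcal{S}$ is finite and the chain irreducible on it, this stationary measure is unique, and normalizing it produces the bracketed constant in~\eqref{eq:cox_stationary_full}.

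I anticipate the main obstacle to be the boundary bookkeeping in the last two steps: one must make sure that replacing the open network's arrival rates by their blocked versions leaves every partial balance equation untouched, and that the summed equations still collapse to global balance with the blocking term appearing symmetrically on both sides. Coordinate-convexity of $\mathcal{S}$, together with the fact that only external arrivals can ever be blocked, is precisely what makes this go through, but tracking which transitions are suppressed -- and checking that they are suppressed on both sides of each balance relation -- is the delicate part. By comparison, writing down $P^f$ and solving its traffic equations to obtain $\lambda^f_i=\nu^f_1\tilde p^f_i$ for the three-way (Coxian advance / next-period / success) routing is routine and is relegated to Appendix~\ref{app:prob_reaching_queue}; the analogy with~\cite{bonald2006erlang} supplies the overall template, the genuinely new ingredients being the calibration phases and the success-departure branch.
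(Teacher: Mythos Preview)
Your proposal is correct and follows essentially the same local/partial balance approach as the paper: verifying that the product-form measure satisfies station balance at each phase queue (the paper's $B_i^f=B_i'^{\,f}$) together with the external-flow balance (the paper's $A=A'$), which jointly yield global balance. Your treatment is in fact somewhat more careful than the paper's in spelling out why truncation to $\mathcal{S}$ is benign---namely that only external arrivals can leave $\mathcal{S}$, that coordinate-convexity keeps all states appearing in the station-balance equations inside $\mathcal{S}$, and that the event $\vb{x}+\vb{e}^f_i\in\mathcal{S}$ depends only on $f$ so that the external balance factors through the flow-conservation identity---points the paper uses implicitly but does not articulate.
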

\begin{proof}
To obtain the stationary probability $\pi(\vb{x})$ of the system with state space $\mathcal{S}$ as defined in (\ref{eq:restr_state_space}), we apply the local balance approach. Henceforth, we adopt the convention that $\pi(\vb{x})=0$ for any $\vb{x}\notin \mathcal{S}$. Consider any state $\vb{x}\in\mathcal{S}$; we have that
\begin{itemize}
    \item[-] The rate of leaving $\vb{x}$ due to outside arrival coming into the system is given by
    \begin{align}
        %A = \pi(\vb{x})\sum\limits_{f\in\mathcal{F}: q^f(\vb{x})<c^f_{\min}}\nu^f_1,
        A = \pi(\vb{x})\sum\limits_{f\in\mathcal{F}}\nu^f_1(\vb{x});
    \end{align}
    \item[-] The rate of entering $\vb{x}$ due to job departure to outside of the queueing network is given by
    \begin{align}
        %A^{\prime} = \sum\limits_{f\in\mathcal{F}}\sum\limits_{i\in \mathcal{I}_{\mathcal{C}^f}}\pi(\vb{x}+\vb{e}^f_i)(x^f_i+1)p^f_i\mu^f_i,
        A^{\prime} = \sum\limits_{f\in\mathcal{F}}\sum\limits_{i=1}^{L^f}\pi(\vb{x}+\vb{e}^f_i)(x^f_i+1)p^f_i\mu^f_i,
    \end{align}
    where we take advantage of our convention that $\pi(\vb{x}+\vb{e}^f_i)=0$ if $\vb{x}+\vb{e}^f_i \notin \mathcal{S}$, as well as use the fact that $p_i^f=0$ if $i$ corresponds to a calibration phase of a session belonging to flow $f$.
    %where the second sum is indexed over elements of the set $\mathcal{I}_{\mathcal{C}^f}$ because departures to the outside can only occur from attempt phases;
    \item[-] The rate of leaving $\vb{x}$ due to departure from queue $i$ of flow $f$ is given by
    \begin{align}
        B^f_i = \pi(\vb{x})x^f_i\mu^f_i;
    \end{align}
    \item[-] The rate of entering $\vb{x}$ due to an arrival at queue $i$ of flow $f$ is given by
    \begin{align}
        B^{\prime f}_i = \pi(\vb{x}-\vb{e}^f_i)\nu^f_i(\vb{x}-\vb{e}^f_i)+\sum\limits_{j\neq i}\pi(\vb{x}+\vb{e}^f_j-\vb{e}^f_i)(x^f_j+1)p^f_{j,i}\mu^f_j.
    \end{align}
\end{itemize}
To obtain the stationary distribution of the system, we solve the following equations and then show that this solution is in fact the stationary distribution of the system, 
%if $\vb{x}+\vb{e}^f_i \in\mathcal{S}$, then
\begin{align}
    \pi(\vb{x}+\vb{e}^f_i) = \frac{\lambda_i^f(\vb{x})}{\mu_i^f(x_i^f+1)}\pi(\vb{x}) =
    \begin{cases}
        \frac{\rho^f_i}{x_i^f+1}\pi(\vb{x}), &\text{if } \vb{x}+\vb{e}^f_i \in\mathcal{S},\\
        0, &\text{else}
    \end{cases}\label{eq:pi_guess1}
\end{align}
where ${\rho_i^f\equiv \lambda^f_{i}/\mu^f_{i}}$.
Substituting this expression into $A=A^{\prime}$ and simplifying, we obtain
\begin{align}
    %\sum\limits_{f\in\mathcal{F}}\nu^f_1(\vb{x}) &= \sum\limits_{f\in\mathcal{F}}\sum\limits_{i\in \mathcal{I}_{\mathcal{C}^f}}\lambda^f_i(\vb{x})p^f_i.
    \sum\limits_{f\in\mathcal{F}}\nu^f_1(\vb{x}) &= \sum\limits_{f\in\mathcal{F}}\sum\limits_{i=1}^{L^f}\lambda^f_i(\vb{x})p^f_i.
    \label{eq:12}
\end{align}
The expression above is the traffic conservation equation: the aggregate arrival rate into the queueing network while in state $\vb{x}$ equals the overall departure rate from it.
Continuing the application of local balance, we require that for all $i$ and $f$, $B^f_i = B^{\prime f}_i$. Note that if $\vb{x}=\vb{0}$, then both $B_i^f$ and $B_i^{\prime f}$ are zero, since we cannot leave this state due to a departure from any queue, and we cannot enter this state due to an arrival at any queue, respectively, since there are no active sessions while in state $\vb{0}$. For any other $\vb{x}\in \mathcal{S}$, we require
\begin{align}
\pi(\vb{x})x^f_i\mu^f_i = \pi(\vb{x}-\vb{e}^f_i)\nu^f_i(\vb{x}-\vb{e}^f_i)+\sum\limits_{j\neq i}\pi(\vb{x}+\vb{e}^f_j-\vb{e}^f_i)(x^f_j+1)p^f_{j,i}\mu^f_j.
\label{eq:Bi}
\end{align}

%Before continuing the analysis, we first check the equation above for all possible scenarios while omitting $f$ for cleanliness. First, suppose that $\vb{x}-\vb{e}_i\notin \mathcal{S}$. This means that $\pi(\vb{x}-\vb{e}_i)=0$. By coordinate convexity of the admissible state space, this also means that $\pi(\vb{x})=0$ and $\pi(\vb{x}-\vb{e}_i+\vb{e}_j) =0$, $\forall j$. The equation thus still holds in this scenario. Next, suppose that $\vb{x}-\vb{e}_i\in \mathcal{S}$, but $\vb{x}\notin \mathcal{S}$. The latter means that $\pi(\vb{x})=0$, and also that $\nu_i(\vb{x}-\vb{e}_i)=0$. Further, since a job going from queue $j$ to queue $i$ while in state $\vb{x}-\vb{e}_i+\vb{e}_j$ would cause a transition to state $\vb{x}$, we have that $p_{j,i}(\vb{x}-\vb{e}_i+\vb{e}_j)=0$, too. Thus, the equality holds in this case as well. 

Using (\ref{eq:pi_guess1}) with (\ref{eq:Bi}) and simplifying, we obtain
\begin{align}
    %\pi(\vb{x}-\vb{e}^f_i)\frac{\lambda_i^f(\vb{x}-\vb{e}_i^f)}{\mu_i^fx_i^f}x^f_i\mu^f_i &= \pi(\vb{x}-\vb{e}^f_i)\nu^f_i(\vb{x}-\vb{e}^f_i)+\sum\limits_{j\neq i}\pi(\vb{x}-\vb{e}^f_i)\frac{\lambda_j^f(\vb{x}-\vb{e}^f_i)}{\mu_j^f(x_j^f+1)}(x^f_j+1)p^f_{j,i}\mu^f_j,\\
    \pi(\vb{x}-\vb{e}^f_i)\lambda_i^f(\vb{x}-\vb{e}_i^f) &= \pi(\vb{x}-\vb{e}^f_i)\nu^f_i(\vb{x}-\vb{e}^f_i)+\sum\limits_{j\neq i}\pi(\vb{x}-\vb{e}^f_i)\lambda_j^f(\vb{x}-\vb{e}^f_i)p^f_{j,i}.
\end{align}
If $\vb{x}-\vb{e}^f_i \notin \mathcal{S}$, then $\pi(\vb{x}-\vb{e}^f_i)=0$ and the equation above holds; else we obtain
\begin{align}
    \lambda_i^f(\vb{x}-\vb{e}_i^f) &= \nu^f_i(\vb{x}-\vb{e}^f_i)+\sum\limits_{j\neq i}\lambda_j^f(\vb{x}-\vb{e}^f_i)p^f_{j,i}.
\end{align}
%\begin{align}
%    \pi(\vb{x})x^f_i\mu^f_i &= \pi(\vb{x})\frac{x^f_i}{\rho^f_i}\nu^f_i+\sum\limits_{j\neq i}\pi(\vb{x})\frac{x^f_i}{\rho^f_i}\frac{\rho^f_j}{(x^f_j+1)}(x^f_j+1)p^f_{j,i}\mu^f_j,\\
    %\lambda^f_i(\vb{x}) &= \nu^f_i(\vb{x})+\sum\limits_{j\neq i}\lambda^f_j(\vb{x})p^f_{j,i}.
%\end{align}
We have thus recovered the traffic equations for each individual queue in the network: the total arrival rate into queue $i$ of flow $f$ equals the sum of the exogenous arrival rates and the arrival rates from other queues. Thus our solution satisfying (\ref{eq:pi_guess1}) represents the stationary distribution.

Next, from (\ref{eq:pi_guess1}), it follows that
\begin{align}
    \pi(\vb{x}) = D\prod\limits_{f\in\mathcal{F}}\prod\limits_{i=1}^{L^f}\frac{(\rho^f_i)^{x^f_i}}{x^f_i!},
    \label{eq:pi_guess2}
\end{align}
where $D=\pi(\vb{0})$. Using the fact that $\sum\limits_{\vb{x}\in\mathcal{S}}\pi(\vb{x})=1$, we can solve for $D$, which results in the stationary distribution (\ref{eq:cox_stationary_full}),
\if{false}
\begin{align}
    \pi(\vb{x}) = \left\{\left(\sum\limits_{\vb{y}\in\mathcal{S}}\prod\limits_{f\in\mathcal{F}}\prod\limits_{i=1}^{L^f}\frac{(\rho^f_i)^{y^f_i}}{y^f_i!}\right)^{-1}\right\}\prod\limits_{f\in\mathcal{F}}\prod\limits_{i=1}^{L^f}\frac{(\rho^f_i)^{x^f_i}}{x^f_i!},
\end{align}
\fi
where the term inside the curly brackets is $\pi(\vb{0})$. Note from (\ref{eq:cox_stationary_full}) that if a subset of flows $\mathcal{F}^{\prime}\subseteq \mathcal{F}$ never wish to generate requests -- i.e., $\nu_1^f(\vb{x})=0, \forall \vb{x}\in\mathcal{S}, f\in\mathcal{F}^{\prime}$ -- then these flows' traffic intensities are equal to $0$. If we take the convention that $0^0\equiv 1$, then (\ref{eq:cox_stationary_full}) supports this setting by considering only flows from $\mathcal{F}\setminus \mathcal{F}^{\prime}$ when $\vb{x}\in\mathcal{S}$, and
causing $\pi(\vb{x})$ to equal zero whenever $x_i^f\neq 0$ for $f\in\mathcal{F}^{\prime}$.
\end{proof}

Having determined the stationary distribution, we are now ready to derive the blocking probability for a request from a given flow. 
\begin{theorem}
\label{thm:blocking_flow}
    For the system with single EPR pair generation operating in strict reservation mode, the probability that an arriving request belonging to flow $i\in\{1,\dots,F\}$ is blocked is given by 
    \begin{align*}
    \overline{\pi}_i(C) = 
    \left(\sum\limits_{\vb{q}\in\mathcal{Q}^{\prime}(i)}\prod\limits_{j=1}^F\frac{\left(\rho^{f_j}\right)^{q_j}}{q_j!}\right)^{-1}\sum\limits_{\vb{q}\in\mathcal{Q}(C)\bigcap \mathcal{Q}^{\prime}(i)}\prod\limits_{j=1}^F\frac{\left(\rho^{f_j}\right)^{q_j}}{q_j!}, 
    \end{align*}
    where $\vb{q} = [q_1,\dots,q_F]$ represents the number of active sessions $q_i$ from each flow $f_i$,
    \begin{align*}
    \quad\mathcal{Q}(h) &\coloneqq \left\{\vb{q}=[q_1,\dots,q_F] \in \mathbb{Z}_+^F: \sum\limits_{i=1}^Fq_i=h,~ \sum\limits_{i: n_k\in f_i}q_i\leq c_k,\forall k\in\{1,\dots,K\}\right\},\text{ and} \\ %\label{eq:set_Q}\\
    \mathcal{Q}^{\prime}(i) &\coloneqq \left\{\vb{q}=[q_1,\dots,q_F] \in \mathbb{Z}_+^F: \sum\limits_{j: n_k\in f_j}q_j\leq c_k,\forall n_k\notin f_i,~ \sum\limits_{j:n_l\in f_j}q_j < c_l, n_l\in f_i\right\}, %\label{eq:set_Qprime}
    \end{align*}
    where $\mathbb{Z}_+$ indicates the set of non-negative integers.
\end{theorem}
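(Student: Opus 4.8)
The plan is to obtain $\overline{\pi}_i(C)$ by combining the PASTA property of the Poisson session arrivals with the product-form stationary distribution $\pi$ just established, after \emph{aggregating} the per-phase occupancy vector $\vb{x}$ down to the per-flow session-count vector $\vb{q}=(q^{f_1}(\vb{x}^{f_1}),\dots,q^{f_F}(\vb{x}^{f_F}))$ of (\ref{eq:qfs}); this parallels the aggregation step of \cite{bonald2006erlang}. Since requests (sessions) arrive as a Poisson process, an arriving flow-$f_i$ request sees the stationary state, so its blocking probability is a stationary probability of an appropriate ``saturated'' set of states, and the only real work is to show that this set, together with the normalization, collapses to sums over the aggregates $\vb{q}$ with Erlang-type weights $\prod_j (\rho^{f_j})^{q_j}/q_j!$.

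First I would compute the aggregated stationary law. The key point is that $\mathcal{S}$ in (\ref{eq:restr_state_space}) constrains $\vb{x}$ only through the aggregates $q^f(\vb{x}^f)$, so summing $\pi(\vb{x})$ over all $\vb{x}$ with prescribed counts $\vb{q}$ amounts, for each flow $f$ separately, to an \emph{unconstrained} sum over $\{x^f_1+\dots+x^f_{L^f}=q_f\}$. The multinomial theorem then gives $\sum_{x^f_1+\dots+x^f_{L^f}=q_f}\prod_{i=1}^{L^f}(\rho^f_i)^{x^f_i}/x^f_i! = (\rho^f)^{q_f}/q_f!$, where $\rho^f\equiv\sum_{i=1}^{L^f}\rho^f_i$ is the aggregate traffic intensity of flow $f$ (session arrival rate times mean session holding time). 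Hence the aggregated stationary distribution is $\tilde{\pi}(\vb{q})=\pi(\vb{0})\prod_{j=1}^{F}(\rho^{f_j})^{q_j}/q_j!$, supported on the aggregate-admissible set $\widehat{\mathcal{Q}}\equiv\bigcup_{h=0}^{C}\mathcal{Q}(h)$, with the same $\pi(\vb{0})$ serving as normalizing constant.

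Next I would translate ``blocked'' into a condition on $\vb{q}$. A flow-$f_i$ request can be issued at all only when both of its participating nodes have a free communication qubit, i.e.\ only when $\vb{q}\in\mathcal{Q}'(i)$; conditioned on that, the request is blocked precisely when admitting one further session would overflow the EGS, i.e.\ when $\sum_j q_j=C$, equivalently $\vb{q}\in\mathcal{Q}(C)\cap\mathcal{Q}'(i)$. Applying PASTA to the issued flow-$f_i$ requests, $\overline{\pi}_i(C)$ is the $\tilde{\pi}$-probability of $\mathcal{Q}(C)\cap\mathcal{Q}'(i)$ conditioned on $\mathcal{Q}'(i)$; writing this conditional probability as the ratio of the corresponding $\tilde{\pi}$-sums, cancelling the common factor $\pi(\vb{0})$, and using that $\tilde{\pi}$ vanishes off $\widehat{\mathcal{Q}}$ while $\mathcal{Q}(C)\subseteq\widehat{\mathcal{Q}}$, yields precisely the stated expression.

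I expect the main obstacle to be the aggregation step: one has to argue that no terms are dropped or ``clipped'' when summing the per-phase product form over a fixed aggregate (this is exactly where it is used that $\mathcal{S}$ depends on $\vb{x}$ only through the $q^f$'s), and that the multinomial identity applies term by term, so that $\tilde{\pi}$ is genuinely a product-form Erlang law in $\vb{q}$. A secondary subtlety is the conditioning bookkeeping: making precise which aggregate states make a flow-$f_i$ request issuable rather than blocked, and checking that restricting the $\mathcal{Q}'(i)$-sum to admissible states is harmless, so that the denominator may be written simply as $\sum_{\vb{q}\in\mathcal{Q}'(i)}\prod_{j=1}^{F}(\rho^{f_j})^{q_j}/q_j!$.
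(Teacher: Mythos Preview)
Your proposal is correct and follows essentially the same approach as the paper: apply PASTA to the Poisson session arrivals, aggregate the per-phase product-form $\pi(\vb{x})$ to a per-flow law $\tilde{\pi}(\vb{q})$ via the multinomial theorem (exploiting that $\mathcal{S}$ depends on $\vb{x}$ only through the $q^f(\vb{x}^f)$'s), and express the blocking probability as the conditional probability $\tilde{\pi}(\mathcal{Q}(C)\cap\mathcal{Q}'(i))/\tilde{\pi}(\mathcal{Q}'(i))$ with the normalizing constant cancelling. You also correctly flag the two places requiring care---the clean aggregation and the conditioning bookkeeping---which are precisely the points the paper works through.
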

\begin{proof}
To begin, let $\mathrm{P}(\vb{q})$ denote the probability that EGS resources are occupied according to $\vb{q}$.
Consider the following two events: $\Omega_1(h)$ is the event that $h$ EGS resources are occupied, and $\Omega_2(i)$ is the event that flow $f_i$ has available communication qubits.  
By the PASTA (Poisson Arrivals See Time Averages) property, we can write the probability that an arriving request of flow $f_i$ sees $h$ occupied resources (conditioned on the flow having enough qubits to generate a request), i.e., $\mathrm{P}(\Omega_1(h)|\Omega_2(i))$, as
\begin{align}
    \overline{\pi}_i(h) &= %\frac{\mathrm{P}(\Omega_1(h)\cap \Omega_2(i))}{\mathrm{P}(\Omega_2(i))}= \frac{1}{\sum\limits_{\vb{q}\in\mathcal{Q}^\prime(i)}\sum\limits_{\vb{x}:\vb{q}(\vb{x})=\vb{q}}\pi(\vb{x})}\sum\limits_{\vb{q}\in \mathcal{Q}(h)\cap \mathcal{Q}^{\prime}(i)}~\sum\limits_{\vb{x}:\vb{q}(\vb{x})=\vb{q}}\pi(\vb{x}),
    \frac{\mathrm{P}(\Omega_1(h)\cap \Omega_2(i))}{\mathrm{P}(\Omega_2(i))}= \left(\sum\limits_{\vb{q}\in\mathcal{Q}^\prime(i)}\mathrm{P}(\vb{q})\right)^{-1}\sum\limits_{\vb{q}\in \mathcal{Q}(h)\cap \mathcal{Q}^{\prime}(i)}\mathrm{P}(\vb{q}).\label{eq:prob_h}
    %,
    %~\text{where}
    %\label{eq:prob_h}\\
    %\mathcal{Q}(h) &\coloneqq \left\{\vb{q}=[q_1,\dots,q_F] \in \mathbb{Z}_+^F: \sum\limits_{i=1}^Fq_i=h,~ \sum\limits_{i: n_k\in f_i}q_i\leq c_k,\forall k\in\{1,\dots,K\}\right\},\text{ and}\label{eq:set_Q}\\
    %\mathcal{Q}^{\prime}(i) &\coloneqq \left\{\vb{q}=[q_1,\dots,q_F] \in \mathbb{Z}_+^F: \sum\limits_{j: n_k\in f_j}q_j\leq c_k,\forall n_k\notin f_i,~ \sum\limits_{j:n_l\in f_j}q_j < c_l, n_l\in f_i\right\}.\label{eq:set_Qprime}
\end{align}
%\gv{In blocking probability: probability that a request gets blocked conditioned on the event that it can be generated (enough communication qubits available for associated flow): $P(A|B) = P(A\bigcap B)/P(B)$, where B = event that communication qubits available for request in question, A = event that all resources at EGS are occupied.}
The set $\mathcal{Q}(h)$ contains all possible combinations of active sessions such that the total number of active sessions is exactly $h$, and communication qubit constraints are not violated. The set $\mathcal{Q}^{\prime}(i)$ contains all combinations of active sessions such that communication qubit constraints are not violated, with the additional constraint that nodes belonging to flow $f_i$ have at least one unoccupied communication qubit each. 
%The intersection of $\mathcal{Q}(h)$ with the set $\mathcal{Q}^{\prime}(i)$ ensures that in the resulting set there exists at least one flow with available communication qubits. 
Here we use the fact that flow $f_i$ would not be able to initiate a session if any node associated with $f_i$ does not have an unoccupied communication qubit.
\if{false}
\begin{align}
    \overline{\pi}(h) = \sum\limits_{q_1=H_1(h)}^{G_1(h)}\quad\sum\limits_{q_2=H_2(h,q_1)}^{G_2(h,q_1)}\cdots\sum\limits_{q_F=H_F(h,q_1,\dots,q_{F-1})}^{G_F(h,q_1,\dots,q_{F-1})}~\sum\limits_{\vb{x}:\vb{q}(\vb{x})=\vb{q}}\pi(\vb{x}),
    \label{eq:prob_h}
\end{align}
where $\vb{q}$ is a vector containing $q^f(\vb{x})$ values for $f\in\mathcal{F}$, and $\vb{q}=[q_1,\dots,q_F]$. 
Above, each upper sum limit
\begin{align}
    G_k(h,q_1,\dots,q_{k-1}) \coloneqq \min\left(h-\sum\limits_{i=1}^{k-1}   q_i,~\min\limits_{n\in f_k}\left(c_{n}-\sum\limits_{i=1}^{k-1}q_i\mathds{1}\left\{n\in f_i\right\}\right)\right)
\end{align}
denotes the maximum number of sessions that can be active for flow $f_k$ given that flows $f_1,\dots,f_{k-1}$ have $q_1,\dots,q_{k-1}$ active sessions and that the total number of sessions should add up to $h$; and $\mathds{1}$ is the indicator function. Each lower sum limit of (\ref{eq:prob_h})
\begin{align}
    H_k(h,q_1,\dots,q_{k-1}) \coloneqq \max\left(0,h-\sum\limits_{i=1}^{k-1}q_i-\sum\limits_{j=k+1}^F \min\limits_{n\in f_j} \left(c_n-\sum\limits_{l=1}^{k-1}q_l\mathds{1}\{n\in f_l\}\right)\right)
\end{align}
denotes the minimum number of sessions that flow $f_k$ must contribute in order for it to be possible that the total number of sessions adds up to $h$.
Throughout, we adopt the convention that $\sum\limits_{i=a}^b x_i=0$ when $b<a$.
\fi

%The last sum of (\ref{eq:prob_h}) can be expanded as follows:
The probability that flows occupy EGS resources according to $\vb{q}$ is
\begin{align}
\mathrm{P}(\vb{q}) = 
    \sum\limits_{\vb{x}:\vb{q}(\vb{x})=\vb{q}}\pi(\vb{x})=
    %\sum\limits_{j=1}^F
    \sum\limits_{\vb{x}^{f_1}:q^{f_1}(\vb{x}^{f_1})=q_1}\cdots
    \sum\limits_{\vb{x}^{f_F}:q^{f_F}(\vb{x}^{f_F})=q_F}D\prod\limits_{f\in\mathcal{F}}\prod\limits_{i=1}^{L^f}\frac{(\rho^f_i)^{x^f_i}}{x^f_i!},
    \label{eq:inner_sum}
\end{align}
where $\vb{q}(\vb{x})$ is a vector containing $q^f(\vb{x}^f)$ values for $f\in\mathcal{F}$. 
%\gv{Call the thing above $\pi(\vb{q})$, then write derivations above in terms of this -- should be much simpler.}
Recursive application of the multinomial theorem on (\ref{eq:inner_sum}) results in
\begin{align}
    %\sum\limits_{\vb{x}:\vb{q}(\vb{x})=\vb{q}}\pi(\vb{x}) = 
    P(\vb{q})=
    D\prod\limits_{j=1}^F\frac{1}{q_j!}\left(\sum\limits_{i=1}^{L^{f_j}}\rho_i^{f_j}\right)^{q_j}
    \equiv D\prod\limits_{j=1}^F\frac{\left(\rho^{f_j}\right)^{q_j}}{q_j!},
    \label{eq:inner_sum_eval}
\end{align}
where $\rho^{f_j}\coloneqq \sum\limits_{i=1}^{L^{f_j}}\rho_i^{f_j}$.
%is the overall traffic intensity of flow $f_j$.
The constant $D$ can be simplified similarly, after the additional step of rewriting the sum over $\vb{y}\in\mathcal{S}$ in terms of the total number of occupied EGS resources. Although $D$ cancels out within the blocking probability, we present it in simplified form ($\tilde{D}$) for completeness:
\begin{align}
    \tilde{D}^{-1} = \sum\limits_{\vb{y}\in\mathcal{S}}\prod\limits_{f\in\mathcal{F}}\prod\limits_{i=1}^{L^f}\frac{(\rho^f_i)^{y^f_i}}{y^f_i!} = \sum\limits_{h=0}^{C}\sum\limits_{\vb{q}\in\mathcal{Q}(h)}\prod\limits_{j=1}^F\frac{(\rho^{f_j})^{q_j}}{q_j!}.
\end{align}
\if{false}
\begin{align}
    \tilde{D}^{-1} = \sum\limits_{\vb{y}\in\mathcal{S}}\prod\limits_{f\in\mathcal{F}}\prod\limits_{i=1}^{L^f}\frac{(\rho^f_i)^{y^f_i}}{y^f_i!} = \sum\limits_{h=0}^{C}\sum\limits_{q_1=0}^{G_1(h)}\cdots \sum\limits_{q_F}^{G_F(h,f_1,\dots,f_F)}\prod\limits_{j=1}^F\frac{(\rho^{f_j})^{q_j}}{q_j!} \sg{\mathds{1}\left\{\sum_{i=1}^{F}  q_i = h \right\}}.
\end{align}
\fi

Using (\ref{eq:inner_sum_eval}) and (\ref{eq:prob_h}), we can obtain the blocking probabilities for the system -- that is, the probability that an arriving request belonging to flow $f_i$ sees $C$ resources occupied is given by
\begin{align}
    \overline{\pi}_i(C) = 
    %\frac{\tilde{D}}{\sum\limits_{\vb{q}\in\mathcal{Q}^{\prime}(i)}\sum\limits_{\vb{x}:\vb{q}(\vb{x})=\vb{q}}\pi(\vb{x})}\sum\limits_{\vb{q}\in\mathcal{Q}(C)\bigcap \mathcal{Q}^{\prime}(i)}\prod\limits_{j=1}^F\frac{\left(\rho^{f_j}\right)^{q_j}}{q_j!},
    \left(\sum\limits_{\vb{q}\in\mathcal{Q}^{\prime}(i)}\prod\limits_{j=1}^F\frac{\left(\rho^{f_j}\right)^{q_j}}{q_j!}\right)^{-1}\sum\limits_{\vb{q}\in\mathcal{Q}(C)\bigcap \mathcal{Q}^{\prime}(i)}\prod\limits_{j=1}^F\frac{\left(\rho^{f_j}\right)^{q_j}}{q_j!},
    \label{eq:prob_C}
\end{align}
\if{false}
\begin{align}
    \overline{\pi}(C) = \tilde{D}\sum\limits_{q_1=0}^{G(C)}\quad\sum\limits_{q_2=0}^{G_2(C,q_1)}\cdots\sum\limits_{q_F=0}^{G_F(C,q_1,\dots,q_{F-1})}\prod\limits_{j=1}^F\frac{\left(\rho^{f_j}\right)^{q_j}}{q_j!}\sg{\mathds{1}\left\{\sum_{i=1}^{F}  q_i = C \right\}},
    \label{eq:prob_C}
\end{align}
\fi
which is a quantity that can be computed numerically.
\end{proof}

\begin{remark}
Let us take a closer look at the overall traffic intensity of a flow, $\rho^f$, for a given $f\in \mathcal{F}$:
\begin{align}
    \rho^f = \sum\limits_{i=1}^{L^f}\rho_i^f = \sum\limits_{i=1}^{L^f}\frac{\lambda_i^f}{\mu_i^f} = \sum\limits_{i=1}^{L^f}\frac{\nu_1^f\tilde{p}_i^f}{\mu_i^f} = \nu_1^f\sum\limits_{i=1}^{L^f}\frac{\tilde{p}_i^f}{\mu_i^f}.
\end{align}
Noting that the sum represents the mean duration of a type $f$ session, we see that $\rho^f$ is the overall traffic intensity of flow $f$, i.e., it is the product of mean arrival rate and mean session duration.
\end{remark}

\begin{remark}
For our model the average blocking probability for a given flow $f$ is expressed as a conditional probability, as entanglement requests are triggered only when all the associated nodes have sufficient communication qubits. In \cite{bonald2006erlang}, conditioning is not necessary since flows always trigger requests according to an unrestricted Poisson process. Further, in our model, blocking probabilities can be different for different flows whereas all the flows have the same blocking probabilities in \cite{bonald2006erlang}.
\end{remark}

Using Theorem~\ref{thm:blocking_flow} we can derive an expression for the average blocking probability for an incoming request by taking an expectation over flow-type of the incoming request. For the following, let $\overline{\pi}_{f_i}(C) \equiv \overline{\pi}_{i}(C)$, where flow $f_i\in\mathcal{F}$ corresponds to the $i$th flow label in $\{1,\dots,F\}$.
\begin{corollary}
The average blocking probability of an incoming entanglement request, denoted by $\overline{\pi}(C)$, is given as
\begin{equation}
\overline{\pi}(C)=\sum_{f\in\mathcal{F}}\left\{\frac{\mathrm{P}(\mathcal{Q}'(f))\nu_1^f}{\sum_{g\in\mathcal{F}}\mathrm{P}(\mathcal{Q}'(g))\nu_1^g}\right\}\overline{\pi}_f(C).
\label{eq:pi_bar_avg}
\end{equation}
\end{corollary}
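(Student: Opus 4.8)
The plan is to read $\overline{\pi}(C)$ as the long-run fraction of \emph{issued} entanglement requests that are blocked, and to evaluate it by a rate-conservation (renewal--reward) argument combined with the PASTA property already invoked in the proof of Theorem~\ref{thm:blocking_flow}.

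First I would fix the arrival semantics. A type-$f$ request is issued only when every node participating in $f$ has a free communication qubit, i.e.\ when the system sits in a state whose aggregate occupancy vector lies in $\mathcal{Q}'(f)$ (equivalently, when the event $\Omega_2(f)$ of the proof of Theorem~\ref{thm:blocking_flow} holds). Since the underlying would-be arrival stream of flow $f$ is Poisson of rate $\nu_1^f$ and is thinned by the indicator of $\Omega_2(f)$, the long-run rate at which type-$f$ requests are actually issued equals $\nu_1^f\,\mathrm{P}(\Omega_2(f)) = \nu_1^f\,\mathrm{P}(\mathcal{Q}'(f))$, where $\mathrm{P}(\mathcal{Q}'(f)) = \sum_{\vb{q}\in\mathcal{Q}'(f)}\mathrm{P}(\vb{q})$ is obtained from the stationary distribution~(\ref{eq:cox_stationary_full}) through~(\ref{eq:inner_sum_eval}). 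Summing over flows, the total issuance rate is $\Lambda \equiv \sum_{g\in\mathcal{F}}\nu_1^g\,\mathrm{P}(\mathcal{Q}'(g))$.

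Second I would count blocked requests. By PASTA, a type-$f$ request issued at a typical time sees the stationary state, and it is blocked precisely when, in addition to $\Omega_2(f)$, all $C$ EGS resources are occupied; by the definition used in~(\ref{eq:prob_h}) this happens with conditional probability $\mathrm{P}(\Omega_1(C)\mid\Omega_2(f)) = \overline{\pi}_f(C)$. Hence type-$f$ requests are blocked at long-run rate $\nu_1^f\,\mathrm{P}(\mathcal{Q}'(f))\,\overline{\pi}_f(C)$, and the overall blocked-request rate is $\sum_{f\in\mathcal{F}}\nu_1^f\,\mathrm{P}(\mathcal{Q}'(f))\,\overline{\pi}_f(C)$. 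Dividing the blocked rate by the issuance rate $\Lambda$ gives
\begin{equation*}
\overline{\pi}(C) = \frac{\sum_{f\in\mathcal{F}}\nu_1^f\,\mathrm{P}(\mathcal{Q}'(f))\,\overline{\pi}_f(C)}{\sum_{g\in\mathcal{F}}\nu_1^g\,\mathrm{P}(\mathcal{Q}'(g))} = \sum_{f\in\mathcal{F}}\left\{\frac{\mathrm{P}(\mathcal{Q}'(f))\,\nu_1^f}{\sum_{g\in\mathcal{F}}\mathrm{P}(\mathcal{Q}'(g))\,\nu_1^g}\right\}\overline{\pi}_f(C),
\end{equation*}
which is the asserted identity; equivalently, this is the law of total probability with mixing weights equal to the normalized issuance rates of the flows.

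The bookkeeping of rates is routine; the one place that warrants care is the consistent handling of the conditioning on communication-qubit availability. One must check that the weight multiplying $\overline{\pi}_f(C)$ cancels exactly the factor $\mathrm{P}(\Omega_2(f)) = \mathrm{P}(\mathcal{Q}'(f))$ sitting in the denominator of $\overline{\pi}_f(C)$, so that the numerator of $\overline{\pi}(C)$ is genuinely $\sum_f \nu_1^f\,\mathrm{P}(\Omega_1(C)\cap\Omega_2(f))$ --- the true rate of blocked requests --- and not a mismatched quantity. This is immediate from $\mathrm{P}(\Omega_1(C)\cap\Omega_2(f)) = \mathrm{P}(\mathcal{Q}'(f))\,\overline{\pi}_f(C)$, but it is precisely the step where an off-by-one-conditioning slip would most naturally occur. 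No probabilistic tool beyond PASTA and the ergodic (renewal--reward) interpretation of long-run averages is needed.
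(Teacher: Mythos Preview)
Your proposal is correct and follows essentially the same approach as the paper: both argue that an issued request is of type $f$ with probability proportional to $\nu_1^f\,\mathrm{P}(\mathcal{Q}'(f))$ and then average the per-flow blocking probabilities $\overline{\pi}_f(C)$ with these weights. Your renewal--reward framing (dividing the long-run blocked rate by the long-run issuance rate) and your explicit remark on the cancellation of the conditioning factor $\mathrm{P}(\Omega_2(f))$ spell out details the paper leaves implicit, but the underlying argument is the same.
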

\begin{proof}
For a flow $f$, entanglement requests are generated according to a Poisson process with rate $\nu_1^f$ only when all the associated nodes have communication qubits which happens with probability $\mathrm{P}(\mathcal{Q}'(f))$. If an entanglement request is triggered then the probability that it is of type~$f$ is proportional to $\mathrm{P}(\mathcal{Q}'(f))\nu_1^f$. From Theorem~\ref{thm:blocking_flow} we have the expression for the blocking probability for a type~$f$ request as $\overline{\pi}_f(C)$. Therefore the average blocking probability is given by (\ref{eq:pi_bar_avg}).
\if{false}
\begin{equation}
\overline{\pi}(C)=\sum_{f\in\mathcal{F}}\left\{\frac{\mathrm{P}(Q'(f))\nu_1^f}{\sum_{g\in\mathcal{F}}\mathrm{P}(Q'(g))\nu_1^g}\right\}\overline{\pi}_f(C).
\end{equation}
\fi
\end{proof}

%Next we show insensitivity of the blocking probability to service time distributions.
Finally, it remains to prove the insensitivity of each flow's blocking probability to the  traffic characteristics of the system beyond the flow-level traffic intensities.
%\subsubsection{Proof of Insensitivity} 
\begin{theorem}\label{thm:insensitivity}
    The blocking probabilities $\overline{\pi}_i(C)$, $i\in\{1,\dots,F\}$ for the system with single EPR pair generation and strict resource reservation depend only on the mean traffic intensities at the flow level, $\rho^{f_i}$, and are not sensitive to the underlying distributions of attempt and calibration period duration.
\end{theorem}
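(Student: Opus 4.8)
The plan follows the standard two-step route to insensitivity sketched in Section~\ref{sec:background}: first verify that, within the Coxian class already analyzed, the blocking formula depends on the attempt and calibration distributions only through their means; then lift this to arbitrary distributions by density of Coxian distributions together with continuity of the loss system in the period distributions.

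For the first step, I would inspect the formula of Theorem~\ref{thm:blocking_flow}. The index sets $\mathcal{Q}(C)$ and $\mathcal{Q}^{\prime}(i)$ are built only from the resource count $C$, the per-node qubit budgets $c_k$, and flow membership, so $\overline{\pi}_i(C)$ depends on the underlying period distributions solely through the aggregate intensities $\rho^{f_1},\dots,\rho^{f_F}$, where $\rho^{f_j}=\sum_{i=1}^{L^{f_j}}\rho_i^{f_j}$. It therefore suffices to show each $\rho^{f_j}$ is a function of the period means alone. By the Remark, $\rho^{f}=\nu_1^{f}\,\overline{S}^{f}$ with $\overline{S}^{f}=\sum_{i=1}^{L^f}\tilde p_i^f/\mu_i^f$ the mean session duration, and by linearity of expectation $\overline{S}^{f}=\sum_{P}\Pr(P\text{ is executed})\,\mathbb{E}[\text{duration of }P]$, the sum ranging over the attempt and calibration periods of a type-$f$ session. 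Whether a given period is executed is determined entirely by the sequence of entanglement-generation successes: calibration periods are interleaved deterministically and, in this service model, a session terminates early only upon a success. Under the standing assumption of Section~\ref{subsubsec:pgen}, each attempt succeeds with the \emph{fixed} probability $p_{\text{gen}}$, independent of all period durations; hence $\Pr(P\text{ is executed})$ does not depend on the duration distributions, while $\mathbb{E}[\text{duration of }P]$ is by definition the period's mean. Thus $\overline{S}^{f}$, and with it $\rho^{f}$ and every $\overline{\pi}_i(C)$, depends on the attempt and calibration period distributions only through their means — in particular it agrees with the value obtained when every period is replaced by an exponential period of the same mean.

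For the second step, I would recall that Coxian distributions are dense in the class of distributions on $[0,\infty)$ \cite{sarfozo_book} and that the stationary blocking probabilities of the loss system are continuous with respect to the period-duration distributions \cite{Whit}. Given arbitrary attempt and calibration distributions with prescribed means, choose Coxian approximants with the same means converging to them; the blocking probabilities along the approximating sequence are constant by the first step and converge to those of the limiting system, so the latter equals the common value, yielding insensitivity for general period distributions. The main obstacle is this continuity invocation: the relevant ``service requirement'' is not a single service time but the entire structured session (a random, success-dependent collection of periods of two types), so the approximation must be arranged so that only the period-level distributions vary while $p_{\text{gen}}$, the arrival rates $\nu_1^f$, the session structure, and the qubit/resource constraints are held fixed; that the resulting Markov process remains partially reversible — the abstract condition underlying insensitivity \cite{Bonald_insensitivity} — is what makes the continuity argument go through.
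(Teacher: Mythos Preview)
Your proposal is correct and follows the same two-step template the paper uses (insensitivity within the Coxian class, then lift by density/continuity), but your execution of the first step is more streamlined than the paper's. The paper proves the stronger intermediate statement that the \emph{period-level} stationary distribution in the Coxian model coincides with that of the exponential model with matching period means and matching per-period success probabilities; this is done by an explicit multinomial-theorem computation showing $\sum_i\rho_{i,j}^f=\eta_j^f$ for each period $j$, and then matching normalizing constants. You instead observe that Theorem~\ref{thm:blocking_flow} already reduces $\overline{\pi}_i(C)$ to a function of the flow-level intensities $\rho^{f}$ alone, and then argue directly that $\rho^{f}=\nu_1^{f}\,\overline{S}^{f}$ with $\overline{S}^{f}=\sum_P\Pr(P\text{ executed})\,\mathbb{E}[\text{duration of }P]$ depends only on the period means once the per-attempt success probability is held fixed (the paper makes exactly this assumption, phrased as $\zeta_j^f=P_j^f$). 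Your route is shorter and buys you a cleaner conceptual picture; the paper's route buys the finer statement that even the distribution of the number of sessions in each \emph{period} is insensitive, not just the blocking probability. Your identification of the main obstacle in the continuity step---that the approximation must vary only the period-level duration laws while holding $p_{\text{gen}}$, the session structure, and the constraints fixed---is exactly the point the paper handles by its explicit assumption on $P_j^f$, and your remark about partial reversibility is consonant with the paper's reliance on \cite{Bonald_insensitivity}.
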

To prove the insensitivity theorem we show that the stationary distribution of the Markov chain that describes the system behaviour remains the same when attempt and calibration durations have Coxian or exponential distributions, as long as the means of the distributions match. This implies that the result also holds for a general distribution. A detailed proof of this theorem is presented in Appendix~\ref{app:insensitivity}.

\subsection{Multiple EPR Pair Generation with Strict Resource Reservation Service Model}
\label{sec:mult_epr_gen_strict}
\begin{figure}[t]
    \centering
    \begin{minipage}[c]{0.6\textwidth}
    \includegraphics[width=\textwidth,trim={1cm 4cm 1.2cm 5.5cm},clip]{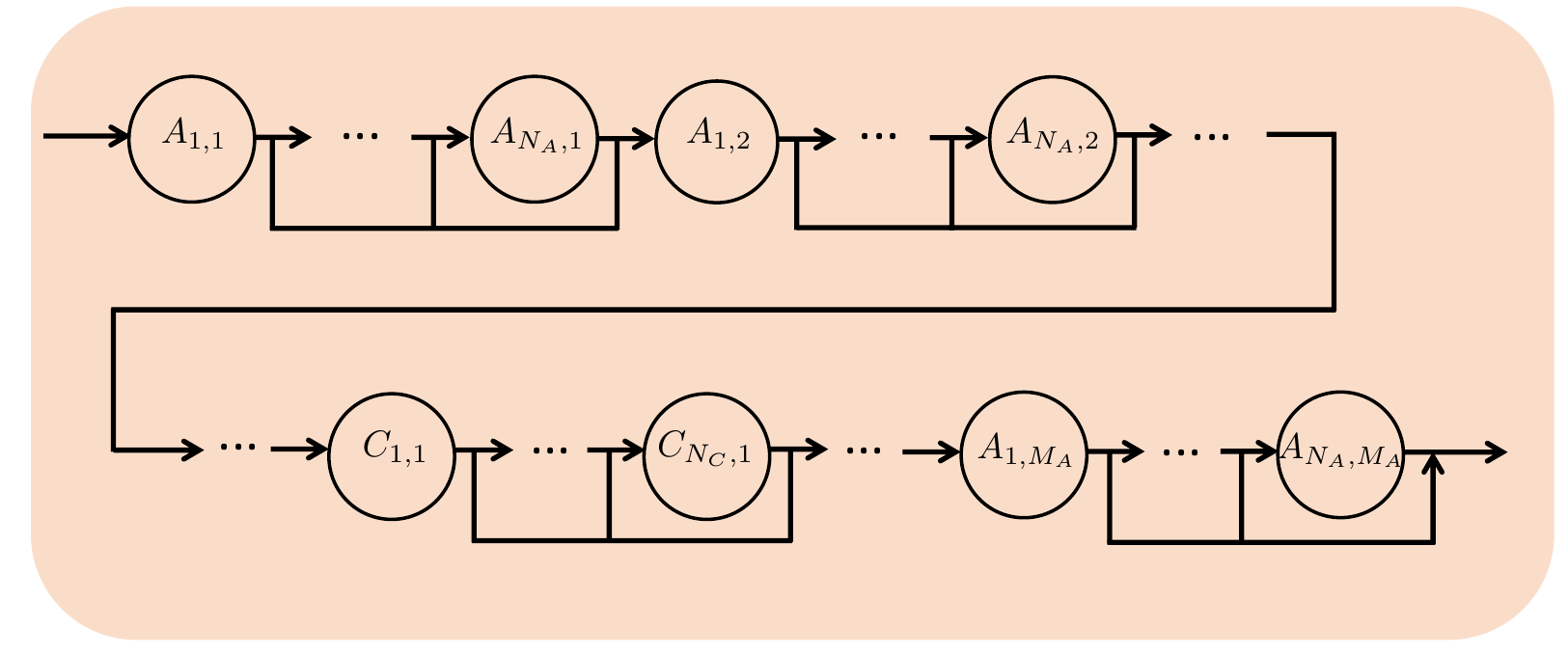}
    \end{minipage}\hfill
    \begin{minipage}[c]{0.37\textwidth}
    \caption{A session from the multiple EPR pair generation strict resource reservation service model, shown at the level of periods in Figure~\ref{fig:strict_res_periods}, decomposed into exponentially-distributed phases so as to result in Cox-distributed attempt and calibration periods.}
    \label{fig:strict_res_mult_epr}
    \end{minipage}
\end{figure}
When the generation of multiple EPR pairs is permitted in strict resource reservation mode, each session, if admitted for service, traverses all periods (albeit, not necessarily all phases). Consequently, as shown in Figure~\ref{fig:strict_res_mult_epr}, transitions to outside of the queueing network are only permitted from the final attempt period of the session. The overall system is thus very similar to that of Section~\ref{sec:single_epr_strict}, and all previous assumptions hold, with the exception that $p_i^f=0$, $\forall f\in\mathcal{F}$, whenever $i$ belongs to a phase other than that of the last period. This modification merely affects the ``overall traffic intensity'' $\rho$, but does not change the form of the stationary distribution or the blocking probability.

\subsection{Multiple EPR Pair Generation with Resource Relinquishment}
\label{subsec:jump_over}
\begin{figure}
    \centering
    \includegraphics[width=\textwidth]{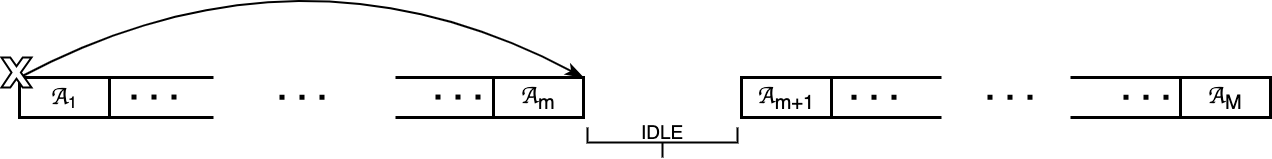}
    \caption{Service model with resource relinquishment  and jump-over blocking. A session consists of multiple ``active'' (periods which make use of an EGS resource module), denoted by {\chancery A}$_i$, $i=1,\dots,M$, interspersed with idle periods. In jump-over blocking, a blocked session goes to the beginning of the next idle period, if there is one, or terminates in case no idle periods remain in the session.}
    \vspace{-5mm}
    \label{fig:jump_over_periods}
\end{figure}
\begin{figure}
    \centering
    \begin{minipage}[c]{0.64\textwidth}
    \includegraphics[width=\textwidth,trim={0 8.5cm 0 0},clip]{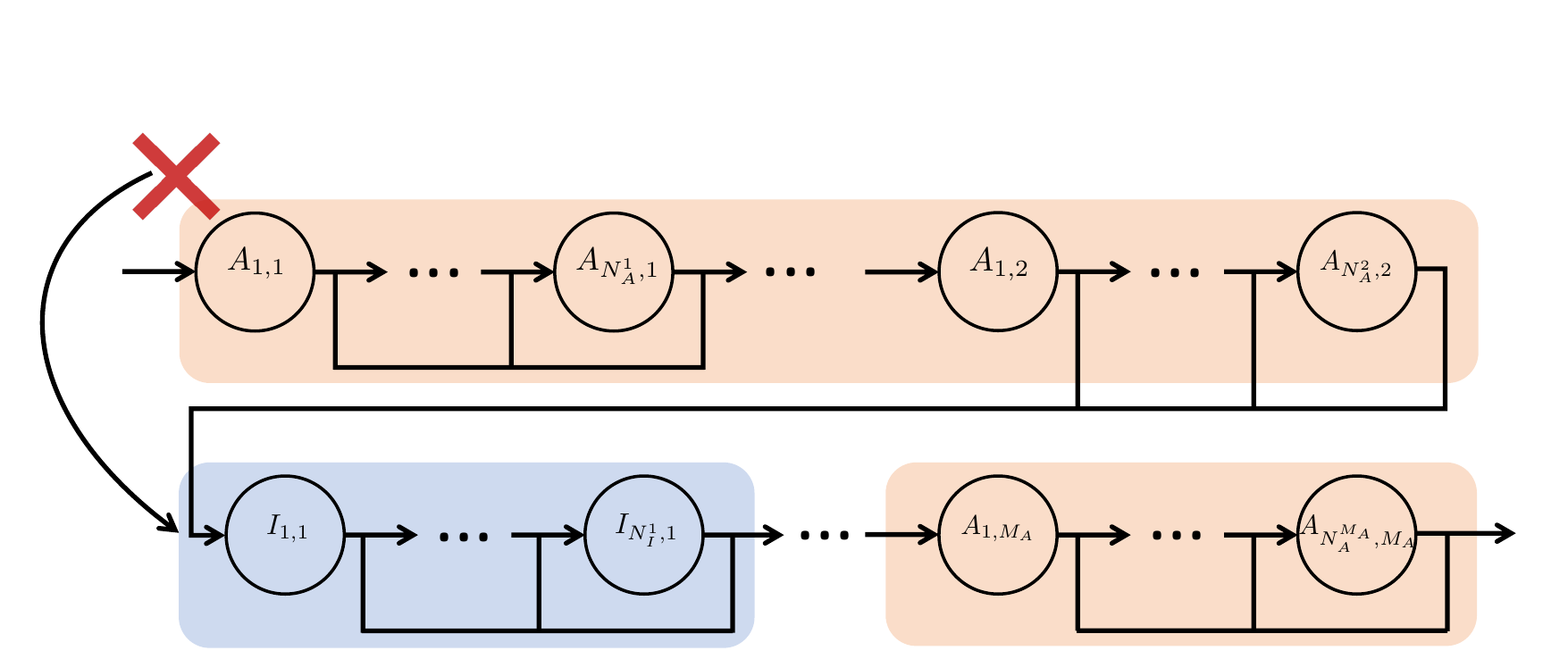}
    \end{minipage}
    \begin{minipage}[c]{0.35\textwidth}
    \caption{A session with jump-over blocking, shown at the level of periods in Figure~\ref{fig:jump_over_periods}, decomposed into exponentially-distributed phases so as to result in Cox-distributed active and idle periods. Transitions to outside the queueing network are permitted only from the last period of the session.}
    \label{fig:jump_over_phases}
    \end{minipage}
\end{figure}
We next study a type of scenario in which flows relinquish resources, such as during calibration, thereby inducing what we refer to as ``idle'' periods. There are several other situations where a flow is amenable to giving up its EGS resource module: for instance, depending on the application/protocol, nodes may wish to perform processing in-between entanglement generation attempts. Further, flows may wish to relinquish the resources during such processing periods, but not during calibration periods, or vice-versa. In this service mode, a retrial model is necessary, since a flow that has relinquished its EGS resource must re-obtain it to continue service after an idle period. We opt for the jump-over blocking mechanism, wherein a flow, when blocked, transitions the session to the beginning of the next idle period, or ends the session if no idle periods remain. Figure~\ref{fig:jump_over_periods} depicts the general form of a session within this service mode: note that periods labeled {\chancery A}$_i$ are not necessarily entanglement generation attempts -- we now refer to them as ``active'' periods, which unlike idle periods \emph{do} engage EGS resources. 

Figure~\ref{fig:jump_over_phases} depicts a session at phase-level detail. For the analysis that follows, we assume that transitions to outside of the queueing network are not permitted from any phases that do not belong to the last period of the session.
The state space of this system is given by the set
\begin{align}
    \mathcal{S}^{\prime\prime} = \left\{\vb{x} \in \mathbb{N}^L: \sum\limits_{f\in\mathcal{F}}\tilde{q}^f(\vb{x}) \leq C,\quad \sum\limits_{f\in\mathcal{F}:n_k\in f}q^f(\vb{x}^f) \leq c_k, \forall k\in \{1,\dots,K\}\right\},
    \label{eq:state_space_jump_over}
\end{align}
where for a given state $\vb{x}$, $\tilde{q}^f(\vb{x})$ is the number of active sessions of flow $f$:
\begin{align}
    \tilde{q}^f(\vb{x}) \equiv \sum\limits_{j=1}^{M_A^f}\sum\limits_{i=1}^{N_A^{f,j}}x_{i,j}^{f,A}.
\end{align}
Here, $M_A^f$ is now defined as the number of active periods for a flow $f$ session, and $N_A^{f,j}$ is the number of phases in the $j$th active period of the session. Since the subscript $A$ now denotes any kind of active period and not only an attempt, we introduce a dependence on the specific period for the number of phases -- this enables us to model generally-distributed period durations. Note that in this service mode, communication qubits are reserved for the entire duration of a session, including during idle periods. This is reflected in the usage of $q^f(\vb{x}^f)$ in the second sum of (\ref{eq:state_space_jump_over}).

The blocking probability for this service mode is of the same form as for the two previously discussed service modes, with the main differences again manifesting through the traffic intensities $\rho_i^f$. Additional consideration is needed within the analysis to account for the fact that blocking may occur not only at the beginning of a session, but also during -- following idle periods. A detailed analysis is presented in Appendix~\ref{app:jump_over}.

\section{Numerical Evaluation}\label{sec:NumEval}
\begin{figure}
    \centering
    \begin{minipage}{0.65\textwidth}
    \includegraphics[width=\textwidth, trim={0 0 0 0}]
    {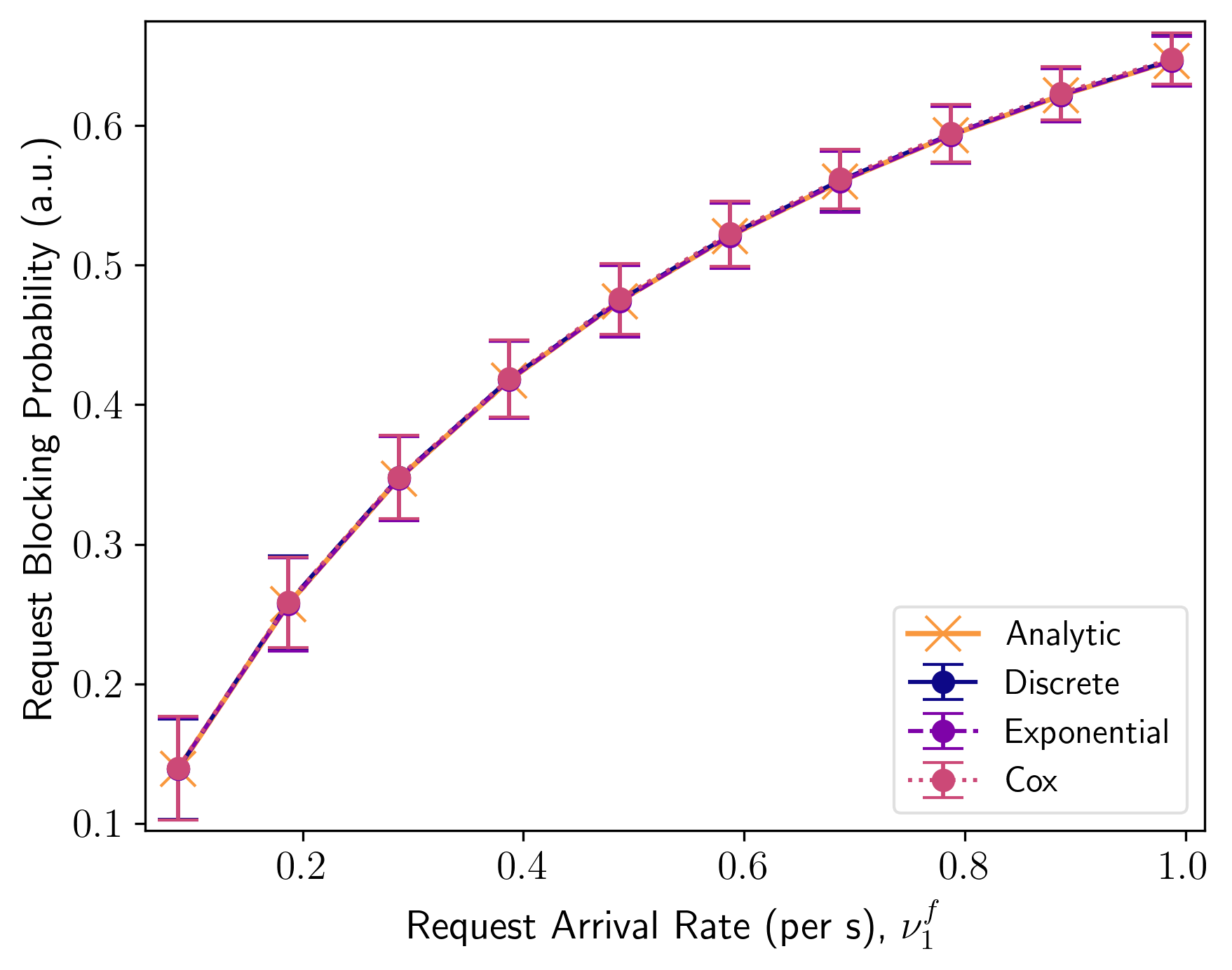}
    \end{minipage}\hfill
    \begin{minipage}{0.32\textwidth}
    \vspace{-2.5cm}
    \caption{Comparison of the average blocking probability per flow according to (\ref{eq:prob_C}) with simulations for an EGS with one resource, connected to eight nodes, and serving $\binom{8}{2} = 28$ flows, one for each possible node pairing. Every node is restricted to a single communication qubit. Session traffic is homogeneous. The absolute relative errors are $\delta_{\text{discrete}} = 0.004, \delta_{\text{exponential}} = 0.001, \delta_{\text{Cox}} = 0.003$.}
    \label{fig:validationStrict}
    \end{minipage}
\end{figure}

\begin{figure}
    \centering
    \begin{minipage}{0.48\textwidth}
    \includegraphics[width=\textwidth]{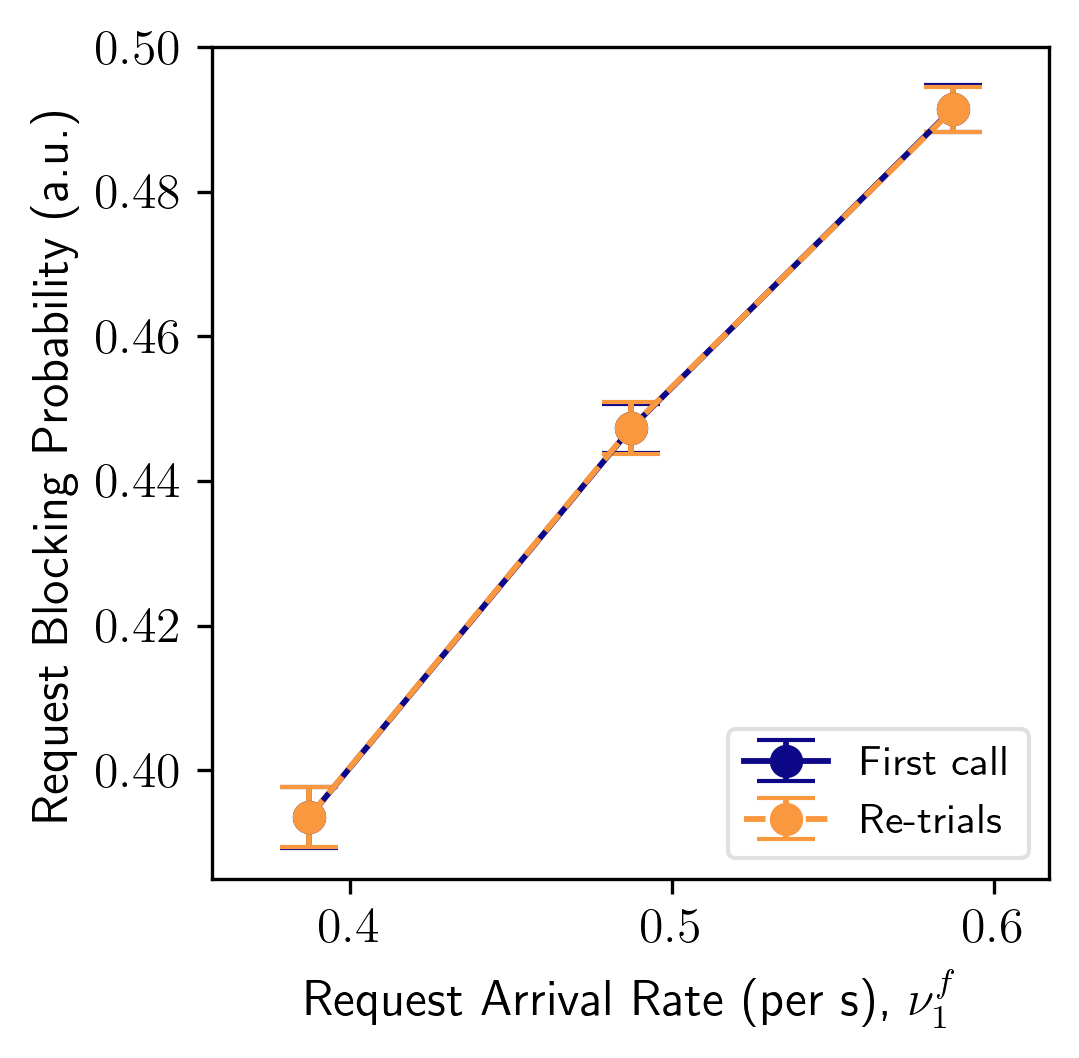}
    \end{minipage}\hfill
    \begin{minipage}{0.48\textwidth}
    \vspace{-3cm}
        \caption{Average blocking probability for the first call of a session in the jump-over service model compared to that of calls which follow idle periods (re-trials). Data is obtained from discrete simulations of an EGS with one resource, serving eight nodes with $\binom{8}{2} = 28$ flows. Every node is restricted to a single communication qubit. Session traffic is homogeneous. The maximum relative difference between the average blocking probabilities is 0.00025.}
        \label{fig:jumpOverReTrials}
    \end{minipage}
\end{figure}

\begin{figure*}
    \centering
    \includegraphics[width=\textwidth, trim={0 0 0 0}]{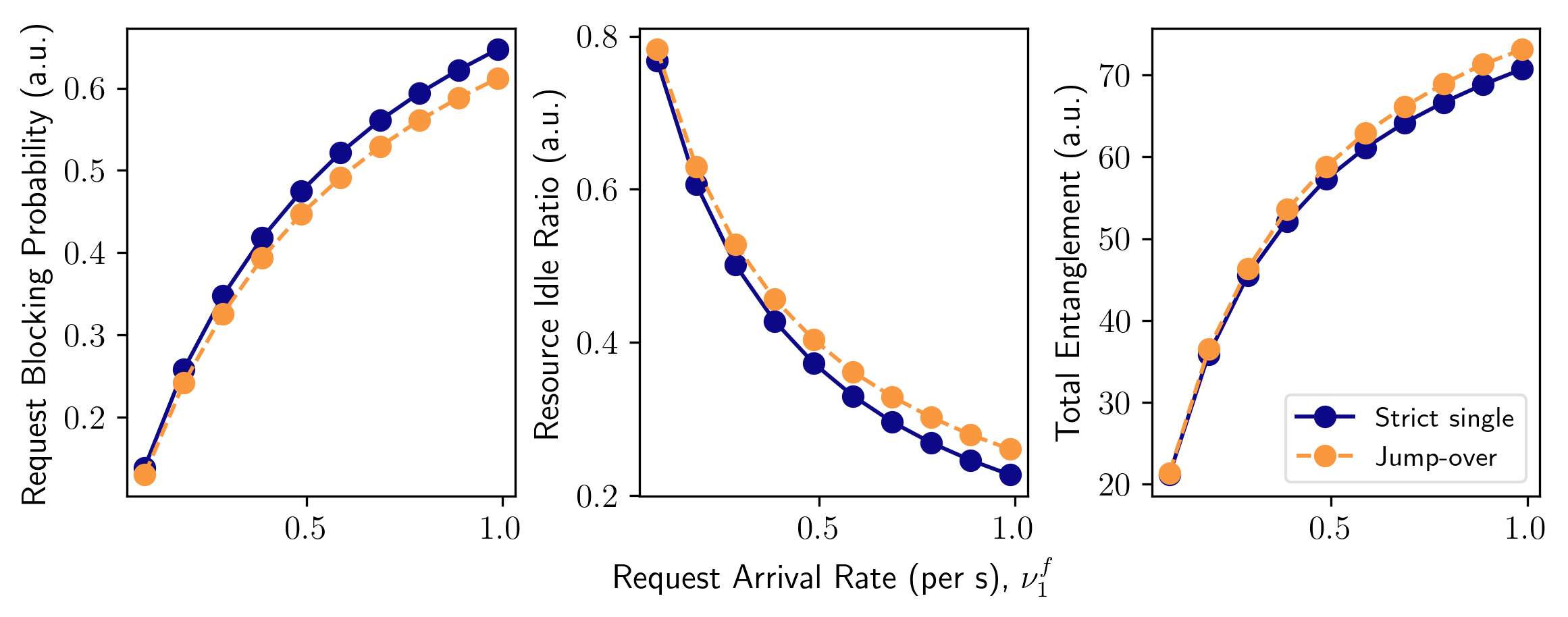}
     \caption{Comparison of the strict and jump-over service models, for an EGS with one resource, serving eight nodes via $\binom{8}{2} = 28$ flows. Every node is restricted to a single communication qubit. Session traffic is homogeneous. Left: blocking probability per request. Middle: proportion of time that the EGS resource is idle, compared to total simulation time. Right: total amount of entanglement generated across all flows, during the time simulated. Data is obtained using discrete simulations.}
    \label{fig:comparePerfMetrics}
\end{figure*}

In real-world implementations of entanglement generation, every individual attempt and calibration period has a finite duration. In a demonstration of deterministic HE delivery carried out between two NV nodes \cite{SingleClickDiamond}, for instance, the authors describe entanglement generation attempts as taking a fixed amount of time, $\Delta t_{\text{attempt}}$. On the other hand, the calibration periods take a variable amount of time, of which the mean duration $\mu_{\text{calib}}$ is known. To model such an experiment one could sample the duration of each calibration period from an exponential distribution with mean $\mu_{\text{calib}}$ and fix the duration of attempts to $\Delta t_{\text{attempt}}$.

\begin{table}[t]
    \centering
    \begin{tabular}{c|c}
        \textbf{Description} &  \textbf{Value} \\
        \hline
        Link lengths & 10 km  \\
        One-way communication time (\text{RTT}/2) & 50.03 $\mu$s \\
        % Buffer for jitter in heralding flag arrival ($0.1 \times \text{RTT}/2$) &  5.003   $\mu$s \\
        % Duration of BSM & 10 $\mu$s \\
        Duration of calibration period (CR check), $T_{\text{calib}}$ & 1 ms \\
        Probability of single attempt success, $p_{\text{gen}}$ & 1e-5 (a.u.)\\
        Duration of a single attempt, $T_{\text{attempt}}$ & 115.072 $\mu$s\\ 
        Attempt batch size & 100 \\
        $\#$ batches (strict allocation) or re-trials (jump-over) & 10 \\
        Total calibrations (idle periods) in strict (jump-over) mode & 9 \\
        % Resource allocation duration & 125 ms \\
    \end{tabular}
    \caption{Physical parameters used in simulations correspond to an EGS supporting batched single click HE generation for NV colour centers in diamond as quantum nodes \cite{SingleClickDiamond}.}
    \label{tab:sampleParams}
    \vspace{-5mm}
\end{table}

To validate our analysis, we simulate three models of an entanglement generation experiment. These are referred to as \textit{discrete}, \textit{Cox}, and \textit{exponential} and are differentiated by how the duration of entanglement generation attempts and calibration periods are determined. To ensure the simulations are compatible with our analysis two key assumptions are made. First, in all simulation modes and in numeric evaluation of (\ref{eq:prob_C}) we fix the mean duration of every attempt (calibration period/idle period) to some value $T_{\text{attempt}}$ ($T_{\text{calib}}/T_{\text{idle}}$). In discrete simulations the duration of each attempt (calibration period/idle period) is set exactly to these values. Settings (number of phases, duration of phases, transition probabilities between phases)
for the Cox distribution, are chosen to ensure (\ref{eq:mu_exp_cox}) is satisfied. 
Our insensitivity result suggests these settings may be chosen arbitrarily, as long as they obey (\ref{eq:mu_exp_cox}). In Appendix \ref{app:SimulationImplementation} we include a sample of parameters used for the Cox distribution governing attempts. 
The second assumption is that the probability any attempt results in successful entanglement generation is a fixed value, $p_{\text{gen}}$. For justification, see Section \ref{subsubsec:pgen}. Simulation parameters are detailed in Table \ref{tab:sampleParams}. Further details on simulation implementation are included in Appendix \ref{app:SimulationImplementation}.

To quantify agreement between numeric evaluation of (\ref{eq:prob_C}) and simulated results we define error parameters $\delta_{\text{sim. type}}$ based on the maximum absolute relative difference between the points of the analytic and simulated data sets,
\begin{equation}
\label{eq:absRelDiff}
    \delta_{\text{sim. type}} : = \frac{\abs{\max_x (y_{\text{Analytic}}[x] - y_{\text{Sim.}}[x])}}{y_{\text{Analytic}}[\text{arg}\max_x(y_{\text{Analytic}}[x]  - y_{\text{Sim.}}[x])]},
\end{equation}
where $y_{\text{Analytic}}$ denotes an analytic data set, $y_{\text{Sim.}}$ denotes a simulated data set, and square brackets denote indexing the data sets. The error parameter reports the difference between the analytic and simulated data point for which the difference is maximum, relative to the analytic value at that point. 

Each simulation is run for a duration corresponding to 1e7 iterations of the discrete simulation, equivalent to a simulation of 1150.73 seconds of \textit{simulated real-time}. See Appendix \ref{app:SimulationImplementation} for details of the expected number of requests placed over the duration of a simulation.
For any run  of a simulation, the average blocking probability is calculated by averaging the blocking probabilities recorded by each flow. Every data point from a simulation is the result of averaging over 200 independent runs of the simulation. Error bars for request blocking probabilities correspond to the average (over the runs) standard deviation in blocking probabilities between flows.

Traffic in a deployed network will in general be non-homogeneous. The causes of non-homogeneity may stem from differences in the physical parameters of quantum nodes and their connections to an EGS or they may result from differences in request parameters submitted by flows, enabling them to target different applications. In general, our analysis applies to non-homogeneous traffic. However, the simplest physical implementation of the EGS is one where all nodes are connected to the hub by links of equal length and all flows submit identical requests for sessions with equal rates. This is a homogeneous traffic scenario. In the remainder of this section we first present results from each of the three service models developed in Section \ref{sec:model_assumptions} in a homogeneous traffic scenario. Then, we demonstrate the application of our analysis to a non-homogeneous traffic scenario in the single entanglement generation with strict resource reservation service model.

\subsection{Homogenous Traffic}
\label{subsec:homoTraffic}

\begin{figure}
    \centering
    \begin{minipage}{0.64\textwidth}
    \centering
    \includegraphics[width=\textwidth]{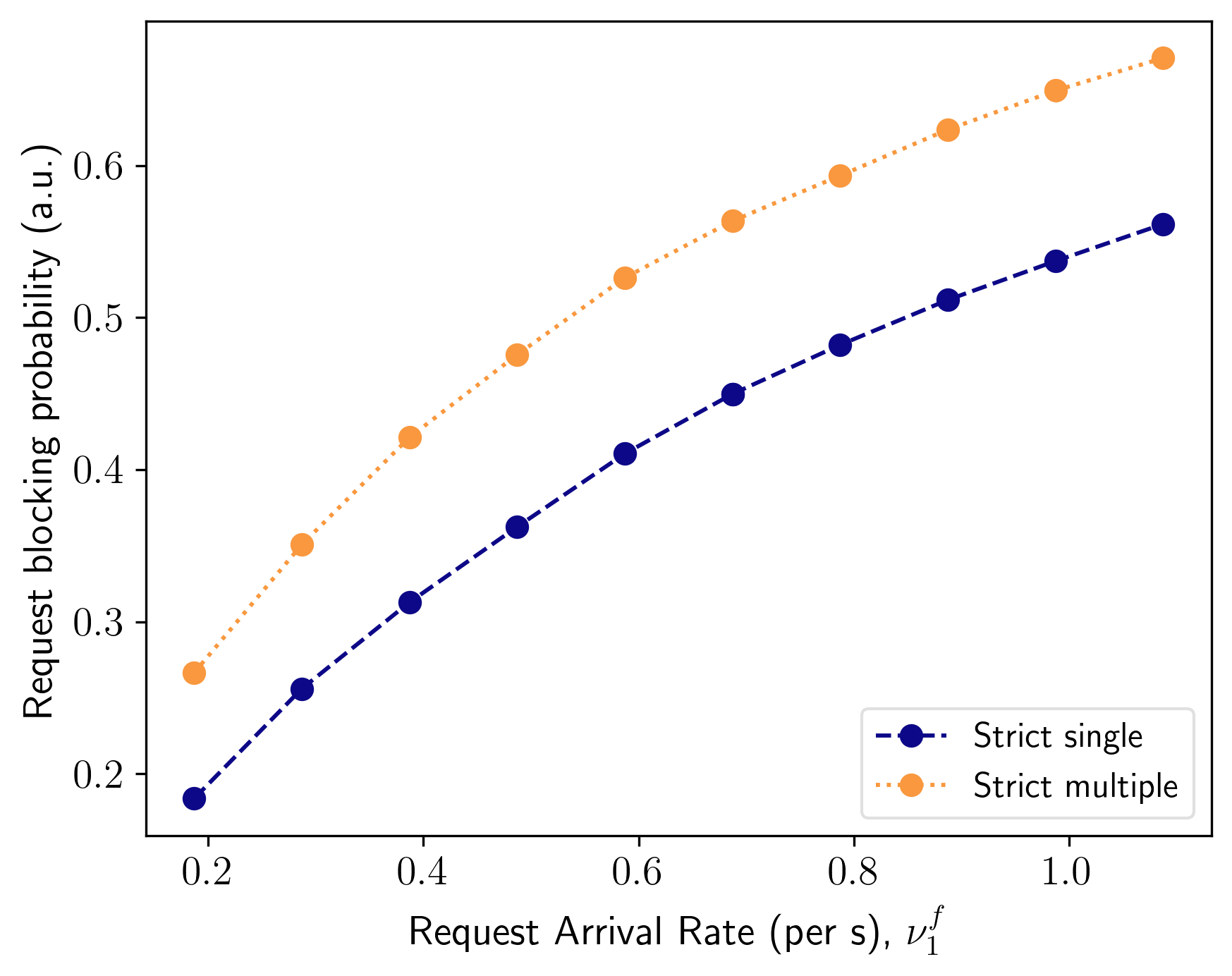}
    \end{minipage}\hfill
    \begin{minipage}{0.35\textwidth}
    \centering
    \vspace{-1cm}
    \caption{Comparison of the request blocking probabilities of the strict single and strict multiple service modes when there is a high probability ($p_{\text{gen}}=0.001$) that an attempt to generate entanglement succeeds. Data is obtained from discrete simulations of an EGS with one resource, serving eight nodes with $\binom{8}{2} = 28$ flows. Every node is restricted to a single communication qubit. Session traffic is homogeneous. The absolute relative errors are $(\delta_{\text{discrete}}, \delta_{\text{exponential}}, \delta_{\text{Cox}}) = (0.014, 0.018, 0.003)$ for the strict single service mode and (0.032, 0.015, 0.002) for the strict multiple service mode.}
    \label{fig:highPGenStrict}
    \end{minipage}
\end{figure}

In a deployed quantum network, a network operator may be interested in selecting a service model for an EGS based on its performance across a range of metrics. We compare the performance of the three service models of Section \ref{sec:analysis}. 
% single EPR pair generation with strict resource reservation, multiple EPR pair generation with strict resource reservation, and  multiple EPR pair generation with resource relinquishment. 
In what follows, these service models are referred to simply as \textit{strict single}, \textit{strict multiple} and \textit{jump-over}, respectively. 
%From the perspective of a network operator, interesting performance metrics for an EGS service model may include the blocking probability, 
Besides request blocking probabilities, we also study resource utilization and the total entanglement generated in a fixed amount of time. 
%The blocking probability provides information about the load on the network. 
The former provides information on how efficiently network resources are used, 
%(i.e. how much of the network capacity is used), 
and the latter gives a measure of the productivity derived from the allocation of network resources. 
To study the resource utilization in simulation, we define the \textit{resource idle ratio} as the proportion of time that one or more EGS resources is idle, relative to the entire simulated time. To study the total entanglement generated we track and sum the successful generation of entangled pairs by any flow over the duration of the entire simulated time. 

Before comparing the service models, we validate our analysis of the blocking probability for each of the three service models. Figure \ref{fig:validationStrict} compares numeric evaluation of (\ref{eq:prob_C}) with results from discrete, exponential, and Cox simulations of an EGS operated in the strict single service model, with control of one resource $(C = 1)$, connected to eight nodes, each with a single communication qubit. We observe close agreement between the analytic and simulated results, all of which overlap well within one standard deviation for every data point. This is expected due to the insensitivity result (Theorem \ref{thm:insensitivity}) and supports our analysis. The tightness of the overlap between each simulated data set and the analytic results is captured by the absolute relative errors, defined by (\ref{eq:absRelDiff}).  These errors are $<1\%$ for each simulation type. 
We validate the other service models in the same way and obtain error parameters of $(\delta_{\text{discrete}}, \delta_{\text{exponential}}, \delta_{\text{Cox}}) = ( 0.061, 0.015, 0.006$) for the multiple EPR pair service mode and ($0.001, 0.003, 0.004$) for the jump-over service mode. To further validate our analysis of the jump-over service model, Figure \ref{fig:jumpOverReTrials} compares the average probability that the first call of a session is blocked with the probability that any call which follows an idle period (re-trial) is blocked. The comparison is made using results from discrete simulations, however similar results are obtained from exponential and Cox simulations. As predicted by our analysis (Appendix \ref{app:jump_over}), the blocking probability is equal for any arbitrary call and does not depend on whether a call is the first one of a session or follows an idle period (re-trials).  

The performance of the strict single and jump-over service models is contrasted in Figure \ref{fig:comparePerfMetrics}. Load on the EGS network results from requests for resource access by the flows. Increased load directly leads to increased blocking probability and decreased idle time ratio in each service mode. When there is a relatively low-load on the EGS, the difference between each performance metric of the two service modes is marginal. 
%However, as the load increases, the differences between service modes increase. 
When there is high load on the EGS, the jump-over service mode results in lower blocking probabilities, indicating better handling of the high load. The lower blocking probability of the jump-over model is reflected in the increased idle time ratio. Interestingly, although the EGS resources are idle for a greater proportion of the time in the jump-over mode, a greater total amount of entanglement is produced. This indicates that for this EGS configuration, the jump-over service mode makes more efficient and productive use of the EGS resources. 
For the physically motivated simulation parameters used, the performance of the strict multiple service mode is not significantly different from that of the strict single service model, hence it is omitted from Figure \ref{fig:comparePerfMetrics}. With these parameters, the expectation value of successful entanglements per session is 0.01. 

 The differences between the strict single and strict multiple service modes are more evident if  the probability that an attempt to generate entanglement succeeds is significantly increased. In Figure~\ref{fig:highPGenStrict} we compare the blocking probabilities resulting from discrete simulation of these service modes when $p_{\text{gen}}= 0.001$, an increase of two orders of magnitude relative to our baseline simulation parameters. The expectation value of successful entanglements per session is one EPR pair. In this setting, the strict single service mode realizes a much lower blocking probability than the strict multiple service mode. This difference can be understood as resulting from a difference in the mean service times of the sessions. The recorded mean service time of sessions in the strict multiple service mode is $125$ ms. In the strict single service model however, sessions may end earlier if an attempt at entanglement generation succeeds, resulting in a lower observed mean service time of $78$ ms for these sessions. Since sessions are on average shorter in the strict single service mode, for the same request rate, resources are free more often, leading to a lower blocking probability.

%The blocking probability per flow, (\ref{eq:condn_blocking_prob}), depends upon the number of communication qubits available to each node and the number of resources at the EGS. 
To investigate the impact of the restrictions on communication qubits, we numerically evaluated the blocking probability for an EGS controlled by the strict single service mode with two or three resources (Figure \ref{fig:commQHeatmaps}) as the restriction varies from one to ten communication qubits per node. In each case, for any fixed request rate we observe that an increase in the number of communication qubits per node from one to two results in a large increase in the blocking probability, but further increases in the number of communication qubits have little impact. Numeric evaluation for strict service mode scenarios where an EGS with one, two, or three resources is connected to 20 nodes and serves $\binom{20}{2} = 190$ flows (see Appendix \ref{appendix:extendedData}) confirm that the same effect holds for an EGS serving a large number of flows. Recall that a flow does not submit a request if one or more of the nodes of the flow does not have a communication qubit available. Hence for any fixed request rate, a lower blocking probability is expected if nodes do not have enough communication qubits available, due to the lowered effective request rate. 
As a result of these properties and the observed trends, we conclude that when each node is restricted to one or two communication qubits, resource unavailability at the EGS has less of an impact on the blocking probability than communication qubit unavailability. In contrast, when nodes are equipped with three or more communication qubits we may conclude that communication qubit unavailability does not impact the blocking probability. Thus a limitation to any finite number of communication qubits per node $\geq 3$ is seen not to limit the blocking probability at an EGS controlled by the strict service model and serving homogeneous traffic. 

% We conclude that when an EGS controlled by the strict single service mode serves homogeneous traffic, an increase in the number of communication qubits per node from one to two has a large impact on the blocking probability, but further increases have a limited impact. \gv{Not sure I understand this result.}

\begin{figure}
    \centering
    \begin{minipage}{0.74\textwidth}
    \begin{subfigure}[t]{0.48\textwidth}
        \centering
        \includegraphics[height=2.0in]{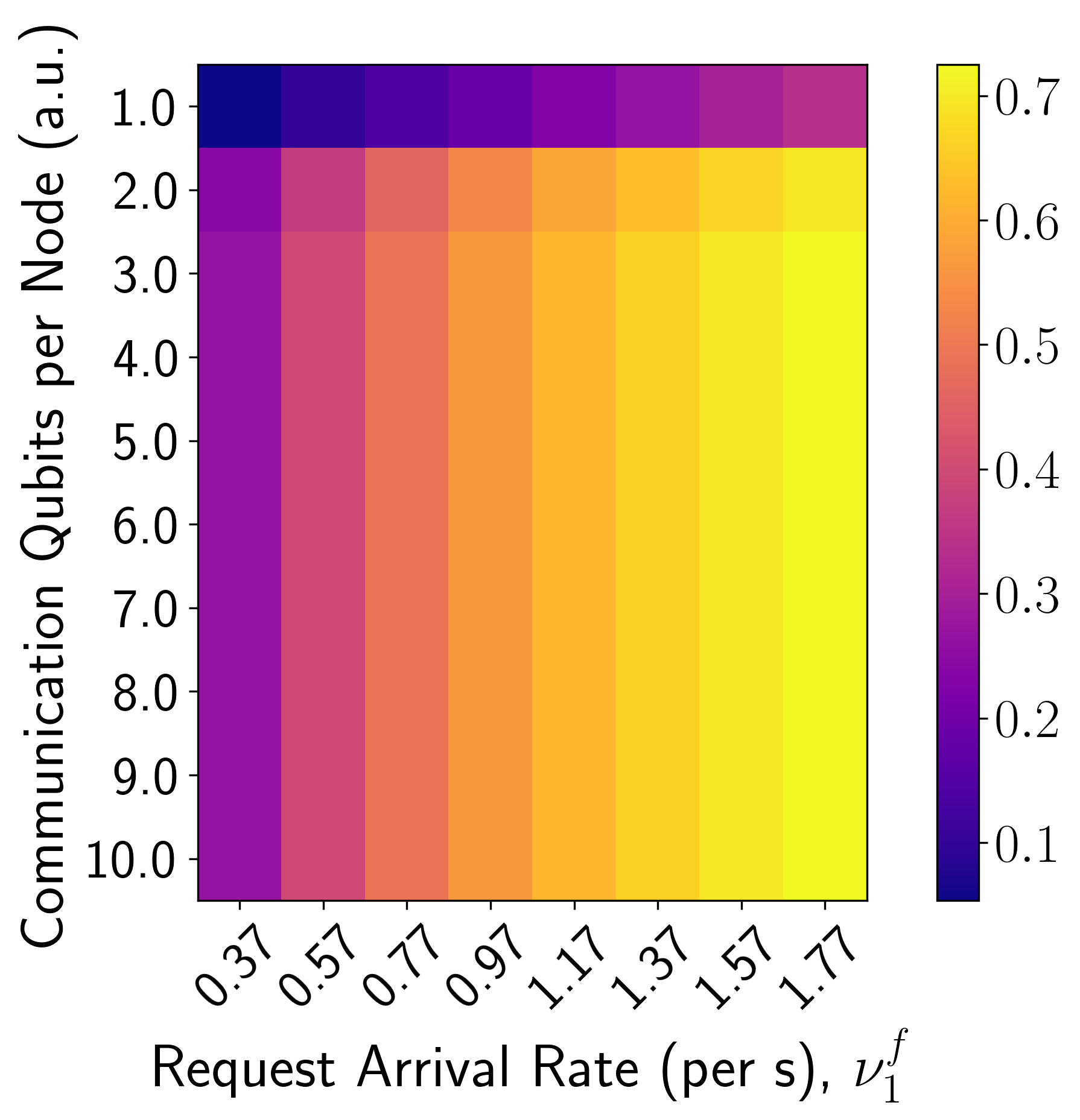}
        \caption{two EGS Resources ($C=2$)}
        \end{subfigure}
        ~
        \begin{subfigure}[t]{0.48\textwidth}
        \centering
        \includegraphics[height=2.0in]{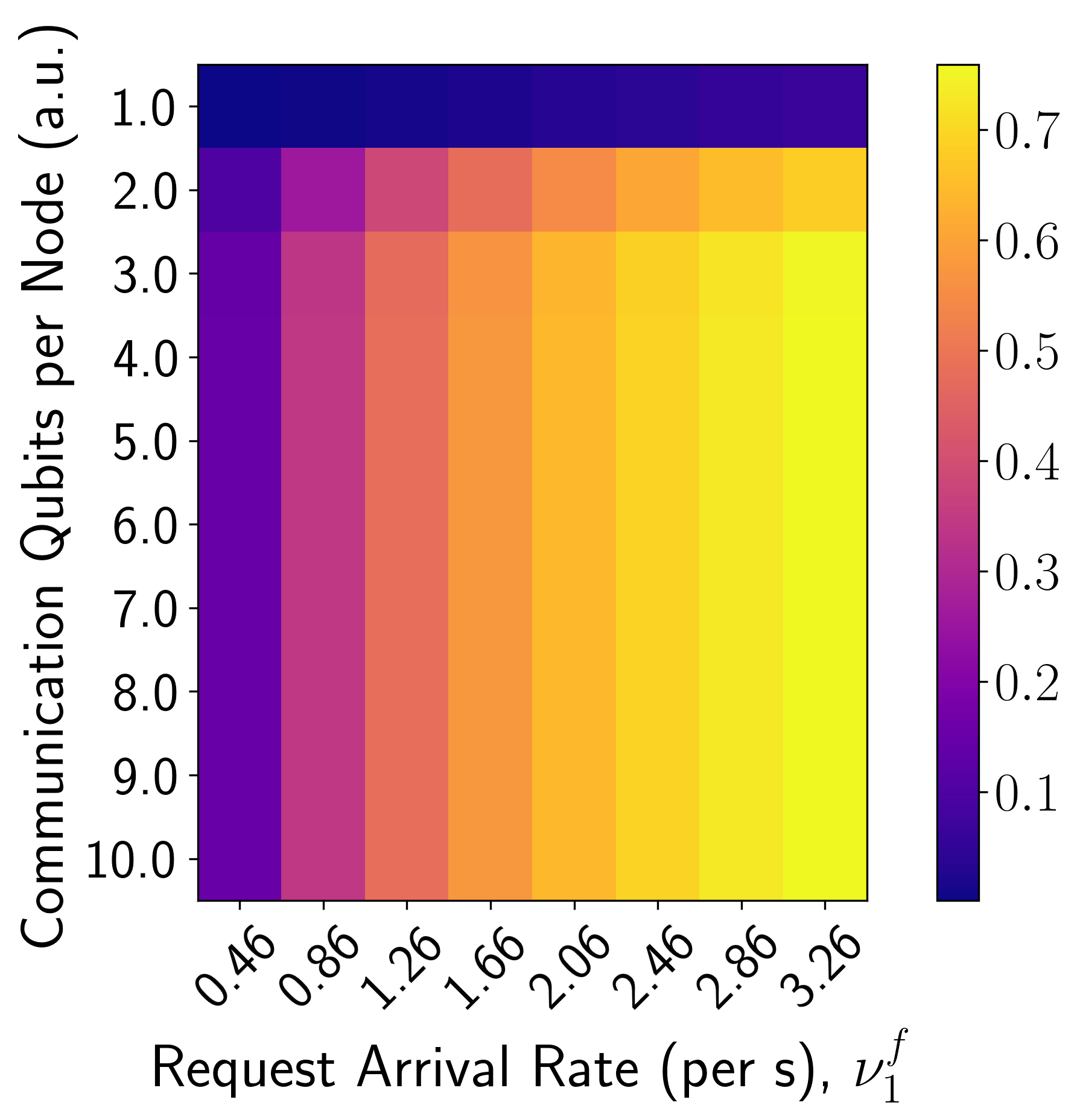}
        \caption{three EGS Resources ($C=3$)}
        \end{subfigure}
    \end{minipage}\hfill
    \begin{minipage}{0.24\textwidth}
    \vspace{-2mm}
    \caption{Heatmaps of the average blocking probability per flow when the number of communication qubits per node and the request arrival rates are varied. Data results from numeric evaluation of (\ref{eq:prob_C}) for an EGS with eight nodes and serving $\binom{8}{2}=28$ flows, one for each possible node pairing. Session traffic is homogeneous.}
    \label{fig:commQHeatmaps}
    \end{minipage}
    % \vspace{-5mm}
\end{figure}

\subsection{Non-homogeneous traffic}\label{subsec:nonHomoTraffic}

\begin{table}
    \centering
    \begin{tabular}{c|c}
        \textbf{Description} &  \textbf{Value} \\
        \hline
        Link lengths, set 1: users $1 - \lceil K/2 \rceil $ & 10 km  \\
        Link lengths, set 2: users $ (\lceil K/2 \rceil  + 1) -  K $ & 20 km  \\
        Single-mode optical fiber attenuation coefficient, $\sigma_{\text{att}}$ & 0.2 \ dB/km \\
        One-way communication time, $S_2$ (\text{RTT}/2) & 100.07 $\mu$s \\
        % Buffer for jitter in heralding flag arrival, set 2 &  20.008   $\mu$s \\
        Resource allocation duration, $S_1$ & 125 ms \\
        Resource allocation duration, $S_2$ & 240 ms \\
        Probability photon travelling 1 km optical fiber arrives, $p_a = {10^{-(\sigma_{\text{att}} / 10)}}$ & 0.95499 (a.u.) \\
        Probability of single attempt success, set 2, $p_{\text{gen}} \cdot p^{20}_{a} / p^{10}_{a}$ & 6.31e-6 (a.u.)\\
        Duration of a single attempt, $S_2$ & 230.146 $\mu$s\\ 
    \end{tabular}
    \caption{Additional physical parameter settings for a non-homogeneous traffic scenario where half of the nodes have doubled link lengths to the EGS.
    These parameters supplement those of Table \ref{tab:sampleParams}.}
\label{tab:nonUniformParams}
\vspace{-5mm}
\end{table}

 Non-homogeneity resulting from network topology will be relevant in any deployed network, as it is unrealistic to expect that nodes may be located in a perfect disk with fixed link lengths to an EGS. We examine a situation where non-homogeneous traffic in the strict single service model results from a network topology where half of the nodes are connected to the EGS by links of length $10$ km and the other half of the nodes, are connected by links of double the length, i.e. $20$ km. Flows are partitioned into two sets, labelled as \textit{set 1} ($S_1$) and \textit{set 2} ($S_2$). For every flow in $S_2$, at least one of the nodes is connected to the EGS by a $20$ km link. For every flow in $S_1$, both nodes are connected to the EGS by $10$ km links. 
 The length of a link between a node and the EGS affects the RTT and the probability that a photon sent over the link arrives at the EGS. Table \ref{tab:nonUniformParams} recounts the physical parameters for each set of nodes. See Appendix \ref{app:SimulationImplementation} for further discussion of the simulation implementation details. Note that within each set of nodes, traffic is homogeneous. 

% Theorems \ref{thm:blocking_flow} and \ref{thm:insensitivity} apply to the general case of non-homogeneous traffic. For validation, 

 Figure \ref{fig:nonHomogenous} demonstrates that an arbitrary flow receives different service depending on whether it is an element of set $S_1$ or $S_2$. Requests from flows in $S1$ are blocked with a higher probability and have shorter mean service times than flows in $S_2$. Figure \ref{subfig:nonUniformAvBP} demonstrates that the blocking probability of any arbitrary flow $f$ from the set of total flows $\mathcal{F}$, which is equal to the average request blocking probability as given by (\ref{eq:pi_bar_avg}), is greater than for flows in $S_2$ but less than for flows in $S_1$.
 The gap between the blocking probability for flows in $S_1$ and $S_2$ is due to the restriction to a single communication qubit per node and the difference in service times of sessions for flows in each set. Although flows from $S_1$ and $S_2$ have equal request rates, due to the longer mean service times of flows in $S_2$, one or more nodes from flows in $S_2$ will not have a free communication qubit more often than for flows in $S_1$. As a result, the effective request rate and hence the blocking probability is lower for flows in $S_2$.
 Figure \ref{subfig:nonUniformBP} validates that the analytic blocking probability calculated for any flow in $S_1$ ($S_2$) agrees with $<4 \%$ ($<6 \%$) absolute relative error with the average blocking probability over flows in set $S_1$ ($S_2$) observed in each of discrete, exponential and Cox simulations. 
 These results validate (\ref{eq:prob_C}) in the presence of non-homogeneity and highlight the complex interplay between the number of communication qubits at a node, mean service times of a session, and blocking probability of an arriving session.

\begin{figure}[H]
    \centering
    \begin{subfigure}[t]{0.25\textwidth}
    \centering
    \includegraphics[height=1.9in, trim={0.4cm 0 0.4cm 0}]{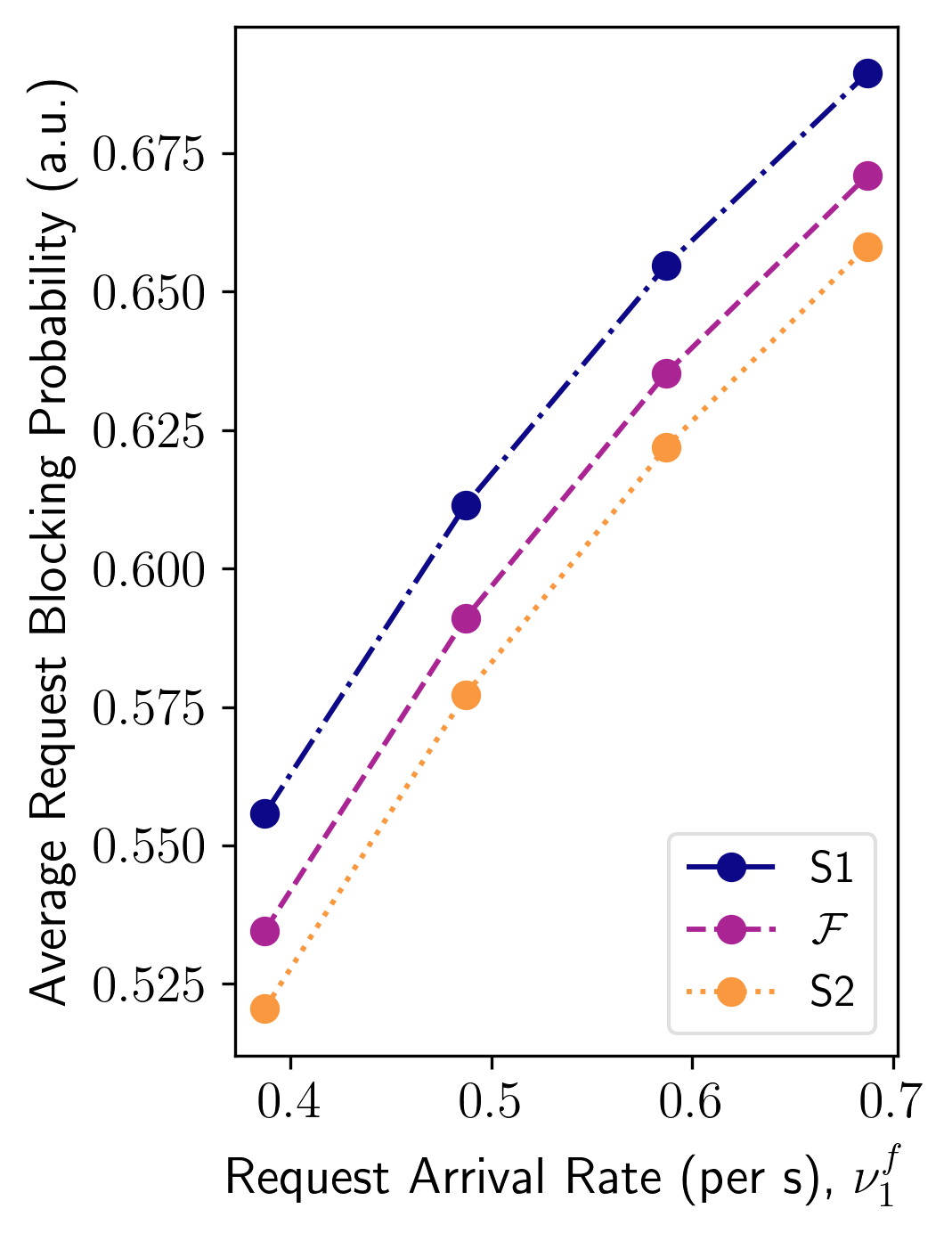}
    \caption{}
    \label{subfig:nonUniformAvBP}
    \end{subfigure}
    ~
    \begin{subfigure}[t]{0.42\textwidth}
        \centering
        \includegraphics[height=1.9in, trim={0.4cm 0 0.4cm 0}]{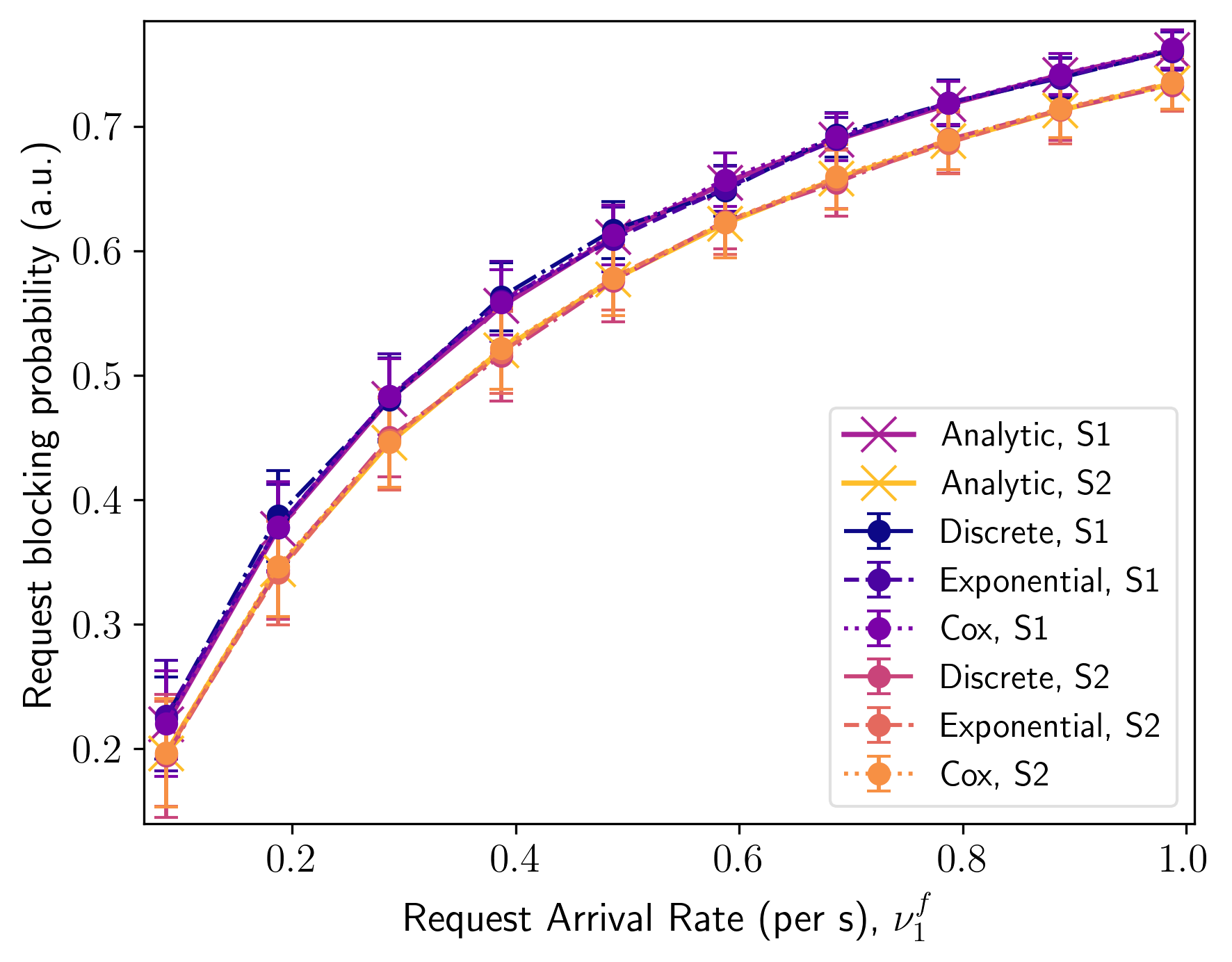}
        \caption{}
        \label{subfig:nonUniformBP}
        \end{subfigure}
        ~
    \begin{subfigure}[t]{0.25\textwidth}
        \centering
        \includegraphics[height=1.9in, trim={0.4cm 0 0.4cm 0}]{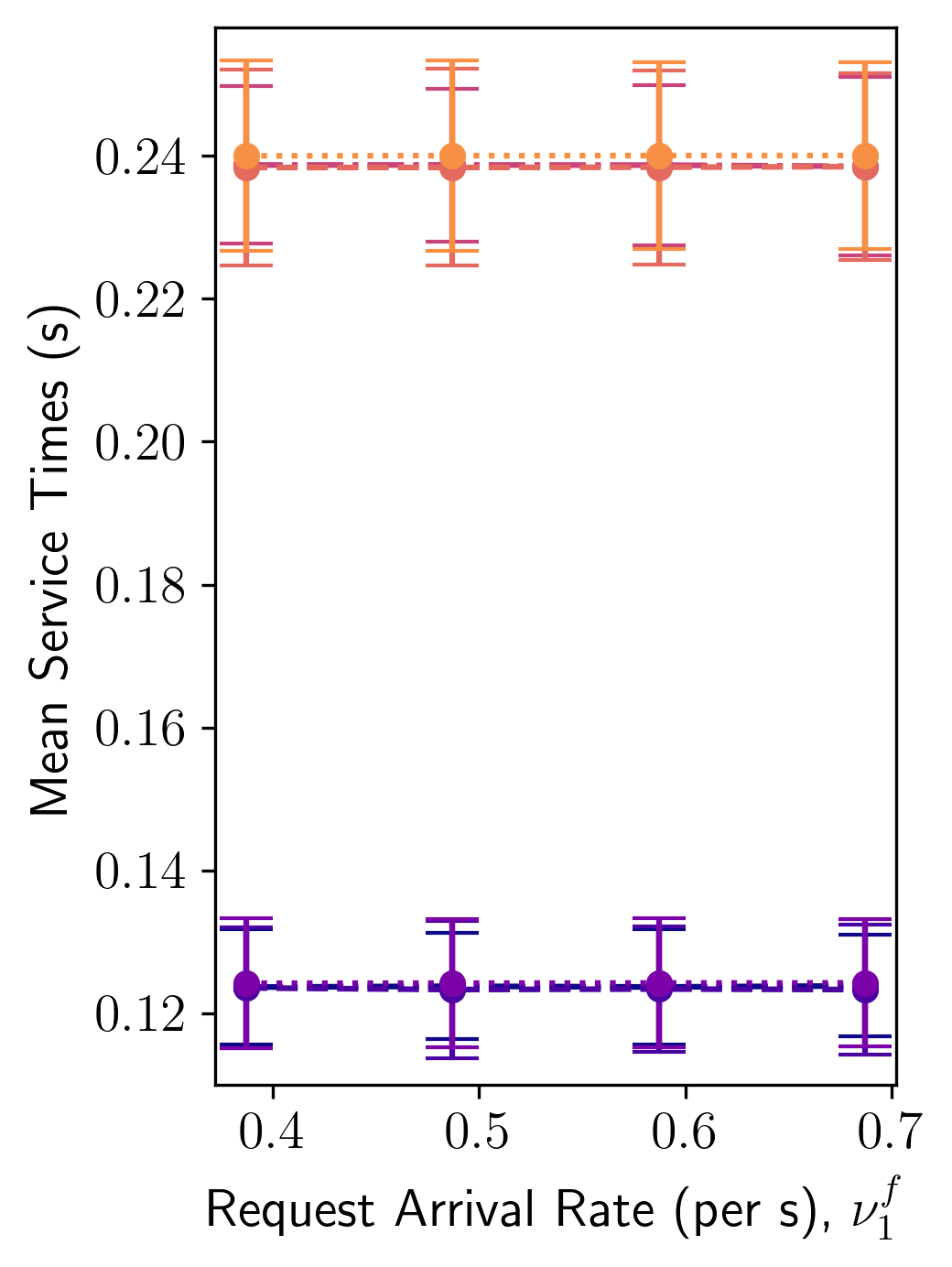}
        \caption{}
        \label{subfig:nonUniformMST}
        \end{subfigure}
    \label{fig:validateStrictNonUniform}
 \caption{Non-homogeneous traffic, strict single service model: An EGS with one resource is connected to eight nodes with one communication qubit each, of which four nodes are connected to the EGS by 10 km links and four by 20 km links. The set $\mathcal{F}$ is partitioned into sets $S_1$ and $S_2$. For every flow in $S_2$, at least one of the nodes is connected to the EGS by a $20$ km link. For every flow in $S_1$, both nodes are connected to the EGS by $10$ km links. Traffic within each set is homogeneous. 
 (a) Analytic blocking probability for flows within sets $S_1$ and $S_2$ compared to the average blocking probability for any flow $f \in \mathcal{F}$, which results from evaluation of (\ref{eq:pi_bar_avg}).
 (b) Average request blocking probabilities within sets $S_1$ and $S_2$ from discrete, exponential, and cox simulations compared with numeric evaluation of (\ref{eq:prob_C}). The absolute relative errors are $(\delta^{S_1}_{\text{discrete}}, \delta^{S_1}_{\text{exponential}}, \delta^{S_1}_{\text{cox}}) = (0.027, 0.032, 0.0032)$ and $(\delta^{S_2}_{\text{discrete}}$, $\delta^{S_2}_{\text{exponential}}, \delta^{S_2}_{\text{cox}}) = ( 0.054, 0.0088, 0.0064)$. 
 (c) Comparison of mean service times for sessions in $S_1$ with those in $S_2$. Legend is the same as in (b).}
 \label{fig:nonHomogenous}
\end{figure}

\section{Conclusion and Future Work}\label{sec:conclusion}
We have proposed an on-demand resource allocation algorithm for an EGS and developed its performance analysis in a variety of traffic scenarios and operation modes. We modeled the system as an Erlang loss one, and discovered an insensitivity property for the request blocking probability. Numerical results validate our analysis. Our work can lead to several new research directions; we provide a non-exhaustive list here. 
The analytic and simulation frameworks we provide are valuable tools for the development of load-balancing control algorithms for an EGS, which could run at a higher level in the control stack to ensure stable quality of service can be delivered to flows. 
%Our results also provide tools for analyzing how resources should be provisioned in a network in order to enable service of a certain level. 
An important highlight of our model is that it flexibly incorporates restrictions that are very present in NISQ era quantum devices, hence being relevant for the development of a real near-term network. This feature of the model can be used as a tool to investigate efficient resource provisioning schemes -- not only for a single EGS serving a number of nodes in a star topology, but also for a more complex network made up of heterogeneous devices.
In future work, one could allow a flow to have control over multiple communication qubits. This generalization, along with some of the modeling challenges that might be encountered, were discussed briefly in Section~\ref{sec:model_assumptions}.
Another natural extension of our model is to generalize from bipartite to multipartite entangled states. This can be accomplished by expanding the definition of a "flow" to include $k\geq 2$ nodes. Last, one could study  various back-off mechanisms in lieu of blocking, and determine suitable hardware regimes for each of the schemes.

%%
%% The acknowledgments section is defined using the "acks" environment
%% (and NOT an unnumbered section). This ensures the proper
%% identification of the section in the article metadata, and the
%% consistent spelling of the heading.
% \if{false}
\begin{acks}
GV acknowledges support from NWO QSC grant BGR2 17.269.
SG acknowledges the support of the European Union's Horizon Europe research and innovation program under grant agreement No. 101102140. SG thanks her PhD supervisor Stephanie Wehner for her support in conducting this research.
\end{acks}
% \fi
%%
%% The next two lines define the bibliography style to be used, and
%% the bibliography file.
\bibliographystyle{ACM-Reference-Format}
\bibliography{refs}

%%
%% If your work has an appendix, this is the place to put it.
\appendix
\section{Probability of Reaching a Queue in Single EPR Pair Generation Mode}
\label{app:prob_reaching_queue}
In this appendix, we derive 
%various probabilities of reaching queues within the queueing networks introduced throughout the work. First, we begin with 
an expression for
$\tilde{p}_i^f$ as seen in (\ref{eq:lambda_i_cox}), the probability of reaching the $i$th phase of a type $f$ session, starting from its first phase, while in strict resource reservation, single EPR pair generation service mode. Suppose that the $i$th phase is located within the $j$th period of the session. To have reached this phase, the session must not have ended during any of the previous attempt phases (recall that in this service mode, sessions retain EGS resources during calibration periods). In the following, let $p^k_{i,j}$ denote the probability of transitioning from the $i$th phase of period $k$ to the $j$th phase of the same period; we omit $f$ below for cleanliness. The probability of leaving during an attempt period $k$ due to success in generating entanglement, conditioned on the event that the session has entered period $k$, is given by
\begin{align}
    P_k &\equiv\mathrm{P}\left(\text{leave during attempt period}~ k | \text{session has entered period k}\right)\\
    &= p^k_1+p^k_{1,2} p^k_2 + \dots + (p^k_{1,2}\cdots p_{N^k-1,N^k}^kp^k_N)\\
    &= \sum\limits_{m=1}^{N^k}p_m^k\prod\limits_{l=1}^{m-1}p^k_{l,l+1},
    \label{eq:prob_leave_period_cox}
\end{align}
%\gv{Probability of leaving after first phase (success prob) should be equal to the probability of leaving after any phase of the same period. $p_i^k = p_j^k$. This should also be equal to the BSM success prob. of exponential distribution.}
%\gv{Let $q$ = BSM success in exponential case. Let $p_i^k = q$. $P_k$ = BSM success in Coxian case.}
where $p^k_l$ is the probability of leaving the queueing network after phase $l$ of period $k$ due to a successful BSM at the EGS, $p^k_{l,l+1}$ is the probability of transitioning to the next phase of the period, and $N^k$ is the number of phases in the period (above, we let $\prod\limits_{m=a}^b x_m=1$ if $b<a$). Then, the probability of the session reaching the $i$th phase of the $j$th period is given by
\begin{align}
    \mathrm{P}\left(\text{session reaches } i\text{th phase of period }j,\text{ starting from the 1st phase of the session}\right)\nonumber\\ 
    = \prod\limits_{l=1}^{i-1}p^j_{l,l+1}\prod\limits_{k=1}^{j-1}\left(1-P_k\right),
    \label{eq:prob_sesh_reaches_ith_cox}
\end{align}
where for any $k$ that corresponds to a calibration period $P_k=0$.
\section{Proof of Insensitivity Theorem for Single EPR Pair Generation with Strict Resource Reservation}
\label{app:insensitivity}
In this appendix, we present a proof of Theorem~\ref{thm:insensitivity}.
\begin{proof}
To show insensitivity to the distributions of periods we will prove that the stationary distribution for the total number of ongoing sessions of flows remains the same when attempts and calibration periods have either Coxian or exponential distributions with the same mean. 

We first derive the expressions for the stationary distribution for the case where each attempt and calibration period are exponentially distributed. If $M_A^f$ and $M_C^f$ are the number of attempt and calibration periods, respectively, for flow $f$ (and there are no idle periods in this system), then $M^f\equiv M_A^f+M_C^f$ is the total number of periods in each session of type $f$. The state of the system can then be described using the vector
\begin{align}
    \vb{Z} = \left[\vb{Z}^{f_1},\dots,\vb{Z}^{f_F}\right] = \left[Z_i^{f}, 1\leq i\leq M^f, f\in\mathcal{F}\right],
\end{align}
where $Z_i^{f}$ indicates the total number of ongoing sessions of type $f$ in period $i$. Let $L$ be the state dimension, i.e., $L\equiv \sum\limits_{f\in\mathcal{F}}M^f$; then the admissible state space for this system is given by
\begin{align}
    \mathcal{S}^{\prime} = \left\{\vb{Z}\in\mathbb{N}^L:\sum\limits_{f\in\mathcal{F}}\sum\limits_{i=1}^{M^f}Z_i^f\leq C, \sum\limits_{f\in\mathcal{F}:n_k\in f}\sum\limits_{i=1}^{M^f}Z_i^f\leq c_k,\forall k\in\{1,\dots,K\}\right\}.
\end{align}
Let $1/\theta^f_j$ and $1/\sigma^f_j$ be the average duration of  attempt and calibration periods $j$, respectively, for $f\in\mathcal{F}$. Further, let $\omega_j^f$ be the arrival rate into the $j$th period of a session belonging to flow $f$.
By the results of the previous subsection, we know that the stationary distribution for this system is
\begin{align}
    \pi^{\prime}(\vb{Z}) = \left(\sum\limits_{\vb{Y}\in\mathcal{S}^{\prime}}\prod\limits_{f\in\mathcal{F}}\prod\limits_{i=1}^{M^f}\frac{(\eta^f_i)^{Y^f_i}}{Y^f_i!}\right)^{-1}\prod\limits_{f\in\mathcal{F}}\prod\limits_{j=1}^{M^f}\frac{(\eta^f_j)^{Z^f_j}}{Z^f_j!},
    \label{eq:stationary_exp}
\end{align}
where period $j$'s traffic intensity for flow $f$ is $\eta_j^f=\omega_j^f/\theta^f_j$ if $j$ corresponds to an attempt period, and $\eta_j^f=\omega_j^f/\sigma^f_j$ if it corresponds to a calibration period. 
Analogous to the Coxian case, $\omega_j^f=\nu_j^f$ if $j=1$, else it is
\begin{align}
    %\lambda_i^f = \nu_1^f \hat{p}^f_{i},
    \omega_j^f = \nu_1^f \zeta^f_{1,2}\cdots \zeta^f_{j-1,j},
    \label{eq:lambda_i_exp}
\end{align}
%where $\hat{p}^f_{i}$ denotes 
where $\zeta^f_{l-1,l}$ is the probability of transitioning from period $l-1$ to period $l$ of a session belonging to flow $f$. Let $\zeta_j^f$ be the probability of leaving the queueing network after the $j$th queue due to a successful BSM at the EGS for creating an entanglement, and note that in this service mode $\zeta_j^f=0$ if $j$ corresponds to a calibration period.
%I.e., we scale the external arrival rate by
%the probability of reaching the $i$th period of a type $f$ session beginning from its first period.

To prove insensitivity, we assume that the average duration of a period in the exponential scenario is equal to the average duration of the corresponding period in the Coxian scenario. In other words, we have that for the $j$th period, depending on whether it is an attempt or calibration period, respectively,
\begin{align}
    \frac{1}{\theta^f_j} = \sum\limits_{i=1}^{N_A^f}\frac{r_{i,j}^f}{\mu_{i,j}^f}, \quad\text{or}\quad \frac{1}{\sigma^f_j} = \sum\limits_{i=1}^{N_C^f}\frac{r_{i,j}^f}{\mu_{i,j}^f},
    \label{eq:mu_exp_cox}
\end{align}
where $r_{i,j}^f$ denotes the probability of reaching the $i$th phase of the $j$th period starting from its initial phase, and $\mu_{i,j}^f$ denotes the average duration of the $i$th phase of period $j$ within a flow-$f$ session. 
%Similarly, we have for calibration periods that
%\begin{align}
%    \frac{1}{\sigma^f_j} = \sum\limits_{i=1}^{N_C^f}\frac{r_{i,j}^f}{\sigma_{i,j}^f},
%    \label{eq:sigma_exp_cox}
%\end{align}
%where $r_{i,j}^f$ denotes the probability of reaching the $i$th phase of the calibration period from its first phase, and $\sigma_{i,j}^f$ denotes the average duration of the $i$th phase for flow $f$.

We further assume that for each attempt period of the exponential scenario, the entanglement success probability equals that of the corresponding attempt period in the Coxian scenario. That is, $\zeta_j^f=P_j^f$ in this case, where $P_j^f$ is the probability of leaving the queueing network during the $j$th period of a flow-$f$ session in the Coxian scenario, as computed in (\ref{eq:prob_leave_period_cox}). This assumption is physically motivated for scenarios where the mean duration of an attempt is significantly shorter than the timescale of parameter drift affecting a quantum node. For a detailed justification, see the discussion on success probability, Section \ref{subsubsec:pgen}.

In the Coxian scenario, we can rewrite the state representation in (\ref{eq:cox_state}) as follows:
\begin{align}
    \vb{x} = [\vb{x}_{1}^{f_1},\dots,\vb{x}_{M^{f_1}}^{f_1},\dots,\vb{x}_{1}^{f_F},\dots,\vb{x}_{M^{f_F}}^{f_F}],
\end{align}
where $M^{f_i}$ is the number of periods in sessions of type $f_i$.
For any $\vb{x}\in\mathcal{S}$, let $q^f_j(\vb{x}^f_j) = \sum_i x_{i,j}^f$, i.e., this is the number of sessions in the $j$th period of sessions belonging to flow $f$. Then for a given $\vb{Z}\in\mathcal{S}^{\prime}$,
\begin{align}
    \sum\limits_{\vb{x}:q^{f}_m(\vb{x}^f_m)=Z_m^{f},\forall m, f}\pi(\vb{x}) &= D\sum\limits_{\vb{x}^{f_1}_{1}:q_1^{f_1}(\vb{x}^{f_1}_{1})=Z_1^{f_1}}\cdots
\sum\limits_{\vb{x}^{f_F}_{M^{f_F}}:q_{M^{f_F}}^{f_F}(\vb{x}^{f_F}_{M^{f_F}})=Z^{f_F}_{M^{f_F}}}\prod\limits_{f\in\mathcal{F}}\prod\limits_{j=1}^{M^f}\prod\limits_{i=1}^{N^f_j}\frac{\left(\rho_{i,j}^f\right)^{x_{i,j}^f}}{x_{i,j}^f!},
\label{eq:29}
\end{align}
where $D$ is the normalizing constant of the Coxian distribution (see (\ref{eq:cox_stationary_full})), $N^f_j$ is the number of phases in the $j$th period of a session belonging to flow $f$, and $\rho^f_{i,j}=\lambda_{i,j}^f/\mu_{i,j}^f$ is the traffic intensity in the $i$th phase of this period. Multiple applications of the multinomial theorem yield
\begin{align}
    \sum\limits_{\vb{x}:q^{f}_m(\vb{x}^f_m)=Z_m^{f},\forall m, f}\pi(\vb{x}) &=D\prod\limits_{f\in\mathcal{F}}\prod\limits_{j=1}^{M^f}\frac{1}{Z^f_j!}\left(\sum\limits_{i=1}^{N^f_j}\rho_{i,j}^f\right)^{Z_j^f}.
    \label{eq:30}
\end{align}
When $j$ corresponds to an attempt period, $N_j^f = N_A^f$, and we have that
\begin{align}
    \sum\limits_{i=1}^{N^f_j}\rho_{i,j}^f = \sum\limits_{i=1}^{N^f_A}\rho_{i,j}^f = \sum\limits_{i=1}^{N^f_A}\frac{\lambda_{i,j}^f}{\mu_{i,j}^f} &= 
\nu_{1}^f\sum\limits_{i=1}^{N^f_A}\frac{\tilde{p}^f_{i,j}}{\mu_{i,j}^f}
\label{eq:35}\\
&=
%= \frac{\lambda_{j}^f}{\theta_j^f} =
\nu_{1}^f\sum\limits_{i=1}^{N^f_A}\frac{
\prod\limits_{l=1}^{i-1}p^{j,f}_{l,l+1}\prod\limits_{k=1}^{j-1}\left(1-P^f_k\right)}{\mu_{i,j}^f}\label{eq:36}\\
&=\nu_{1}^f\prod\limits_{k=1}^{j-1}\left(1-P^f_k\right)\sum\limits_{i=1}^{N^f_A}\frac{r^f_{i,j}}{\mu_{i,j}^f}\label{eq:37}\\
    &= \frac{\nu_{1}^f}{\theta_j^f}\prod\limits_{k=1}^{j-1}\left(1-\zeta^f_k\right)\label{eq:38}\\
    &= \frac{\nu_{1}^f}{\theta_j^f}\prod\limits_{k=1}^{j-1}\zeta^f_{k,k+1}
    =\eta^f_j,\label{eq:39}
\end{align}
where in the last equality of (\ref{eq:35}) we use (\ref{eq:lambda_i_cox}), in (\ref{eq:36}) we use (\ref{eq:prob_sesh_reaches_ith_cox}); for (\ref{eq:37}) we use the fact that 
%the first product in the numerator 
$\prod\limits_{l=1}^{i-1}p^{j,f}_{l,l+1}$
simply represents the probability of going from the first phase of the period to the $i$th; (\ref{eq:38}) follows from our two assumptions on the entanglement generation success probability and mean period duration; and the last equality of (\ref{eq:39}) follows from (\ref{eq:lambda_i_exp}) and the definition of traffic intensity.
We can use a similar argument to show that when $j$ corresponds to a calibration period, the sum within (\ref{eq:30}) equals $\eta^f_j=\omega^f_j/\sigma^f_j$.

We next address the normalization constant $D$ within (\ref{eq:30}):
\begin{align}
    D = \sum\limits_{\vb{y}\in\mathcal{S}}\prod\limits_{f\in\mathcal{F}}\prod\limits_{i=1}^{L^f}\frac{(\rho^f_i)^{y^f_i}}{y^f_i!} &= \sum\limits_{\vb{Z}\in\mathcal{S}^{\prime}}\sum\limits_{\vb{x}:q^{f^{\prime}}_m(\vb{x}^{f^{\prime}}_m)=Z_m^{f^{\prime}},\forall m, f^{\prime}}~\prod\limits_{f\in\mathcal{F}}\prod\limits_{j=1}^{M^f}\prod\limits_{i=1}^{N_j^f}\frac{(\rho^f_{i,j})^{y^f_{i,j}}}{y^f_{i,j}!},
\end{align}
where we have partitioned the state space $\mathcal{S}$ based on the number of jobs in each period of a session.
Applications of the multinomial theorem as in (\ref{eq:29}), followed by utilization of (\ref{eq:mu_exp_cox}), results in $D$'s equivalence to the normalization constant of the exponential-scenario stationary distribution, see (\ref{eq:stationary_exp}). This result, combined with (\ref{eq:30}), means that
\begin{align}
    \sum\limits_{\vb{x}:q^{f}_m(\vb{x}^f_m)=Z_m^{f},\forall m, f}\pi(\vb{x}) = \pi^{\prime}(\vb{Z}).
\end{align}
The above equation shows that the stationary distribution for the total number of ongoing sessions of flows remains the same when attempts and calibration periods have either Coxian or exponential distributions with the same mean. From this, we conclude that the insensitivity property holds.
\end{proof}
\section{Analysis of System with Idle Periods and Jump-Over Blocking}
\label{app:jump_over}
In this appendix, we derive the stationary distribution $\pi(\vb{x})$, $\vb{x}\in \mathcal{S}^{\prime\prime}$ for the service mode with resource relinquishment and jump-over blocking as the retrial mechanism. In this service mode, a session that had initially been admitted for service by the EGS may get blocked later on, depending on the state of the system $\vb{x}$ at the moment the session leaves an idle period. Thus, to define the traffic characteristics within a session, we require transition probabilities that are functions of the state. Let $p^f_{i,j}(\vb{x})$ be the probability of transitioning from the $i$th phase of a type $f$ session to its $j$th phase. Since in this service mode, a session can only end during its last period, the corresponding model is most similar to the ``multiple EPR pair generation with strict resource reservation'' scenario. Thus, all traffic  characteristics of Section~\ref{sec:mult_epr_gen_strict} apply (i.e., $p^f_{i,j}(\vb{x})= p^f_{i,j}$, $\forall f, i, j, \vb{x}$), with the following exceptions:
\begin{itemize}
    \item[-] If $i$ is the starting phase of an active period (excluding the first period of a session), $j$ is any phase of the preceding idle period, and $\vb{x}+\vb{e}_i\notin \mathcal{S}^{\prime\prime}$, then $p^f_{j,i}(\vb{x})=0$;
    \item[-] If moreover $k$ is the starting phase of the next idle period, then $p^f_{j,k}(\vb{x})=p^f_{j,i}$.
\end{itemize}
These amendments describe the jump-over blocking dynamics.

The external arrival rates into the system are zero for all phases, except for the first of every session; these are given by $\nu_1^f(\vb{x}) = \nu_1^f$ if $\vb{x}+\vb{e}_1^f \in \mathcal{S}^{\prime\prime}$, and zero otherwise.  Letting $\gamma_i^f$ be the probability that a session belonging to flow $f$ reaches its $i$th phase starting from its first phase, we have that the total arrival rate into phase $i$ for flow $f$ while in state $\vb{x}$ is given by
\begin{align}
    \lambda_i^f(\vb{x}) = \begin{cases}
        \nu_1^f\gamma_i^f \equiv \lambda_i^f, & \text{if }\vb{x}+\vb{e}_i^f \in \mathcal{S}^{\prime\prime},\\
        0, & \text{else}.
    \end{cases}
    \label{eq:lambda_jump_over}
\end{align}
Let us examine why $\lambda_i^f$ has no dependence on the state $\vb{x} \in \mathcal{S}^{\prime\prime}$. For the following, suppose $\vb{x}+\vb{e}_i^f \in \mathcal{S}^{\prime\prime}$. First, consider $i$ to be any phase of an active period, and $j$ to be the first phase of the same period. Then $\gamma_i^f = p_{j,j+1}^f\dots p_{i-1,i}^f$. Next, let $i$ be the first phase of an idle period. The arrival rate into this phase is $\nu_1^f$, regardless of whether the transition is happening from the preceding active period, or from the previous idle period. The latter would happen if the system was at capacity (the EGS did not have enough resource modules) at the time of the transition, thereby causing the next active period to be skipped (along with any other active periods that immediately follow it). The new routing rules introduced above ensure that the arrival rate into the idle period is equal to that of the period(s) being jumped over. The arrival rate into any other phase of an idle period is then computed similar to that of an active period's non-initial phase.

To obtain the stationary distribution of this system, we apply the local balance approach as in Section~\ref{sec:single_epr_strict}. The rate of leaving state $\vb{x}\in \mathcal{S}^{\prime\prime}$ due to an outside arrival is
\begin{align}
    A = \pi(\vb{x})\sum\limits_{f\in\mathcal{F}}\nu_1^f(\vb{x}), 
\end{align}
and the rate of entering $\vb{x}$ due to job departure to the outside is
\begin{align}
    A^{\prime} = \sum\limits_{f\in\mathcal{F}}\sum\limits_{i=1}^{L^f}\pi(\vb{x}+\vb{e}^f_i)(x^f_i+1)\mu_i^f p_i^f(\vb{x}+\vb{e}^f_i).
    \label{eq:41}
\end{align}
In (\ref{eq:41}), we can replace $p_i^f(\vb{x}+\vb{e}^f_i)$ with $p^f_i$ since if $\vb{x}+\vb{e}^f_i\notin \mathcal{S}^{\prime\prime}$, then by our convention $\pi(\vb{x}+\vb{e}^f_i)=0$. Recall that if $\vb{x}+\vb{e}^f_i\in \mathcal{S}^{\prime\prime}$ but $i$ corresponds to any phase that does not belong to the last period of a session, then $p^f_i=0$. Then, $A=A^{\prime}$ yields a  traffic conservation equation of the same form as (\ref{eq:12}), if we let
$\pi(\vb{x})$ take the same form as (\ref{eq:pi_guess1}) (note that the definition of $\lambda_i^f(\vb{x})$ is now given by (\ref{eq:lambda_jump_over}) throughout). Local balance for each queue $i$ of a session belonging to flow $f$ yields the expected traffic equations
\begin{align}
    \lambda_i^f(\vb{x}) = \nu_i^f(\vb{x}) + \sum\limits_{j\neq i}\lambda_j^f(\vb{x})p_{j,i}^f(\vb{x}+\vb{e}_j^f-\vb{e}_i^f).
\end{align}

In order to obtain the blocking probability, we must consider not only the external arrival process, but also the internal jump-over blocking mechanism. For the former, we can apply the PASTA property so long as arrivals are a Poisson process. For the latter, we utilize the ``Departures See Time Averages'' corollary from \cite{bonald2006erlang}, which was proven for a the slightly different model presented in that manuscript -- namely, there each active period is followed by an idle period. The corollary nevertheless applies to our modified system, and the proof is identical; we include it below for completeness.

\begin{corollary}
For the system introduced in Section~\ref{subsec:jump_over} and  Appendix~\ref{app:jump_over}, denote a free queue as one that can admit a new job from any already-admissible state; i.e., for flow $f$, $i$ is free if $\vb{x}+\vb{e}_i^f \in \mathcal{S}^{\prime\prime}$ for any $\vb{x}\in \mathcal{S}^{\prime\prime}$.
Jobs leaving any free queue (either moving to another
queue or leaving the network) see the network in steady state immediately after their departure.
\end{corollary}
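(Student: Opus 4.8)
The plan is to establish a PASTA-like statement for the internal transitions of the Markov chain, following the argument of the analogous corollary in \cite{bonald2006erlang}. The key observation is that the process $\vb{x}(t)$ on $\mathcal{S}^{\prime\prime}$ is a stationary, reversible-like Markov chain in the sense that the stationary distribution $\pi$ satisfies the detailed local-balance equations derived in Appendix~\ref{app:jump_over}. I would first fix a free queue $i$ of a flow $f$, meaning that $\vb{x}+\vb{e}_i^f\in\mathcal{S}^{\prime\prime}$ whenever $\vb{x}\in\mathcal{S}^{\prime\prime}$ -- equivalently, $i$ is never a bottleneck for admission. The goal is to show that, conditioned on a departure from queue $i$ occurring at time $t$ (a job completing phase $i$ and either routing elsewhere or leaving the network), the distribution of the post-jump state $\vb{x}(t^+)$ equals the stationary distribution $\pi$.

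The main steps are as follows. First, characterize the point process of departures from queue $i$: by the theory of stationary marked point processes, the rate of such departures while the chain is in state $\vb{x}$ (with $x_i^f\geq 1$) is $\pi(\vb{x})\,x_i^f\,\mu_i^f$, and the post-departure state is $\vb{x}-\vb{e}_i^f$ routed according to the transition probabilities $p^f_{i,\cdot}(\vb{x}-\vb{e}_i^f)$ (or out of the network). Second, compute the Palm distribution of the state seen just after a departure from queue $i$: the probability that the post-jump state is some $\vb{y}\in\mathcal{S}^{\prime\prime}$ is proportional to $\sum_{\vb{x}}\pi(\vb{x})\,x_i^f\,\mu_i^f\,\mathbbm{1}\{\text{jump from }\vb{x}\text{ lands at }\vb{y}\}$, where the sum ranges over pre-departure states $\vb{x}$ with $x_i^f\geq 1$. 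Third, use the product form $\pi(\vb{x}) = D\prod_{g,k}(\rho^g_k)^{x^g_k}/x^g_k!$ together with the identity $\pi(\vb{x})\,x_i^f = \pi(\vb{x}-\vb{e}_i^f)\,\rho_i^f$ (immediate from the product form) to rewrite $\pi(\vb{x})\,x_i^f\,\mu_i^f = \pi(\vb{x}-\vb{e}_i^f)\,\lambda_i^f$, since $\rho_i^f\mu_i^f=\lambda_i^f$ and $\lambda_i^f$ is state-independent precisely because $i$ is a free queue (this is where freeness is essential -- it guarantees $\lambda_i^f(\vb{x})=\lambda_i^f$ for all relevant $\vb{x}$). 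Fourth, having replaced the pre-departure sum by a sum over states $\vb{x}' = \vb{x}-\vb{e}_i^f$, observe that the resulting expression -- $\lambda_i^f\sum_{\vb{x}'}\pi(\vb{x}')\,p^f_{i,\cdot}(\vb{x}')\cdots$ landing at $\vb{y}$ -- is exactly $\lambda_i^f$ times the total probability flux, under $\pi$, of the internal arrival process into whatever queue the job routes to, aggregated over the routing; by the traffic equations established in Appendix~\ref{app:jump_over} this reorganizes to $\pi(\vb{y})$ up to the normalizing factor, so the Palm distribution of the post-jump state is $\pi$ itself.

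The step I expect to be the main obstacle is the bookkeeping in the third and fourth steps: one must carefully track the jump-over routing so that a departure from queue $i$ that would skip one or more active periods is correctly attributed, and verify that summing $\pi(\vb{x})p^f_{i,j}(\vb{x})$ over the branching (ordinary routing to the next phase, success-departure, or jump-over to a later idle period) telescopes to give the arrival flux into $\vb{y}$ under $\pi$. The subtlety is that the routing probabilities $p^f_{i,j}(\vb{x})$ are genuinely state-dependent here, unlike in the strict-reservation models, so the cancellation relies on the fact that the \emph{aggregate} arrival rate $\lambda_i^f$ into a free queue is state-independent even though individual routing branches are not. Once this reorganization is carried out, the conclusion that jobs leaving a free queue see the network in steady state immediately after their departure follows directly, and the statement about arrivals (first calls and retrials seeing $\pi$) used in the blocking-probability derivation is the special case where the target queue is the first phase of an idle period or the first phase of the session.
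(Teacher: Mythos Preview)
Your proposal contains the correct key identity (step 3: $\pi(\vb{x})\,x_i^f\,\mu_i^f = \pi(\vb{x}-\vb{e}_i^f)\,\lambda_i^f$, valid because $i$ is free), but you have over-read the statement and thereby set yourself a harder task than necessary. The corollary asserts that the \emph{departing job sees the rest of the network} in distribution $\pi$; that is, if the pre-departure state is $\vb{x}+\vb{e}_i^f$, the relevant ``seen'' state is $\vb{x}$, not the full post-routing state of the chain. The paper's proof is therefore a two-line Palm computation: one writes
\[
\pi_i^f(\vb{x})=\frac{\pi(\vb{x}+\vb{e}_i^f)\,(x_i^f+1)\,\mu_i^f}{\sum_{\vb{y}\in\mathcal{S}''}\pi(\vb{y}+\vb{e}_i^f)\,(y_i^f+1)\,\mu_i^f},
\]
applies exactly your identity from step 3 to both numerator and denominator, and obtains $\pi_i^f(\vb{x})=\pi(\vb{x})/\sum_{\vb{y}}\pi(\vb{y})=\pi(\vb{x})$. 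Freeness of $i$ is used only to guarantee that $\vb{x}+\vb{e}_i^f\in\mathcal{S}''$ for every $\vb{x}\in\mathcal{S}''$, so that the denominator really sums over all of $\mathcal{S}''$.

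Your steps 2 and 4, which track the state-dependent jump-over routing and invoke the traffic equations, are aimed at showing that the \emph{post-routing} chain state has law $\pi$. That is a different (and strictly stronger) claim, and it is not what the corollary asserts nor what the subsequent blocking-probability argument needs: to decide whether a call leaving an idle period is blocked, one only needs the law of the system \emph{without} that call, which is precisely the ``seen'' state above. Consequently, the bookkeeping you flag as the main obstacle---telescoping the state-dependent $p^f_{i,j}(\vb{x})$ through the traffic equations---simply does not arise. If you drop the routing layer and stop after your step 3, you recover the paper's argument verbatim.
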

\begin{proof}
Define $\pi^f_i(\vb{x})$ as the probability that jobs leaving free queue $i$ of a session belonging to flow $f$ see the network in state $\vb{x}$ immediately after their departure.  This probability is given by
\begin{align}
    \pi^f_i(\vb{x}) = \frac{\pi(\vb{x}+\vb{e}_i^f)\mu_i^f(x_i^f+1)}{\sum\limits_{\vb{y}\in\mathcal{S}^{\prime\prime}}\pi(\vb{y}+\vb{e}_i^f)\mu_i^f(y_i^f+1)},
    \label{eq:46}
\end{align}
where the numerator represents the transition rate out of state $\vb{x} + \vb{e}_i^f$ and the denominator represents the transition rate out of all possible admissible states. We remark here that since $i$ is a free queue, $\vb{x}+\vb{e}_i^f\in\mathcal{S}^{\prime\prime}$, $\forall \vb{x}\in\mathcal{S}^{\prime\prime}$ by definition. 

In the non-trivial case that $\vb{x}\in\mathcal{S}^{\prime\prime}$, we know from the stationary distribution analysis that
\begin{align}
    \pi(\vb{x}+\vb{e}_i^f) = \frac{\rho_i^f}{x_i^f+1}\pi(\vb{x}).
\end{align}
Using this result with (\ref{eq:46}) and simplifying yields
\begin{align}
    \pi^f_i(\vb{x}) = \frac{\pi(\vb{x})}{\sum\limits_{\vb{y}\in\mathcal{S}^{\prime\prime}}\pi(\vb{y})}=\pi(\vb{x}),
\end{align}
where the last equality follows from the fact that the denominator evaluates to one since the sum is over all admissible states.
\end{proof}
 
With this corollary, we conclude that for an active period that immediately follows an idle period,
the blocking probability is the stationary probability that upon departure from said idle period, all EGS resource modules are engaged. As mentioned earlier, the same applies to the first period of a session (recall that by our convention, these are active periods). Finally, consider any active period that immediately follows another active period in the session: its blocking probability is equal to that of the first active period in the batch. In other words, if $j$ is the active period under consideration, and $i<j$ is the closest idle period that precedes it (such that there are no other idle periods between $i$ and $j$), then $i+1$ is the first active period in the batch that contains $j$, and $j$ has the same blocking probability as $i+1$. If there are no idle periods preceding $j$, then $j$ has the same blocking probability as the first active period in the session. We thus conclude that the blocking probability for an arbitrary active period has the same form as (\ref{eq:prob_C}), with the only difference being the definition of the traffic intensities $\rho_i^f$.

% \begin{theorem}
% \label{thm:blocking_JumpOver_flow}
%     For the system with multiple EPR pair generation operating in the mode with resource relinquishment, the probability that an arriving request belonging to flow $i\in\{1,\dots,F\}$ is blocked is given by 
%     \begin{align*}
%     \overline{\pi}_i(C) = 
%     \left(\sum\limits_{\vb{q}\in\mathcal{Q}^{\prime}(i)}\prod\limits_{j=1}^F\frac{\left(\rho^{f_j}\right)^{q_j}}{q_j!}\right)^{-1}\sum\limits_{\vb{q}\in\mathcal{Q}(C)\bigcap \mathcal{Q}^{\prime}(i)}\prod\limits_{j=1}^F\frac{\left(\rho^{f_j}\right)^{q_j}}{q_j!}, 
%     \end{align*}
%     where $\vb{q} = [q_1,\dots,q_F]$ represents the number of active sessions $q_i$ from each flow $f_i$,
%     \begin{align*}
%     \quad\mathcal{Q}(h) &\coloneqq \left\{\vb{q}=[q_1,\dots,q_F] \in \mathbb{Z}_+^F: \sum\limits_{i=1}^Fq_i=h,~ \sum\limits_{i: n_k\in f_i}q_i\leq c_k,\forall k\in\{1,\dots,K\}\right\},\text{ and}\label{eq:set_Q}\\
%     \mathcal{Q}^{\prime}(i) &\coloneqq \left\{\vb{q}=[q_1,\dots,q_F] \in \mathbb{Z}_+^F: \sum\limits_{j: n_k\in f_j}q_j\leq c_k,\forall n_k\notin f_i,~ \sum\limits_{j:n_l\in f_j}q_j < c_l, n_l\in f_i\right\}.\label{eq:set_Qprime}
%     \end{align*}
% \end{theorem}
\section{Extended Information on Physical EGS Operation Protocols}
\label{appendix:OpAppendix}
\subsection{CI Generation Protocol}
The goal of a node pair $(n_i, n_j)$ running a CI generation protocol is to generate a data record of a qubit string at each of nodes $n_i$ and $n_j$ such that the data records of $n_i$ and $n_j$ have entanglement-like correlations. These data records can serve as raw key, and the CI protocol is a form of QKD \cite{bb84,e91} that uses a BSA as a measurement device at a central midpoint, known as MDI-QKD \cite{originalMDIQKD1, originalMDIQKD2, MDIQKDDeployed}. The advantage of MDI-QKD over traditional QKD protocols is that it is more secure, as it removes the possibility of detector side-channel attacks. Detector side-channel attacks have been shown to be easy to implement and challenging to defend against \cite{sideChannels1,sideChannels2,sideChannels3}, hence this is an important advantage.  To support running the CI generation protocol, nodes $n_i$ and $n_j$ require a photon source. This source can be a multi-photon emitter attenuated to the single-photon level, such as an attenuated laser pulse, or a true single photon source like a quantum communication qubit. The CI generation protocol's compatibility with multi-photon sources is advantageous. In the early stages of quantum network development, multi-photon sources are more cost-effective and easier to build/purchase and maintain compared to single photon sources \cite{MDIQKDDeployed}.

In broad terms, the protocol consists of four main stages. First, nodes exchange calibration information to ensure that the photons emitted by their sources will be spectrally indistinguishable. Second, each node encodes a qubit state into an emission of the photon source. Third, the photons are sent to a BSA, at which a BSM is attempted between the two received photons, which if successful, projects the photons into a maximally entangled state. Fourth, if the BSM succeeds, the records of the qubits prepared by ($n_i$, $n_j$) share entanglement-like correlations and a success flag is communicated to the nodes. In case of success the nodes store the qubit record. If the measurement is not successful, a failure flag is passively communicated to the nodes as the absence of sending a success flag; after the RTT time between each node and the EGS passes, plus some buffer time accounting for measurement and jitter, nodes discard qubit records from failed attempts. It is not necessary to wait for the outcome flag before commencing subsequent attempts. However, if the nodes use communication qubits to generate the single photons, the communication qubits must be measured in between subsequent attempts in order to obtain the data record. If a simple photon source is used this measurement step is not necessary, as the data record corresponds to the preparation record of the photon. Sequential execution of the second and third stages constitute a single CI generation attempt and may be repeated in batches. For any attempt, the fourth stage must eventually occur -- the outcome flag must eventually be received by the nodes. Otherwise, a stop of the protocol may be triggered. 

A communication sequence between the node pair ($n_i, n_j$) allocated use of an EGS resource to perform CI generation is illustrated in Figure \ref{fig:CorrelatedInfoCommSeq}.
\begin{figure}
\center
\begin{tikzpicture}[>=latex]
\coordinate (A) at (1,7);
\coordinate (B) at (1,0);
\coordinate (C) at (6,7);
\coordinate (D) at (6,0);
\coordinate (E) at (11,7);
\coordinate (F) at (11,0);
\draw[thick] (A)--(B) (C)--(D) (E)--(F);
\draw (A) node[above]{\Large $n_i$, Initiator};
\draw (C) node[above]{\Large EGS};
\draw (E) node[above]{\Large $n_j$, Secondary Node};

\coordinate (G) at ($(A)!.02!(B)$);
\draw (G) node[left]{\textit{submit request $\phantom{x}$}};

\coordinate (H) at ($(C)!.11!(D)$);
\draw (H) node[right]{\begin{tabular}{l}
\verb$Block Request$\\
\textit{no resource available}
\end{tabular}};
\draw[-Straight Barb] (G) -- (H) node[midway,sloped,above]{req($i,j, P_{ij}$)};

\coordinate (I) at ($(A)!.16!(B)$);
\draw (I) node[left]{\textit{submit request $\phantom{x}$ }};
\draw[|->|] (G) -- (I) node[midway,sloped,below]{$\Delta \tau_1$};

\coordinate (J) at ($(C)!.25!(D)$);
\coordinate (J') at ($(C)!.26!(D)$);
\draw (J) node[right]{$\phantom{x} $\verb$Accept Request$};
\draw[-Straight Barb] (I) -- (J) node[midway,sloped,above]{req($i,j, P_{ij}$)};

\coordinate (K) at ($(A)!0.35!(B)$);
\draw[-Straight Barb] (J') -- (K) node[midway,sloped,below]{ack($i, j, t_{\text{start}}, P_{ij}$)};

\coordinate (L) at ($(E)!0.35!(F)$);
\draw[-Straight Barb] (J') -- (L) node[midway,sloped,below]{ack($i, j, t_{\text{start}}, P_{ij}$)};

\coordinate (M) at ($(A)!0.38!(B)$);
\draw (M) node[left]{$t_{\text{start}}$};
\draw[-|] (K) -- (M) node[midway,sloped,above]{};

\coordinate (N) at ($(E)!0.38!(F)$);
\draw (N) node[right]{$t_{\text{start}}$};
\draw[-|] (L) -- (N) node[midway,sloped,above]{};

\coordinate (O) at ($(C)!0.47!(D)$);
\coordinate (O') at ($(C)!0.48!(D)$);
\draw[-Straight Barb, teal, line width=0.4mm] (N) -- (O) node[midway,sloped,above]{};
\draw[-Straight Barb, teal, line width=0.4mm] (M) -- (O) node[midway,sloped,above]{};

\coordinate (P) at ($(E)!0.57!(F)$);
\coordinate (Q) at ($(A)!0.57!(B)$);
\draw[-Straight Barb, blue, dotted, line width=0.3mm] (O') -- (P) node[midway,sloped,above]{\verb$Success   $};
\draw[-Straight Barb, blue, dotted, line width=0.3mm] (O') -- (Q) node[midway,sloped,above]{\verb$   Success$};

\coordinate (P') at ($(E)!0.58!(F)$);
\coordinate (Q') at ($(A)!0.58!(B)$);
\coordinate (R) at ($(C)!0.67!(D)$);
\draw[-Straight Barb, teal, line width=0.4mm] (P') -- (R) node[midway,sloped,above]{};
\draw[-Straight Barb, teal, line width=0.4mm] (Q') -- (R) node[midway,sloped,above]{};

\coordinate (R') at ($(C)!0.68!(D)$);
\coordinate (S) at ($(E)!0.77!(F)$);
\coordinate (T) at ($(A)!0.77!(B)$);
\draw[-Straight Barb, red, dashed, line width=0.3mm] (R') -- (S) node[midway,sloped,above]{\verb$Failure   $};
\draw[-Straight Barb, red, dashed, line width=0.3mm] (R') -- (T) node[midway,sloped,above]{\verb$   Failure$};

\coordinate (S') at ($(E)!0.78!(F)$);
\coordinate (T') at ($(A)!0.78!(B)$);
\coordinate (U) at ($(C)!0.87!(D)$);
\draw[-Straight Barb, teal, line width=0.4mm] (S') -- (U) node[midway,sloped,above]{};
\draw[-Straight Barb, teal, line width=0.4mm] (T') -- (U) node[midway,sloped,above]{};

\coordinate (U') at ($(C)!0.88!(D)$);
\coordinate (V) at ($(E)!0.97!(F)$);
\coordinate (W) at ($(A)!0.97!(B)$);
\draw[-Straight Barb, blue, dotted, line width=0.3mm] (U') -- (V) node[midway,sloped,above]{\verb$Success   $};
\draw[-Straight Barb, blue, dotted, line width=0.3mm] (U') -- (W) node[midway,sloped,above]{\verb$   Success$};

\coordinate (EL1) at ($(A)!0.48!(B)$);
\coordinate (ER1) at ($(E)!0.48!(F)$);
\coordinate (EM1) at ($(C)!0.57!(D)$);
\draw[-Straight Barb, teal, line width=0.4mm] (EL1) -- (EM1) node[midway,sloped,above]{};
\draw[-Straight Barb, teal, line width=0.4mm] (ER1) -- (EM1) node[midway,sloped,above]{};

\coordinate (EL2) at ($(A)!0.68!(B)$);
\coordinate (ER2) at ($(E)!0.68!(F)$);
\coordinate (EM2) at ($(C)!0.77!(D)$);
\draw[-Straight Barb, teal, line width=0.4mm] (EL2) -- (EM2) node[midway,sloped,above]{};
\draw[-Straight Barb, teal, line width=0.4mm] (ER2) -- (EM2) node[midway,sloped,above]{};

\coordinate (EM1') at ($(C)!0.58!(D)$);
\coordinate (EL1') at ($(A)!0.67!(B)$);
\coordinate (ER1') at ($(E)!0.67!(F)$);
\draw[-Straight Barb, blue, dotted, line width=0.3mm] (EM1') -- (EL1') node[midway,sloped,above]{\verb$   Success$};
\draw[-Straight Barb, blue, dotted, line width=0.3mm] (EM1') -- (ER1') node[midway,sloped,above]{\verb$Success   $};

\coordinate (EM2') at ($(C)!0.78!(D)$);
\coordinate (EL2') at ($(A)!0.87!(B)$);
\coordinate (ER2') at ($(E)!0.87!(F)$);
\draw[-Straight Barb, blue, dotted, line width=0.3mm] (EM2') -- (EL2') node[midway,sloped,above]{\verb$   Success$};
\draw[-Straight Barb, blue, dotted, line width=0.3mm] (EM2') -- (ER2') node[midway,sloped,above]{\verb$Success   $};

\end{tikzpicture}
\caption{Communication sequence for the CI generation protocol. The field $P_{ij}$ in a request is a package of request parameters, such as the number of attempts requested. $\Delta \tau_1$ is an exponentially distributed inter-arrival time between subsequent requests. Black arrows indicate classical messages. Teal arrows indicate pulses of light at the single photon level sent over optical fiber. Blue, dotted (red, dashed) arrows indicate a success (failure) flag communicated over optical fiber. Commencement of subsequent attempts does not depend on receiving the outcome flag of the previous attempt, thus the communication time between the EGS and nodes does not limit the attempt rate. The attempt rate is limited by the minimum of the rate at which the photon sources at the nodes can output pulses, the channel capacity of the fiber over which photons are sent, the rate at which the detectors can respond to photons (i.e., accounting for detector dead-time), and possibly the rate at which the photon sources may be measured following photon emission.}
\label{fig:CorrelatedInfoCommSeq}
\end{figure}
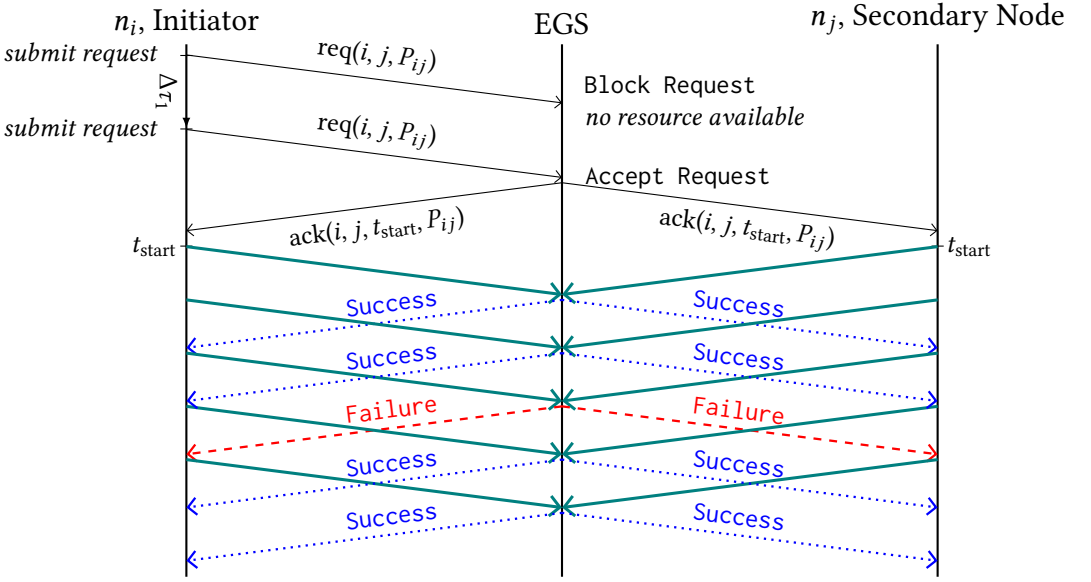

\subsection{Comparison of the HE and CI protocols}

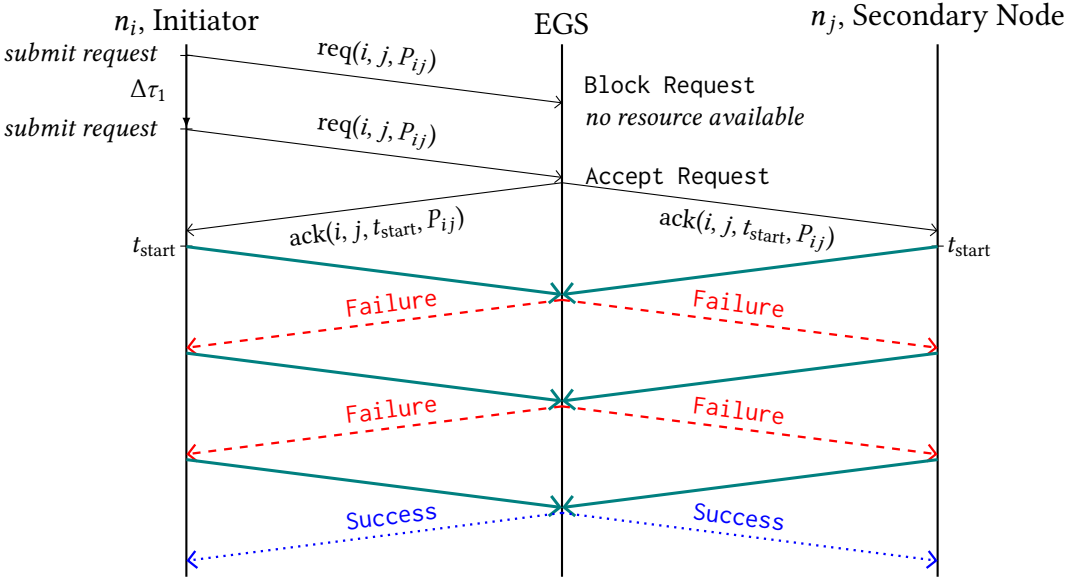
\begin{figure}
\center
\begin{tikzpicture}[>=latex]
\coordinate (A) at (1,7);
\coordinate (B) at (1,0);
\coordinate (C) at (6,7);
\coordinate (D) at (6,0);
\coordinate (E) at (11,7);
\coordinate (F) at (11,0);
\draw[thick] (A)--(B) (C)--(D) (E)--(F);
\draw (A) node[above]{\Large $n_i$, Initiator};
\draw (C) node[above]{\Large EGS};
\draw (E) node[above]{\Large $n_j$, Secondary Node};

\coordinate (G) at ($(A)!.02!(B)$);
\draw (G) node[left]{\textit{submit request $\phantom{x}$}};

\coordinate (H) at ($(C)!.11!(D)$);
\draw (H) node[right]{\begin{tabular}{l}
\verb$Block Request$\\
\textit{no resource available}
\end{tabular}};
\draw[-Straight Barb] (G) -- (H) node[midway,sloped,above]{req($i,j, P_{ij}$)};

\coordinate (I) at ($(A)!.16!(B)$);
\draw (I) node[left]{\textit{submit request $\phantom{x}$ }};
\node (deltaTau1) at (0.5,6.4) {$\Delta \tau_1$};
\draw[|->|] (G) -- (I) node[midway,sloped,below]{};

\coordinate (J) at ($(C)!.25!(D)$);
\coordinate (J') at ($(C)!.26!(D)$);
\draw (J) node[right]{$\phantom{x} $\verb$Accept Request$};
\draw[-Straight Barb] (I) -- (J) node[midway,sloped,above]{req($i,j, P_{ij}$)};

\coordinate (K) at ($(A)!0.35!(B)$);
\draw[-Straight Barb] (J') -- (K) node[midway,sloped,below]{ack($i, j, t_{\text{start}}, P_{ij}$)};

\coordinate (calL1) at ($(A)!0.38!(B)$);

\coordinate (calR1) at ($(E)!0.35!(F)$);

\coordinate (L) at ($(E)!0.35!(F)$);
\draw[-Straight Barb] (J') -- (L) node[midway,sloped,below]{ack($i, j, t_{\text{start}}, P_{ij}$)};

\coordinate (M) at ($(A)!0.38!(B)$);
\draw (M) node[left]{$t_{\text{start}}$};
\draw[-|] (K) -- (M) node[midway,sloped,above]{};

\coordinate (N) at ($(E)!0.38!(F)$);
\draw (N) node[right]{$t_{\text{start}}$};
\draw[-|] (L) -- (N) node[midway,sloped,above]{};

\coordinate (O) at ($(C)!0.47!(D)$);
\coordinate (O') at ($(C)!0.48!(D)$);
\draw[-Straight Barb, teal, line width=0.4mm] (N) -- (O) node[midway,sloped,above]{};
\draw[-Straight Barb, teal, line width=0.4mm] (M) -- (O) node[midway,sloped,above]{};

\coordinate (P) at ($(E)!0.57!(F)$);
\coordinate (Q) at ($(A)!0.57!(B)$);
\draw[-Straight Barb, red, dashed, line width=0.3mm] (O') -- (P) node[midway,sloped,above]{\verb$Failure   $};
\draw[-Straight Barb, red, dashed, line width=0.3mm] (O') -- (Q) node[midway,sloped,above]{\verb$   Failure$};

\coordinate (P') at ($(E)!0.58!(F)$);
\coordinate (Q') at ($(A)!0.58!(B)$);
\coordinate (R) at ($(C)!0.67!(D)$);
\draw[-Straight Barb, teal, line width=0.4mm] (P') -- (R) node[midway,sloped,above]{};
\draw[-Straight Barb, teal, line width=0.4mm] (Q') -- (R) node[midway,sloped,above]{};

\coordinate (R') at ($(C)!0.68!(D)$);
\coordinate (S) at ($(E)!0.77!(F)$);
\coordinate (T) at ($(A)!0.77!(B)$);
\draw[-Straight Barb, red, dashed, line width=0.3mm] (R') -- (S) node[midway,sloped,above]{\verb$Failure   $};
\draw[-Straight Barb, red, dashed, line width=0.3mm] (R') -- (T) node[midway,sloped,above]{\verb$   Failure$};

\coordinate (S') at ($(E)!0.78!(F)$);
\coordinate (T') at ($(A)!0.78!(B)$);
\coordinate (U) at ($(C)!0.87!(D)$);
\draw[-Straight Barb, teal, line width=0.4mm] (S') -- (U) node[midway,sloped,above]{};
\draw[-Straight Barb, teal, line width=0.4mm] (T') -- (U) node[midway,sloped,above]{};

\coordinate (U') at ($(C)!0.88!(D)$);
\coordinate (V) at ($(E)!0.97!(F)$);
\coordinate (W) at ($(A)!0.97!(B)$);
\draw[-Straight Barb, blue, dotted, line width=0.3mm] (U') -- (V) node[midway,sloped,above]{\verb$Success   $};
\draw[-Straight Barb, blue, dotted, line width=0.3mm] (U') -- (W) node[midway,sloped,above]{\verb$   Success$};

\end{tikzpicture}
\caption{Communication sequence for HE protocol. The field $P_{ij}$ in a request is a package of request parameters, such as the number of attempts requested. $\Delta \tau_1$ is the exponentially distributed inter-arrival time between subsequent requests. Black arrows indicate classical messages. Teal arrows indicate single photons sent over optical fiber. Blue, dotted (red, dashed) arrows indicate a success (failure) flag communicated over optical fiber. Commencement of subsequent entanglement generation attempts requires the heralding signal of the previous attempt to be received, thus the attempt rate is limited by the RTT between the EGS and the nodes. }
\label{fig:HeraldedCommSeq}
\end{figure}

The main difference between these two protocols is that in the HE protocol a successful BSM at the EGS means that nodes become entangled whereas in the CI protocol a successful BSM at the EGS means that the data records corresponding to the photons sent to the EGS will display entanglement like correlations. Other important differences are the repetition rate and the probability that single attempts succeed. See Figures \ref{fig:CorrelatedInfoCommSeq} and \ref{fig:HeraldedCommSeq} for visualization of how the communication sequences differ between the protocols. The frequency at which CI generation can be attempted is inherently faster than HE generation because subsequent attempts do not require waiting for the outcome flag to be received (step four in the HE and CI communication sequences). The rate at which CI generation can be attempted with simple photon sources is limited by the emission rate of the photon source, the channel capacity of the fiber over which photons are sent,  or the rate at which the BSA detectors become responsive again following detection of a photon (recovery from dead-time), whichever of these factors is most restrictive. If quantum communication qubits are used to enact the CI generation protocol, the other potential rate limiting factor is how quickly the qubits can be measured following photon emission. To summarize, subsequent attempts of the CI generation protocol can begin before the completion of previous attempts, and the rate limiting factors in an implementation may allow for significantly high attempt repetition rates (MHz or GHz). In contrast, the HE generation protocol requires nodes to receive the heralding flag of any previous attempt before commencing any subsequent attempt, i.e., attempts must be non-overlapping. In a local quantum network with optical fiber links between nodes and the EGS ranging from 5 to 50 km, the RTTs vary from approximately 25 $\mu$s to 250 $\mu$s. Such RTTs dominate the time duration of a single HE generation attempt, limiting the maximum repetition rate of attempts in the protocol to the level of kHz.

A real world implementation of CI generation with simple photon sources \cite{MDIQKDDeployed} displayed a high total probability of a single attempt succeeding. In this implementation the main limitations against an attempt succeeding were loss of a photon travelling in fiber or a detection pattern incompatible with projection into a maximally entangled state. In contrast, in real world implementations of single slick bipartite heralded entanglement generation the probability of an attempt succeeding is limited by the probability of single photon emission from the communication qubit, and further suppressed by the probability of losing a photon travelling in fiber and the success probability of a BSM. 
\section{Simulation implementation}
\label{app:SimulationImplementation}

We call the models \textit{discrete}, \textit{exponential}, and \textit{Cox}, 
which references that in the discrete model every attempt/calibration/idle period is assigned an identical fixed duration, $T_{\text{attempt}}$/$T_{\text{calib}}$/$T_{\text{idle}}$, whereas in the exponential and Cox models every attempt/calibration/idle period is assigned a duration sampled from the appropriate distribution. These simulations allow a method of comparing the predictions of Theorems {\ref{thm:blocking_flow}} and \ref{thm:insensitivity} to the blocking probabilities we would expect to observe in a real implementation where the distribution governing the duration of events coincides with the simulated model. The discrete simulation mode is implemented as a discrete time simulation where the time step is set to the duration of a single attempt, which is the shortest timescale in the system. 
The Poisson process by which flows submit requests is implemented by scaling the exponentially distributed inter-arrival times between requests to a number of time steps and rounding up to the nearest time step. 
The exponential and Cox simulation modes are implemented as discrete event simulations.
% scaling - dividing by the time scale

The expected number of requests placed over the duration of each simulation depends on the request arrival rate from each flow, $\nu^{f}_1$. Each simulation corresponded to 1150.73 seconds of simulated real-time. The data in Figure \ref{fig:validationStrict} is representative for a discussion of the expected number of requests placed over that time. For the minimum (maximum) request arrival rate simulated, the expected number of requests from each flow is 100 (1135). Since there are 28 flows in total, the total expected number of requests placed during a simulation is then 2800 (31780) for the minimum (maximum) request rate simulated. Note that we in general observe very close agreement between the blocking probabilities resulting from discrete/exponential/Cox simulation types as well as with our numeric evaluation of (\ref{eq:prob_C}). The magnitude of disagreement between simulation types may be referred to as the error between simulation types. To investigate whether these errors change systematically when simulation parameters are altered we have executed simulations with various time-step durations pertaining to the discrete simulation as well as with various numbers of attempts per batch of entanglement generation attempts. Under changes to these parameters, the errors between simulation types remain very small ($<1 \%$) and do not change in any systematic way. 

In the implementation of the multiple EPR pair generation with resource relinquishment (jump-over) service mode, the mean duration of the idle periods, $T_{\text{idle}}$, is set to be equivalent to the mean duration of the calibration periods of the single/multiple EPR pair generation with strict resource reservation service modes, i.e., $T_{\text{idle}} = T_{\text{calib}}$. The motivation for this implementation choice is to enable direct comparison between the different service modes, as in Figure \ref{fig:comparePerfMetrics}.

In the non-homogeneous traffic scenario considered, for every flow, if one or both of the users involved is connected by a longer link, then the attempt time for the flow is lengthened to accommodate the greater RTT between the more distant node and the EGS. Thus flows in $S_2$ require longer resource reservation times to achieve the same number of attempts. As mentioned above, in the discrete simulations of homogeneous traffic, the time step is set to the duration of a single attempt. In the implementation of our non-homogeneous traffic model, we set the time step in discrete simulations to the duration of an attempt in $S_1$, which is the shortest time scale in the system. We then fix the mean attempt time in $S_2$ to double the mean attempt time of flows in $S_1$. This time is longer than the total of the RTT required for flows in $S_2$ plus the buffer included to capture measurement times for flows in $S_1$. Hence we have modelled adding some additional buffer time to each attempt for flows in $S_2$. As a consequence, in discrete simulations the duration of each attempt for flows in $S_2$ is exactly two-time steps. 
We assume all flows submit requests at the same rate, thus isolating the source of non-homogeneity to the different attempt times in sets $S_1$ and $S_2$, which are largely due to the different link-lengths.

\begin{table}
    \centering
    \begin{tabular}{c|c}
         \textbf{Parameter type} & \textbf{Value} \\ \hline
         Number of phases & 4 \\
         Mean duration of phase 1 & $  0.41\overline{6} \cdot T_{\text{attempt}}$ \\
         Mean duration of phase 2 & $ 0.521 \cdot T_{\text{attempt}}$ \\
         Mean duration of phase 3 & $ 0.651 \cdot T_{\text{attempt}} $ \\
         Mean duration of phase 4 & $ 0.814 \cdot T_{\text{attempt}}$ \\
         Transition probabilities & ( 0.6, 0.48,0.384, 0.307)\\
    \end{tabular}
    \caption{Cox distribution parameters for individual entanglement generation attempts in the homogeneous traffic scenario.}
    \label{tab:CoxAttemptParams}
    \vspace{-5mm}
\end{table}

% \subsection{Non-Homogeneous Traffic}
% For every flow $f$, if one or both of the users involved is connected by a longer link ($f \in S2$), then the attempt time for the flow is lengthened to accommodate the greater RTT between the more distant node and the EGS. Thus flows in S2 require longer resource reservation times to achieve the same number of attempts. We suppose all flows submit requests at the same rate, thus isolating the source of non-homogeneity to the different link-lengths. 

\section{Extended Data: Homogenous Traffic in the Single Entanglement Generation with Strict Resource Reservation Service Mode}
\label{appendix:extendedData}

\begin{figure}[H]
    \centering
    \begin{subfigure}[t]{0.32\textwidth}
        \centering
        \includegraphics[height=1.8in]{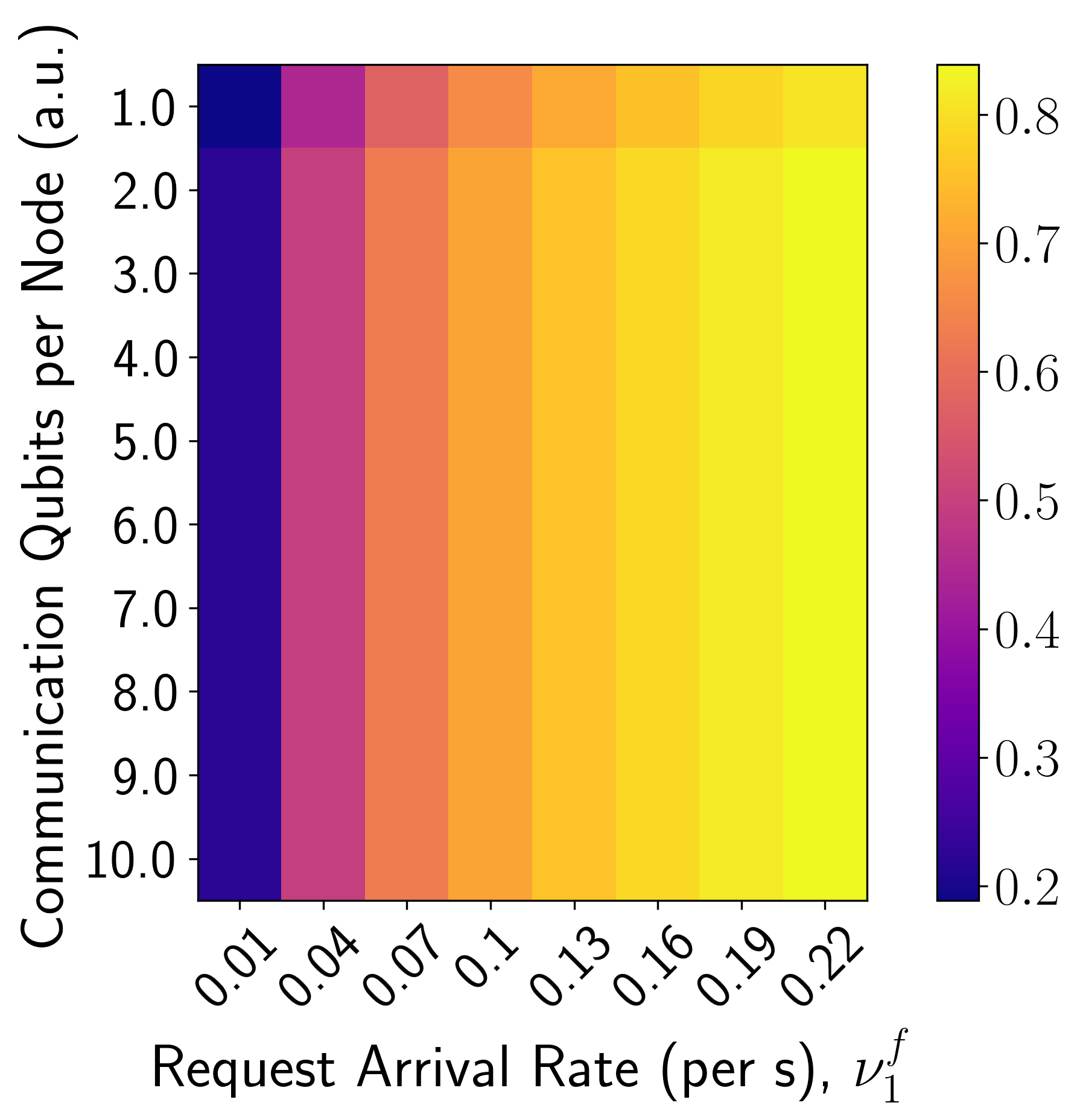}
        \caption{one EGS Resource ($C=1$)}
        \end{subfigure}
        ~
    \begin{subfigure}[t]{0.32\textwidth}
        \centering
        \includegraphics[height=1.8in]{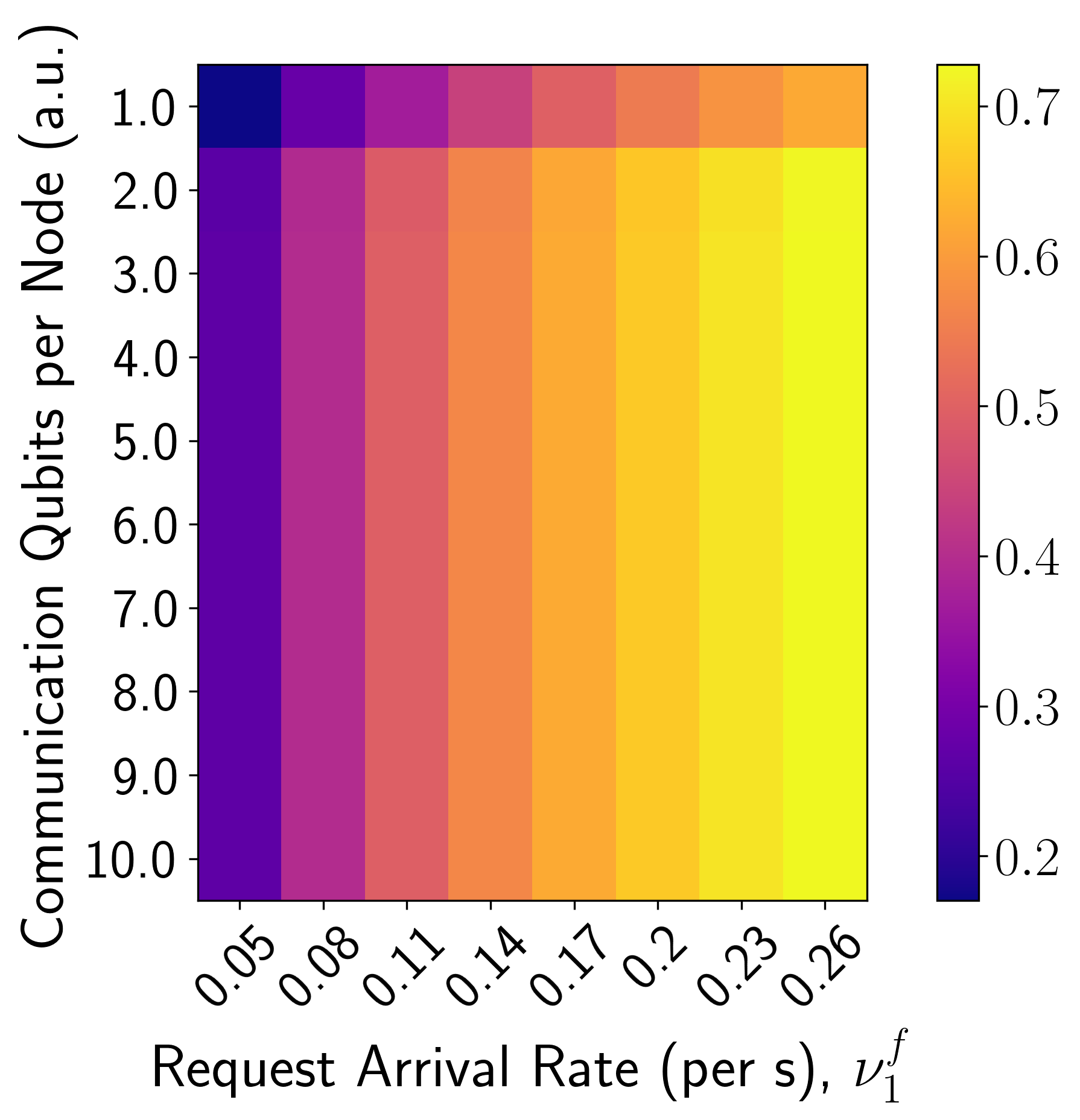}
        \caption{two EGS Resources ($C=2$)}
        \end{subfigure}
        ~
       \begin{subfigure}[t]{0.32\textwidth}
        \centering
        \includegraphics[height=1.8in]{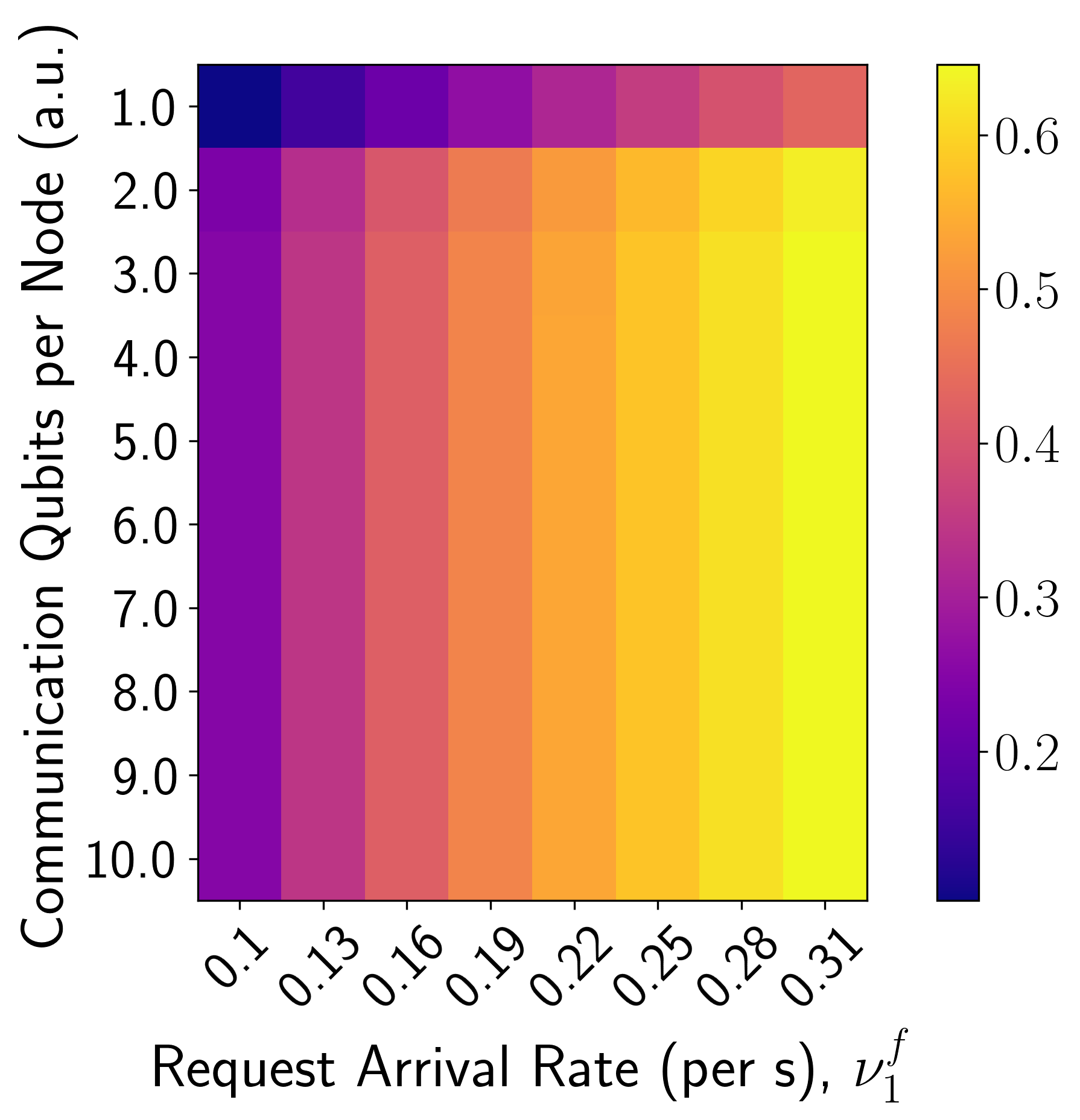}
        \caption{three EGS Resources ($C=3$)}
        \end{subfigure}
        ~
    \caption{Heatmaps of the average blocking probability per flow when the number of communication qubits per node and the request arrival rates are varied. Data results from numeric evaluation of (\ref{eq:prob_C}) for an EGS with 20 nodes and serving $\binom{20}{2}=190$ flows, one for each possible node pairing. Session traffic is homogeneous. }
    \label{fig:extraCommQData}
\end{figure}
\end{document}